%% file: main.tex
\documentclass{article}

\usepackage{iclr2027_conference,times}
\usepackage{amsmath,amssymb,amsthm}
\usepackage{hyperref}
\usepackage{url}
\usepackage{booktabs}
\usepackage{multirow}
\usepackage{algorithm}
\usepackage{algpseudocode}
\usepackage{xspace}
\usepackage{tablefootnote}
\usepackage{threeparttable}
\usepackage{ashwin}
\usepackage{inconsolata}
\usepackage[most]{tcolorbox}
\usepackage{xcolor}

\newtcolorbox[
    auto counter
]{algorithmbox}[2][]{
    enhanced,
    float,
    floatplacement=tbp,
    colback=white,
    colframe=gray!48!black,
    colbacktitle=gray!20,
    coltitle=black,
    fonttitle=\bfseries,
    title={Algorithm~\thetcbcounter: #2},
    boxrule=0.5pt,
    titlerule=0.3pt,
    arc=1.5mm,
    left=10pt,
    right=10pt,
    top=8pt,
    bottom=8pt,
    toptitle=5pt,
    bottomtitle=5pt,
    #1
}

\newcommand{\build}{\mathsf{Build}}
\newcommand{\query}{\mathsf{Query}}
\newcommand{\norm}[1]{\lVert #1 \rVert}

\newcommand{\costo}{C_{\textup{oracle}}}
\newcommand{\costr}{C_{\textup{retrieval}}}

\newcommand{\lift}{\textsc{SoftmaxLift}\xspace}
\newcommand{\adaptivebuckets}{\textsc{AdaptiveBuckets}\xspace}
\newcommand{\topk}{\textsc{TopK}\xspace}
\newcommand{\topksampling}{\textsc{TopK+Sampling}\xspace}

\DeclareMathOperator{\attn}{Attn}

\newtheorem{theorem}{Theorem}
\newtheorem{lemma}[theorem]{Lemma}
\newtheorem{claim}[theorem]{Claim}
\theoremstyle{definition}
\newtheorem{definition}{Definition}
\theoremstyle{remark}

\title{Attention via Black-Box Vector Search}
\author{%
\normalfont
\begin{tabular}[t]{@{}l@{\hspace{8em}}l@{}}
\textbf{Stepan Zharkov}\footnotemark[1] &
\textbf{Krish Singal}\footnotemark[1] \\
Columbia University &
University of Pennsylvania \\
\texttt{styopa@cs.columbia.edu} &
\texttt{ksingal@engineering.upenn.edu} \\[1.5em]
\textbf{Ashwin Padaki}\footnotemark[1] &
\textbf{Alexandr Andoni} \\
University of Pennsylvania &
Columbia University \\
\texttt{apadaki@engineering.upenn.edu} &
\texttt{andoni@cs.columbia.edu}
\end{tabular}%
}

\iclrfinalcopy

\begin{document}

\maketitle

\begingroup
\renewcommand{\thefootnote}{\fnsymbol{footnote}} 
\footnotetext[1]{Equal contribution, random order.} 
\endgroup                                       

\begin{abstract}
    Sparse attention mechanisms estimate attention over $n$ tokens using a small subset of keys. Many existing approaches use maximum inner product search (MIPS) to retrieve the heaviest keys, which motivates the following question: given \emph{black-box} access to a MIPS oracle, how many keys must be retrieved to output an $\eps$-accurate attention estimate?

    We answer this question by unifying prior approaches through the framework of \emph{priority sampling}. With a single MIPS index, we show that $\Theta(\sqrt{n}/\eps)$ retrieved keys are both sufficient and necessary. With $\Theta(\log n)$ indices, we give an algorithm that retrieves only $O(\log n+1/\eps^2)$ keys and prove that this is near-optimal. More generally, we design algorithms that establish a smooth tradeoff between the number of MIPS indices and number of retrieved keys. We then show that if we allow augmentation of keys and queries, we can bypass the above lower bounds: there exists a simple priority-sampling estimator using a single MIPS index and $O(1/\eps^2)$ retrieved keys. When integrated into LLM inference, our algorithms outperform top-$k$ and sampling approaches used in prior work and yield attention approximation that scales favorably to long contexts.
\end{abstract}

\input{sections/intro.tex}

\input{sections/algorithm.tex}

\input{sections/experiments.tex}
\input{sections/limitations.tex}

\bibliography{main}
\bibliographystyle{iclr2027_conference}

\appendix

\input{sections/algorithms-appendix}

\input{sections/lower-bound.tex}

\input{sections/experiments-appendix}

\end{document}

%% file: sections/intro.tex
\section{Introduction}\label{sec:intro}

Transformer-based large language models (LLMs) increasingly rely on long context windows to process documents, codebases, and extended reasoning traces. A major bottleneck to scaling these context windows is the \emph{attention mechanism}~\citep{vaswani2017attention}. Given an existing context of $n$ tokens, computing attention for each new token requires comparing it against all preceding tokens. Formally, given key vectors $x_1,\ldots,x_n\in\R^d$, value vectors $v_1,\ldots,v_n\in\R^{D}$, and a query vector $q\in\R^d$, the attention mechanism computes \[\attn(q) := \sum_{i=1}^n \frac{\kappa(q,x_i)}{\sum_{j=1}^n \kappa(q,x_j)} \cdot v_i,\] where $\kappa(q,x) = \exp(\ip{q}{x})$ is the softmax kernel.\footnote{More generally, $\kappa(q,x) = \exp(\ip{q}{x}/\tau)$ for a temperature parameter $\tau>0$.} 

Recent work circumvents the cost of computing the attention mechanism using the approach of \emph{sparse attention}~\citep{kitaev2020reformer,zaheer2020bigbird}. Rather than attending to every token, sparse attention mechanisms identify a small subset of tokens and use them to approximate the true attention vector. The attention weight of token $i$ grows exponentially with $\ip{q}{x_i}$, suggesting a natural strategy: retrieve the keys that have the largest inner product with the query. This is precisely a \emph{maximum inner product search} (MIPS) problem, a well-studied primitive in vector search~\citep{shrivastava2014alsh,neyshabur2015symmetric}. Indeed, a growing body of work builds sparse attention mechanisms with MIPS as the core subroutine, including RetrievalAttention~\citep{retrievalattention}, HashAttention~\citep{hashattention}, and vAttention~\citep{vattention}.

A common feature of MIPS-based sparse attention mechanisms is that they generally treat the underlying MIPS algorithm as a \emph{black box}: the estimator specifies which keys to retrieve and how to utilize them, but is agnostic to how those keys are found.\footnote{Notable exceptions include MagicPig~\citep{magicpig} and SAAP~\citep{saap}, whose attention estimates are coupled to the search algorithms they use.} Using MIPS as a black box is attractive because vector search is already a highly optimized problem. Decades of algorithmic and engineering progress have produced extremely performant approximate nearest neighbor and MIPS algorithms, as documented in public benchmarks such as ANN-Benchmarks~\citep{aumuller2020annbenchmarks} and VIBE~\citep{jaasaari2026vibe}. Any sparse attention method that is agnostic to the underlying search implementation can immediately benefit from state-of-the-art algorithms and future improvements.

In this paper, we take a theoretical perspective on MIPS-based sparse attention. We assume access to a perfect MIPS oracle and study the \emph{retrieval strategy}: the algorithm deciding which keys to retrieve and how to use the retrieved keys to estimate attention. We begin with the following question: \begin{center}
    \emph{Given a perfect MIPS oracle, what is the optimal retrieval strategy\\for estimating attention within provably low error?}
\end{center}

We make substantial progress on this question by systematically characterizing the tradeoff between the number of MIPS indices used by the algorithm and the number of keys it retrieves for an $\eps$-accurate estimate. Some points on this tradeoff were already known for the related problem of \emph{kernel density estimation} (KDE), and these results extend to attention: \cite{mussmann2017fast} gives a one-index algorithm retrieving $O(\sqrt{n}/\eps)$ keys, while \cite{mussmann2026sum} gives an algorithm using $O(\log n)$ indices and retrieving only $O(\log n / \eps^2)$ keys.

We unify these results under a common algorithmic framework of \emph{priority sampling}~\citep{duffield2007priority,daliri2024simple}, and we generalize them significantly. For a budget $T$ on the number of indices kept, we give adaptive and non-adaptive algorithms achieving precise oracle--retrieval tradeoffs. In the two oracle budget settings studied in prior work, we obtain a tight answer: for $T=1$, we prove that $O(\sqrt{n}/\eps)$ retrieved keys is optimal up to constant factors; for $T=\Theta(\log n)$ indices, we give an adaptive algorithm retrieving $O(\log n + 1/\eps^2)$ keys and prove that it is optimal up to a $\log\log n$ factor. Motivated by the strong lower bounds, we show that a slight relaxation of our model allows priority sampling to be implemented much more efficiently: lifting keys and queries into one extra dimension, our algorithm \lift retrieves only $O(1/\eps^2)$ keys, independent of the context length $n$. 

We also explore the practicality of \lift. We implement it in Llama-3.2-1B-Instruct using fast approximate MIPS oracles and evaluate it both on per-head attention error and end-to-end benchmarks. \lift substantially improves the accuracy--latency tradeoff over the top-$k$ and sampling approaches common in prior work. Finally, we provide preliminary evidence that \lift can enable efficient sparse attention at context lengths of up to 1M tokens.

\subsection{Attention in the MIPS Model}
\label{subsec:black-box}

We consider a model for computing attention given access to a black-box MIPS oracle. 

\begin{definition}[MIPS oracle]\label{def:oracle}
    For $x_1,\ldots,x_n\in\R^d$, a \emph{MIPS oracle} supports the following operations: \begin{itemize}
        \item $\build(S)$ creates a MIPS index on a subset $S\subseteq [n]$.
        \item $\query(q,k;S)$ returns the $\min\{k,|S|\}$ pairs $(i,\ip{q}{x_i})$ with largest inner products in $S$.\footnote{We assume throughout the paper that ties are broken lexicographically.}
        \end{itemize}
\end{definition}

\begin{definition}[MIPS model]
A \emph{MIPS attention algorithm} $\calA$ operates in the following access model: \begin{itemize}
    \item \textbf{Preprocessing.} $\calA$ receives the key and value vectors, selects subsets $S_1,\ldots S_T\subseteq[n]$, and runs $\build(S_t)$ for $t\in[T]$ to instantiate MIPS indices.
    \item \textbf{Query stage.} $\calA$ receives a query $q\in\R^d$, but its access is limited to the following operations: it can run $\query(q,k;S_t)$ for $t\in[T]$, or evaluate $\ip{q}{x_i}$ for an arbitrary $i\in[n]$. By the end of its computation, algorithm $\calA$ outputs an estimate $\smash{\widehat{\attn}(q)}$ of the attention.
\end{itemize}

$\calA$ is \emph{non-adaptive} if, during the query stage, all MIPS queries and inner product evaluations are made simultaneously; otherwise, $\calA$ is \emph{adaptive}. The complexity of $\calA$ is given by two metrics: \begin{itemize}
    \item \textbf{Oracle cost:} $\costo(\calA) = T$, the number of MIPS indices constructed.
    \item \textbf{Retrieval cost:} $\costr(\calA) = \left(\sum k\right) + r$, where the summation is over all MIPS queries $\query(q,k;\cdot)$ made by $\calA$, and $r$ is the number of direct inner product evaluations.
\end{itemize}
\end{definition}

\begin{definition}[Accuracy guarantee]\label{def:accuracy}
    An attention algorithm $\calA$ is \emph{$\eps$-accurate} if, for every dataset of keys and values and every fixed query $q$, \begin{equation}\label{eq:main-attention-error}
    \Pr\left[
        \norm{\widehat{\attn}(q)-\attn(q)}_2
        \leq \eps\cdot\max_{i\in[n]}\,\norm{v_i}_2
    \right]\geq\frac{2}{3},
\end{equation} where the probability is over the randomness of $\calA$. We note that measuring error with respect to the maximum value vector norm is common in prior work~\citep{kochetkova2026streaming,schroder2026wildcat,liberty2026coresets} and is natural when, e.g., the value vectors have unit norm.
\end{definition}

\subsection{Theoretical Results}\label{subsec:theoretical-results}

We provide adaptive and nonadaptive algorithms achieving the following tradeoffs between retrieval cost and oracle cost.

\begin{table*}[t]
\centering
\small
\setlength{\tabcolsep}{7pt}
\renewcommand{\arraystretch}{1.35}

\begin{tabular}{@{}ccccc@{}}
\toprule
\multirow{2}{*}{\textbf{Oracle budget}}
& \textbf{Adaptive}
& \textbf{Non-adaptive}\footnotemark
& \textbf{Lower bound}
& \textbf{Softmax Lift} \\
& {\footnotesize (Theorem~\ref{thm:bucket-adaptive})}
& {\footnotesize (Theorem~\ref{thm:bucket-nonadaptive})}
& {\footnotesize (Theorems~\ref{thm:lower-bound-const},~\ref{thm:lower-bound-any})}
& {\footnotesize (Theorem~\ref{thm:softmax-lift})} \\
\midrule

$T=1$
& $O(\sqrt{n}/\eps)$
& $O(\sqrt{n}/\eps)$
& $\Omega(\sqrt{n}/\eps)$
& $O(\eps^{-2})$ \\

$T=2$
& $O(n^{1/3}\eps^{-4/3})$
& $O(n^{1/3}\eps^{-4/3})$
& $\Omega\!\left(
    \frac{\log n}{\log\log n}+\eps^{-2}
  \right)$
& --- \\

$T=\lceil\log_2 n\rceil$
& $O(\log n+\eps^{-2})$
& $O(\log n/\eps^2)$
& $\Omega\!\left(
    \frac{\log n}{\log\log n}+\eps^{-2}
  \right)$
& --- \\

\bottomrule
\end{tabular}

\caption{Number of retrieved keys for $\eps$-accurate attention for oracle budgets $T$.
All lower bounds are implicitly capped at $n$.
The final column performs an augmentation to the keys and query and is not subject to the lower bounds.}
\label{tab:regimes}
\end{table*}

\footnotetext{We attribute the non-adaptive results for $T=1$ and $T=\Theta(\log n)$ to~\cite{mussmann2017fast,mussmann2026sum}.}

\begin{theorem}[Adaptive algorithm]\label{thm:bucket-adaptive}
    For any oracle budget $T\geq1$ and parameter $\eps\in(0,1)$, there is an $\eps$-accurate adaptive MIPS attention algorithm $\calA$ with $\costo(\calA)\le T$ and 
    \[\costr(\calA) \le
        O\!\left(T+n^{1/(T+1)}\eps^{-2+2/(T+1)}\right).
    \]
\end{theorem}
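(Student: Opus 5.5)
The plan is to build $T$ MIPS indices on nested random subsets of geometrically decreasing density and to use adaptivity to stitch the levels together, so that each level only estimates the weight mass left below the previous level's threshold. Concretely, draw i.i.d.\ $u_i\sim\mathrm{Unif}[0,1]$ for each key, fix densities $1=p_1>p_2>\cdots>p_T$ with $p_{t+1}=\rho\,p_t$, let $S_t=\{i:u_i\le p_t\}$, and run $\build(S_t)$ for $t\in[T]$. Writing $w_i=\kappa(q,x_i)$, we estimate the numerator $N=\sum_i w_iv_i$ and the denominator $Z=\sum_i w_i$ with the same sample. It then suffices to get $\|\widehat N-N\|_2\lesssim \eps Z\max_i\|v_i\|$ and $|\widehat Z-Z|\lesssim \eps Z$, since the ratio then has error $O(\eps)\max_i\|v_i\|_2$.

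\textbf{Query stage.} At level $1$, run $\query(q,k;S_1)$ and include the retrieved keys exactly. Let $\tau_1$ be the $k$-th largest weight; every key with $w_i>\tau_1$ has now been counted. At level $t\ge2$ (this is where adaptivity is used), we already know the keys of $S_t$ with weight above $\tau_{t-1}$: they are the previously retrieved keys with $u_i\le p_t$, say $m_t$ of them. Run $\query(q,m_t+k;S_t)$ and let $\tau_t$ be the weight of its last returned key. Each newly retrieved key, which has weight in $(\tau_t,\tau_{t-1}]$, is included with weight $w_i/p_t$; this is a Horvitz--Thompson / priority-sampling estimate of the mass in that band. Finally, estimate the residual mass below $\tau_T$ by $r$ uniform samples from $[n]$ via direct inner products, keeping only sampled keys with $w_i\le\tau_T$ and scaling by $n/r$.

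\textbf{Cost and variance.} Since $\mathbb E[m_t]\approx\rho k$ and there are $T$ levels, the retrieval cost is $O(T+Tk+r)$; the additive $T$ covers levels whose queries return $O(1)$ keys. For the variance, the key observation is that $\tau_{t-1}$ is the $k$-th heaviest weight inside a $p_{t-1}$-subsample, so roughly $k/p_{t-1}$ keys in the full set have weight at least $\tau_{t-1}$, giving $\tau_{t-1}\lesssim p_{t-1}Z/k$. Hence the variance of the level-$t$ band estimate is at most $\sum w_i^2/p_t\le \tau_{t-1}Z/p_t\lesssim Z^2/(k\rho)$. The final uniform stage has variance at most $n\tau_T Z/r\lesssim Z^2np_T/(kr)$. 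Setting $r=k$, $k\rho\approx\eps^{-2}$ and $np_T/k^2=n\rho^{T-1}/k^2\approx\eps^2$ gives $\rho^{T+1}\approx 1/(n\eps^2)$, and therefore $k\approx n^{1/(T+1)}\eps^{-2+2/(T+1)}$, exactly the claimed bound. Chebyshev then yields success probability $2/3$, and the vector case follows coordinatewise since $\|w_iv_i\|\le w_i\max_j\|v_j\|$.

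\textbf{Main obstacles.} The step I expect to be hardest is making the heuristic $\tau_{t-1}\lesssim p_{t-1}Z/k$ rigorous. Both $\tau_t$ and the band boundaries depend on the same $u_i$'s that define the later subsets, so the estimator is not an unconditional Horvitz--Thompson sum. I would first try the priority-sampling conditioning trick of \cite{duffield2007priority}: conditioned on the $u_j$ of the heavier keys, the inclusion of key $i$ in $S_t$ given $u_i\le p_{t-1}$ is still Bernoulli$(\rho)$, which gives unbiasedness and the variance bound band by band. The $k$-th order statistic would be controlled by a Chernoff bound on the number of heavy keys landing in $S_{t-1}$, and its bad event folded into the $1/3$ failure probability. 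A second, milder issue is that the $T$ band variances add. To avoid an extra $T$ factor in $k$, I would allocate the error budget non-uniformly, letting $\rho$ (or $k$) vary geometrically across levels so that the per-level variances decay geometrically. If this fails, I would accept a constant-factor loss by noting that the constraint only binds up to $T=O(\log n)$.
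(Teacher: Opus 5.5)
\textbf{The gap is in the retrieval-cost accounting, and it comes from the design, not from parameter choices.} Each of your $T$ levels issues $\query(q,m_t+k;S_t)$, so the total retrieval is at least $Tk+r$. With $r=k$ and $k\approx n^{1/(T+1)}\eps^{-2+2/(T+1)}$ this is $\Theta\bigl(T\,n^{1/(T+1)}\eps^{-2+2/(T+1)}\bigr)$. That is the \emph{non-adaptive} bound of Theorem~\ref{thm:bucket-nonadaptive}, not $O\bigl(T+n^{1/(T+1)}\eps^{-2+2/(T+1)}\bigr)$; your ``exactly the claimed bound'' drops this factor $T$. The loss matters in the headline regime: for $T=\lceil\log_2 n\rceil$ you get $\Theta(\log n/\eps^2)$ rather than $O(\log n+\eps^{-2})$. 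As written, the argument gives the theorem only when $T=O(1)$ (or $\eps=\Omega(1)$).

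Re-tuning cannot fix this. Your own variance requirement $\rho k\gtrsim\eps^{-2}$ with $\rho<1$ forces $k\ge\eps^{-2}$ at every level. Because the $S_t$ are nested, level $t$ must also re-retrieve the $m_t\approx\rho k\approx\eps^{-2}$ keys of $S_t$ it has already seen. So no level is cheap, the additive $T$ has nothing to cover, and the only adaptive step (computing $m_t$) does not change the cost structure.

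Two smaller remarks. The issue you call milder is not an obstacle: the bands partition the mass, so the band variances sum to at most $Z\max_t\tau_{t-1}/p_t$, with no factor $T$. Also, the weight $w_i/p_t$ is justified not by a Bernoulli$(\rho)$ step but because keys below $\tau_{t-1}$ never had their $S_{t-1}$-membership revealed, so given the history they enter $S_t$ with probability $p_t$. For exact unbiasedness, count the $k$-th new key in the band and take the residual strictly below $\tau_t$.

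The missing idea is a way for a level that holds nothing relevant to cost $O(1)$. The paper achieves this with three ingredients:
\begin{enumerate}
\item It builds the $T$ indices on \emph{disjoint} buckets $P_j=\{i:\beta^{-(j+1)}<u_i\le\beta^{-j}\}$ with $\beta=(n/(K+1))^{1/(T+1)}$, and brute-forces the last bucket ($\mathbb{E}|P_T|=(K+1)\beta$). Nothing is ever re-retrieved.
\item It uses a single priority-sampling estimator with $K=\Theta(\eps^{-2})$. The only task is then to find the top $K+1$ priorities $\pi_i=w_i/u_i$ exactly, so there is no per-level variance budget.
\item On $P_j$ one has $\pi_i\in[\beta^jw_i,\beta^{j+1}w_i)$, so the frontier $F_j=\beta^{j+1}w(\text{last retrieved})$ bounds every unseen priority in bucket $j$. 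The algorithm doubles only the bucket with the largest frontier, until every frontier falls below the current $(K+1)$-st priority.
\end{enumerate}
At the last doubling of bucket $j$ one has $F_j\ge\tau^\star$, so every key retrieved from that bucket so far has priority at least $\tau^\star/\beta$. The cost is therefore at most $T+4|\mathcal C|+|P_T|$, where $\mathcal C=\{i:\pi_i\ge\tau^\star/\beta\}$. Moreover $\mathbb{E}|\mathcal C|\le\beta(K+1)$, because conditioned on the other priorities, $\Pr[i\in\mathcal C]=\min\{1,\beta w_i/\tau_{-i}\}\le\beta\Pr[i\text{ is a top-}(K+1)\text{ priority}]$. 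Charging all retrieval to a set whose expected size $O(\beta K)$ does not grow with $T$ is what your proposal lacks.
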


\begin{theorem}[Non-adaptive algorithm]\label{thm:bucket-nonadaptive}
    For any oracle budget $T\geq1$ and parameter $\eps\in(0,1)$, there is an $\eps$-accurate non-adaptive MIPS attention algorithm $\calB$ with $\costo(\calB)\le T$ and 
    \[\costr(\calB) \le
        O\!\left(Tn^{1/(T+1)}\eps^{-2+2/(T+1)}\right).
    \]
\end{theorem}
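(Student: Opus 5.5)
We build a non-adaptive estimator from priority sampling over nested random subsets. First draw independent uniforms $u_i\in(0,1)$ for each key. Use these to define nested random subsets $[n]=S_1\supseteq S_2\supseteq\cdots\supseteq S_T$, with $S_t=\{i: u_i\le p_t\}$ and $p_t=n^{-(t-1)/(T+1)}\eps^{\,c_t}$. The exponents $c_t$ will be chosen to balance costs; the geometric choice makes $|S_t|\approx n\,p_t$. Build a MIPS index on each $S_t$.

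At query time, issue $\query(q,k;S_t)$ for every $t\in[T]$ with the same $k=\Theta(n^{1/(T+1)}\eps^{-2+2/(T+1)})$. Also sample $k$ further keys from a final level $S_{T+1}$ (rate $p_{T+1}$) and evaluate their inner products directly. Since nothing depends on earlier answers, the algorithm is non-adaptive, and the total retrieval cost is $(T+1)k=O(Tn^{1/(T+1)}\eps^{-2+2/(T+1)})$.

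The estimator is a stratified importance-sampling estimate. Let $\tau_t$ be the $k$-th largest inner product returned from $S_t$. For $t=1$ this is exact top-$k$: every key with $\ip{q}{x_i}>\tau_1$ is contributed exactly. A key retrieved at level $t$ whose weight lies in the band $(\tau_t,\tau_{t-1}]$ is included with weight $1/p_t$. Keys below the last threshold are handled by the direct samples, also reweighted by $1/p_{T+1}$. The same construction gives both an estimate of the numerator $N=\sum_i \kappa(q,x_i)v_i$ and of the denominator $Z=\sum_i\kappa(q,x_i)$.

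The key observation is that each band is complete within $S_t$. Every $i\in S_t$ with $\ip{q}{x_i}>\tau_t$ is returned by the top-$k$ query on $S_t$. Hence, conditioned on the thresholds, the band estimates are unbiased sums of independent Bernoulli$(p_t)$ selections. The thresholds are themselves random, but this is exactly the priority-sampling situation. I would handle it either with the standard priority-sampling argument of \cite{duffield2007priority,daliri2024simple}, which shows unbiasedness and independence-like variance bounds when one conditions on the $(k{+}1)$-st priority, or by replacing $\tau_t$ with deterministic quantile proxies that hold with constant probability.

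Variance analysis proceeds one level at a time. Let $w_i=\kappa(q,x_i)/Z$. Among the keys of $[n]$ below threshold $\tau_{t-1}$, roughly $k/p_{t-1}$ of them had larger weight, since about $k$ of these lie in $S_{t-1}$. Each remaining key therefore has $w_i\lesssim p_{t-1}/k$. The variance of the level-$t$ contribution is then at most
\[
\sum_{\text{band }t} w_i^2\|v_i\|^2/p_t \le \max\|v_i\|^2\,\frac{p_{t-1}}{k\,p_t}.
\]
To reach total variance $\eps^2\max\|v_i\|^2$ we need $p_{t-1}/p_t\lesssim k\eps^2/T$. The levels must also satisfy $k\ge n\,p_T\cdot(\text{rate})$ so that the final direct samples suffice, where the final uniform-sampling variance is $\lesssim n^{-1}\cdot(1/(np_{T+1}))$-style. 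Solving this telescoping product gives $(k\eps^2)^{T}\cdot k\approx n\eps^{2}$, i.e.\ $k\approx n^{1/(T+1)}\eps^{-2+2/(T+1)}$, with the $T$ factors absorbed into constants. The same bound holds for the scalar $Z$ estimate, with $v_i\equiv1$.

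We then convert to the attention ratio. With constant probability, via Chebyshev, both $\widehat N/Z$ and $\widehat Z/Z$ are within $O(\eps)$ of their targets, and $\|N/Z\|\le\max\|v_i\|$. A ratio bound then gives $\|\widehat N/\widehat Z-N/Z\|\le O(\eps)\max\|v_i\|$, and rescaling $\eps$ finishes the proof.

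The main obstacle is the data-dependent thresholds. Specifically, one must justify that the weights below $\tau_{t-1}$ are at most $O(p_{t-1}/k)$ of $Z$ while keeping the estimator unbiased given the random thresholds. I would first try the priority-sampling conditioning trick, treating $u_i$ as priorities restricted to the level sets. If that becomes messy, I would fall back on a union bound over the $T$ levels, which only costs constants.
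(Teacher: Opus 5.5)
Your construction (nested Bernoulli subsamples $S_1=[n]\supseteq S_2\supseteq\cdots\supseteq S_{T+1}$, top-$k$ from each indexed level, and a stratified Horvitz--Thompson estimator over data-dependent bands) is a genuinely different estimator from the paper's, and it can be made to work. As written, however, the proof has a real gap exactly where you place ``the main obstacle,'' and the part you do sketch loses a factor of $T$.

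First, the claim that, conditioned on the thresholds, the band estimates are sums of independent Bernoulli$(p_t)$ selections is false. Fixing $\tau_t$, the $k$-th largest weight in $S_t$, forces exactly $k-1$ members of $S_t$ to lie above it. The priority-sampling lemma also does not apply off the shelf, because this is not the priority-sampling estimator. What does work is a filtration argument. Order the keys by decreasing weight. Key $i$ lands in band $t$ iff $i\in S_t$ and $t=t_i^\star$, the first level at which fewer than $k-1$ heavier keys lie in $S_t$. Since $t_i^\star$ depends only on the $u_j$ of heavier keys, $\hat w_i-w_i$ is a martingale difference sequence, which gives unbiasedness and zero covariances.

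Second, and more seriously, your balancing step requires $p_{t-1}/p_t\lesssim k\eps^2/T$ and then absorbs ``the $T$ factors into constants.'' But $T$ is a free parameter, and the interesting regime is $T=\Theta(\log n)$. Solving $(k\eps^2/T)^T k\approx n$ gives $k\approx T^{T/(T+1)}n^{1/(T+1)}\eps^{-2+2/(T+1)}$. Then $(T+1)k$ exceeds the claimed bound by a factor of about $T$, for example $\log^2 n/\eps^2$ instead of $\log n/\eps^2$. Your fallback, a union bound over the $T$ levels, has the same defect: it needs per-level failure probability $O(1/T)$, which is not free when $k$ is small (e.g.\ constant $\eps$).

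To rescue your estimator, you need total variance $O\big(Z^2\max_t \tfrac{p_{t-1}}{kp_t}\big)$ rather than a sum over levels. The ingredients are as follows:
\begin{itemize}
\item A key in band $t$ has at least $k-1$ heavier keys in $S_{t-1}$, and the bands partition the mass. Hence the predictable quadratic variation $\sum_i w_i^2/p_{t_i^\star}$ is at most $Z\max_t \frac{1}{(k-1)p_t}\sum_{j\in S_{t-1}}w_j$.
\item $t\mapsto p_t^{-1}\sum_{j\in S_t}w_j$ is a nonnegative martingale with mean $Z$. Doob's maximal inequality therefore bounds it by $CZ$ at all levels simultaneously with probability $1-1/C$.
\item Stopping the martingale when its quadratic variation exceeds $CZ^2\max_t\frac{p_{t-1}}{(k-1)p_t}$ then yields the Chebyshev bound.
\end{itemize}
None of these ingredients appears in your proposal.

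The paper avoids this analysis entirely by decoupling estimation from retrieval. Its estimator is plain priority sampling with $K=\Theta(\eps^{-2})$, whose variance bound $W^2/(K-1)$ (Lemma~\ref{lem:priority-sampling}) has no dependence on $T$. The $u$-buckets $(\beta^{-(j+1)},\beta^{-j}]$ with $\beta=(n/(K+1))^{1/(T+1)}$ serve only to find the exact top-$(K+1)$ priorities. Querying $r=\Theta(\beta K)$ keys from each of the $T$ indexed buckets and evaluating the last bucket directly succeeds whenever $|\{i:\pi_i\ge\tau^\star/\beta\}|\le r$. A leave-one-out computation shows this set has expected size at most $\beta(K+1)$, so Markov's inequality finishes. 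The factor $T$ in the cost comes only from making $T$ such queries.

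A minor point: your relation $(k\eps^2)^T k\approx n\eps^2$ should read $\approx n$, although your final value of $k$ is correct.
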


We show that the above tradeoffs are tight for $T=1$, even up to constant factors.

\begin{theorem}[Lower bound for $T=1$]\label{thm:lower-bound-const}
For sufficiently small $\eps>0$, any $\eps$-accurate MIPS attention algorithm with $\costo(\calA) \le 1$ has retrieval cost
\[
    \Omega\left(\min\{n,\, \sqrt{n}/\eps\}\right).
\]
\end{theorem}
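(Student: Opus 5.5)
The plan is a Yao-style argument. I will build a distribution over inputs, tailored to the algorithm's total retrieval budget $R$, on which any algorithm with $R = o(\min\{n,\sqrt{n}/\eps\})$ fails with probability more than $1/3$. I use scalar values ($D=1$, $v_i\in\{\pm1\}$) and orthonormal keys $x_i=e_i\in\R^n$. Preprocessing then reveals only the values, and the query $q$ alone decides which keys are heavy.

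\textbf{Construction.} Fix $m=\min\{2R,\,n/2\}$ and draw a uniformly random set $H\subseteq[n]$ with $|H|=m$. Set $q=\sum_{i\in H}a\,e_i+\sum_i \delta_i e_i$, where:
\begin{itemize}
\item $a$ is chosen so that the heavy keys carry at least a $1-\eps$ fraction of the softmax mass, i.e.\ $e^{a}\gg n/m$;
\item the $\delta_i$ are tiny independent perturbations, so there are no ties and lexicographic tie-breaking leaks nothing.
\end{itemize}
The values are fixed before the query: a uniformly random set of $n/2\pm$ (negligible) indices gets $+1$ and the rest $-1$. I then plant a hidden bit $b\in\{\pm1\}$ by conditioning $H$ to contain $m(1/2+b\cdot 3\eps)$ indices of value $+1$. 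Because the light keys carry at most an $\eps$ fraction of the mass, the true attention is within $O(\eps)$ of $6b\eps$. Hence any $\eps$-accurate algorithm, with a suitable constant in the planted bias, recovers $b$ with probability at least $2/3$. The choice of $S$ in preprocessing depends only on the values, which are independent of $H$ up to the planted bias. So $S$ is essentially a fixed set from the point of view of the hidden structure.

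\textbf{Information accounting.} I next bound what the query stage can learn about $b$.
\begin{itemize}
\item A $\query(q,k;S)$ call returns the top $k$ elements of $S$. These are heavy elements of $H\cap S$ (up to $k$ of them), plus the negative information that the remaining elements of $S$ rank lower. Since all heavy keys have essentially equal weight and are placed uniformly, the negative information says nothing about the signs of the unreturned heavy elements.
\item A direct evaluation of $\ip{q}{x_i}$ reveals whether $i\in H$. Conditioned on the past, it hits a new heavy element with probability at most $m/(n-R)=O(m/n)$.
\end{itemize}
Overall, MIPS calls reveal at most $R\le m/2$ heavy elements, so at least $m/2$ heavy elements stay unrevealed. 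Direct probes reveal in expectation $O(Rm/n)=O(R^2/n)$ heavy elements. The algorithm's view is then (up to negligible terms) a sample of heavy-element signs. The sample has two parts: at most $R$ elements adaptively chosen by the oracle but independent of their signs, and $t=O(R^2/n)$ further elements from random probing. The $R$ oracle-returned elements give at most $O(\eps^2 R)$ bits of bias information, as does any sign-independent subset of a $(1/2\pm3\eps)$-biased population, by a hypergeometric estimate. Distinguishing $b=+1$ from $b=-1$ with constant advantage therefore needs $R+t=\Omega(1/\eps^2)$ sampled signs, by the standard biased-coin lower bound via Hellinger or KL divergence. In the regime $R\ge 1/\eps^2$, this naive count no longer suffices. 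There I make the bias live only on the unretrieved part: I re-randomize $H$ as a union of a decoy block, guaranteed to dominate every top-$R$ answer and carrying balanced signs, and a biased block of size $m/2$.

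\textbf{Conclusion.} The only way to learn about the biased block is random probing, which yields $t=O(R\cdot m/n)=O(R^2/n)$ samples. Requiring $t\ge c/\eps^2$ forces $R=\Omega(\sqrt{n}/\eps)$. When $m$ would exceed $n/2$, the instance simply forces $R=\Omega(n)$. Together these give $\Omega(\min\{n,\sqrt{n}/\eps\})$. The main obstacle is making the decoy-block step rigorous against the adversarially chosen $S$ and adaptive $k$. The oracle may return elements from the biased block whenever $S$ misses decoys. I would handle this by making the decoy weight strictly larger than the biased-block weight, so that any top-$R$ query returns biased-block elements only after exhausting $\ge R$ decoys in $S$. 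For sets $S$ containing fewer than $R$ decoys, I would argue separately that $S$ is small relative to $n$, so the biased block lies mostly outside $S$ and is reachable only by probing.
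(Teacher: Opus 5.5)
Your encoding ($x_i=e_i$, scalar $\pm1$ values) and your decoy-plus-probing count for large $S$ are the right ingredients. However, the proof has a genuine gap exactly at the step you flag as the main obstacle, and your proposed patch does not close it.

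The inference ``$S$ has fewer than $R$ decoys, hence $S$ is small, hence the biased block lies mostly outside $S$ and is reachable only by probing'' fails. It controls the \emph{fraction} of the biased block inside $S$, whereas a MIPS call harvests the \emph{number} of biased keys inside $S$. Suppose the decoy block has $M$ keys and the biased block has $\Theta(M)$ keys, and take any fixed $S$ of size about $Rn/(2M)$. It contains about $R/2$ decoys and $\Theta(R)$ biased keys. The single call $\query(q,R;S)$ returns the decoys followed by $\Theta(R)$ biased keys. These signs decode the bit as soon as $R\gtrsim 1/\eps^2$, which covers the whole range $1/\eps^2\lesssim R\ll\sqrt n/\eps$ where the theorem has content. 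With your $m=2R$ (a decoy block of size about $R$), this happens even for $|S|=n/2$, so $S$ need not be small at all.

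Enlarging the decoy block does not rescue the design. To block this attack, $M$ must exceed the biased-block size by a factor of order $R\eps^2$. The biased block then carries only an $O(1/(R\eps^2))$ share of the mass, and an order-$\eps$ bias on it moves the attention by $O(1/(R\eps))\ll\eps$. A one-bit design has no mechanism that punishes small $S$.

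The paper supplies the missing idea: a second hidden bit $B_0$, stored as the sign of a single heavy coordinate $j^\star$ of constant mass at a uniformly random position. Coordinates are paired as $(j,\pm1)$ with values $v_{j,b}=b$, so the values carry no information. An $\eps$-accurate estimate must reveal both $B_0$ and the signal bit $B_1$, and the argument splits on $|S|$.
\begin{itemize}
\item If $|S|<n/20$, then with constant probability $j^\star$ is neither in $S$ nor among the at most $K$ probes, so $B_0$ is a coin flip.
\item If $|S|\ge n/20$, then $S$ meets $\Omega(m)$ decoy pairs, where $m=\Theta(\min\{n,\sqrt n/\eps\})\gg K$. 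Decoy weight $2/m$ exceeds signal weight $1/m$, so every top-$k$ call with $k\le K$ sees only the heavy key and decoys. Signal pairs are then reachable only by probes, of which $O(Km/n)=o(1/\eps^2)$ land there.
\end{itemize}

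Separately, your first single-block construction leaks through the choice of $S$. $S$ is built after seeing the values, and the planted bias correlates the values with $H$. Take $S=\{i:v_i=+1\}$ and make one call $\query(q,\lceil m/2\rceil;S)$, which costs at most $R$. Whether a light inner product appears in the output reveals whether $|H\cap S|\ge m/2$, and this determines $b$ exactly. So the oracle's output is not a sign-independent subset, and the ``negative information'' is far from uninformative.
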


We also give a lower bound for arbitrary $T$ showing that the $O(\log n + 1/\eps^2)$ bound achieved by our adaptive algorithm is nearly tight for $T=\Theta(\log n)$.

\begin{theorem}[Lower bound for any $T$]\label{thm:lower-bound-any}
For sufficiently small $\eps>0$, any $\eps$-accurate MIPS attention algorithm must have \[\costr(\calA)\ge \Omega\left(\min\left\{n,\,\frac{\log n}{\log\log n} + \frac{1}{\eps^2}\right\}\right).\]
\end{theorem}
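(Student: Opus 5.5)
The bound has two independent parts. I would prove the $\Omega(\min\{n,1/\eps^2\})$ term and the $\Omega(\min\{n,\log n/\log\log n\})$ term separately, each by Yao's minimax principle: fix a hard input distribution and show that any deterministic algorithm with smaller retrieval cost fails with probability above $1/3$. Since $\max\{a,b\}\ge (a+b)/2$, this gives the sum. Oracle cost is unrestricted here, so the preprocessing may build arbitrarily many indices on arbitrary subsets $S_t$. The hard instances must therefore defeat every possible family of indices at once.

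\textbf{The $1/\eps^2$ term.} I would use scalar values $v_i\in\{-1,+1\}$ embedded in $\R^D$, and keys whose inner products with $q$ are all equal. Then $\attn(q)$ is the mean of the values, and the MIPS oracle returns keys in lexicographic tie-breaking order. A query therefore reveals nothing beyond the values at the indices it returns, plus whatever information the choice of sets carries. The sets are chosen before the values' signs are drawn, so they carry nothing useful. I would draw the signs i.i.d.\ with bias $\pm c\eps$, using the hidden sign of the bias as the hypothesis. The mean then differs by $\Theta(\eps)$ between the two hypotheses, and the standard KL or Hellinger bound for distinguishing coin biases forces $\Omega(1/\eps^2)$ observed values whenever $n\gtrsim 1/\eps^2$. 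Otherwise the bound gives $\Omega(n)$.

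One subtlety is that the preprocessing sees the values. To handle this, the keys must not depend on the values, and the values must be tied to the query's randomness. So instead of hiding the values, I would keep the values fixed and let the query randomly choose a half-sized subset of keys with slightly larger inner product. Concretely, the query $q$ is a random direction making $\Theta(\log(1/\eps))$-scale changes to the logits of a random subset. The retrieved keys are then still determined by index order, not by weight.

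\textbf{The $\log n/\log\log n$ term.} I would use a multi-scale construction: a random permutation assigns tokens to $L\approx \log n/\log\log n$ levels, with $m_\ell\approx (\log n)^{\ell}$ tokens at level $\ell$. The query picks a random level $\ell^*$ to carry almost all attention mass, using logit offsets tuned so that level $\ell^*$'s total weight dominates. The top-$k$ ordering reveals only the heaviest individual keys, which may sit at a higher level with tiny total mass. The goal is a counting argument showing the algorithm must spend roughly one retrieval per level to locate the mass-carrying level before its estimate can be accurate. The point of the $\log\log n$ factor is that each constant-cost retrieval can rule out only $O(1)$ levels, because a level with $m_\ell$ keys can hide $\log n$ times the mass of the previous level.

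\textbf{Main obstacle.} The hardest step will be making this second argument robust against arbitrary adaptively chosen index sets $S_t$: an index on a cleverly chosen subset can return a key from a deep level directly. I would first try randomizing the level membership independently of preprocessing via the query, so that $q$ determines which keys are heavy through per-key random signs in extra coordinates. This makes every fixed subset look like a random sample, and each retrieval then hits level $\ell^*$ with probability about $m_{\ell^*}/|S|$. The remaining work would be a potential-function argument bounding the information gained per retrieved key by $O(\log\log n)$ bits about $\ell^*$, against the $\log L$ bits needed.
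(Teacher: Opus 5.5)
Your handling of the $1/\varepsilon^2$ term ends up close to the paper's: fixed values $\pm1$, keys such as $x_i=e_i$ so the query can set arbitrary logits, and all hidden randomness placed in the query. Your instinct to put the level assignment into the query is also what the paper does. The genuine gap is in the $\log n/\log\log n$ term, and it has two parts.

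\textbf{No secret to decode.} Your instance never specifies a secret that the estimate must reveal. The values are fixed and known, and the level assignment is random. Under these conditions, making level $\ell^\star$ ``carry almost all the mass'' does not force anyone to find $\ell^\star$. Suppose, say, weights are uniform within a level. Then the attention is essentially the average value over a random set of $m_{\ell^\star}$ keys. That average concentrates around the global average, which is computable with zero retrievals, once $m_{\ell^\star}\gg 1/\varepsilon^2$. You need a hidden bit attached to the special level, while every other level contributes a known amount. The paper does this as follows. It uses pairs $(j,\pm1)$ with $v_{j,b}=b$. Ordinary levels put equal weight on both elements of each pair, so they contribute exactly $0$ to the signed sum; their total mass can then be a constant, namely $1/2$. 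The special level weights the $B$-side $3{:}1$, giving $\mu=B/4$.

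\textbf{The counting step is too weak.} Locating $\ell^\star$ requires only $\log_2 L=\Theta(\log\log n)$ bits. So ``$O(\log\log n)$ bits per retrieved key against $\log L$ bits needed'' gives only $K=\Omega(1)$. Even $O(1)$ bits per key would give just $\Omega(\log\log n)$. No entropy count can produce $\Omega(L)$. What is needed is a search argument in which each retrieved key can ``open'' only one level.

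\textbf{The structural fact behind the search argument.} Weights must strictly decrease with level \emph{regardless of which level is special}, and ordinary-level weights must not depend on $\ell^\star$. Then every $\mathsf{Top}(k;S)$ answer lists the elements of $S$ in increasing order of level. Until a key from level $\ell^\star$ is retrieved, the algorithm's view is the same whichever untouched level is special. Hence $\ell^\star$ stays uniform over untouched levels. Since $K$ retrieved keys touch at most $K$ levels, $B$ is recovered with probability at most $1/2+K/(2L)$.

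This is also why arbitrary index sets are harmless, so your ``main obstacle'' is misdiagnosed. $\mathsf{Top}$ answers are not random samples, and the heuristic ``hits $\ell^\star$ with probability $m_{\ell^\star}/|S|$'' does not describe them. With unlimited indices one can in fact target any \emph{prescribed} level at constant cost, using a set of about $n/m_\ell$ keys. Yet each retrieved key still touches a single level.

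With special mass $1/2$ and ordinary masses $1/(2(L-1))$, monotonicity across the special/ordinary boundary forces consecutive level sizes to grow by a factor $\Omega(L)$; the paper takes $s_\ell=(4L)^{\ell-1}$. The constraint $L^{\Theta(L)}\le n$ is exactly what caps $L$ at $\Theta(\log n/\log\log n)$. Note the danger in your ``tuned offsets'': if they made ordinary weights depend on a level's position relative to $\ell^\star$, one retrieval would reveal which side $\ell^\star$ lies on. Binary search would then find it with $O(\log L)$ retrievals.

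\textbf{A smaller fix in the $1/\varepsilon^2$ part.} Once the query boosts some logits, retrieved keys are no longer ``determined by index order, not by weight.'' $\mathsf{Top}(k;S)$ returns boosted keys first, and thereby reveals the status of every skipped key of $S$. So you must first simulate each $\mathsf{Top}(k;S)$ by $O(k)$ expected direct evaluations, scanning $S$ lexicographically as the paper does. Only then can you invoke the coin-bias lower bound.
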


Finally, we show that if we slightly extend the model to allow augmenting the keys (and the query) before constructing MIPS indices, we can achieve a much stronger upper bound.

\begin{theorem}[Softmax lift]\label{thm:softmax-lift}
    There is an $\eps$-accurate attention algorithm $\calZ$ that, after lifting the keys and queries into $\R^{d+1}$, has $\costo(\calZ) = 1$ and $\costr(\calZ) = O(1/\eps^2)$.
\end{theorem}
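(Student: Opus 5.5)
We will realize priority sampling for the weights $w_i=\kappa(q,x_i)=e^{\ip{q}{x_i}}$ as a single MIPS query. Recall priority sampling: draw $u_i\sim\mathrm{Unif}(0,1]$ independently, set priorities $p_i=w_i/u_i$, keep the $k$ items with largest priority, and let $\tau$ be the $(k+1)$-st largest priority. Then $\widehat w_i=\max(w_i,\tau)$ for kept $i$ (and $0$ otherwise) is unbiased. Its variance is bounded in the known way: the estimator of $\sum_i w_i v_i$ has variance at most $\sum_i w_i\norm{v_i}^2 \cdot W/(k-1)$ up to constants, where $W=\sum_i w_i$ (\citealp{duffield2007priority,daliri2024simple}).

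The key observation is that $\log p_i=\ip{q}{x_i}-\log u_i$, so ranking by priority is the same as ranking by a lifted inner product. In preprocessing we draw $u_i$ once and lift each key to $\tilde x_i=(x_i,\,-\log u_i)\in\R^{d+1}$. At query time we lift $q$ to $\tilde q=(q,1)$. Then $\ip{\tilde q}{\tilde x_i}=\log p_i$, and one index on all $n$ lifted keys with $\query(\tilde q,k+1;[n])$ returns exactly the top-$(k+1)$ priority items together with their $\log p_i$. From these we recover $w_i=e^{\ip{q}{x_i}}$, either by subtracting the stored $-\log u_i$ or by direct evaluation at cost $O(k)$. This gives $\costo=1$ and $\costr=O(k)$. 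Because the randomness is fixed at build time and the query is fixed in advance, the guarantee of Definition~\ref{def:accuracy} is over the $u_i$, which is legitimate.

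Next we convert to attention. Write $N=\sum_i w_iv_i$ and $W=\sum_i w_i$, and define the estimates $\widehat N=\sum_{i\text{ kept}}\widehat w_iv_i$ and $\widehat W=\sum_{i\text{ kept}}\widehat w_i$. We output $\widehat N/\widehat W$ (or $0$ if $\widehat W=0$). Normalize so that $V=\max_i\norm{v_i}=1$. Priority sampling gives $\mathbb E\norm{\widehat N-N}^2\le W^2/(k-1)$ and $\mathbb E(\widehat W-W)^2\le W^2/(k-1)$. Applying Chebyshev/Markov with $k=C/\eps^2$, with probability at least $2/3$ both errors are at most $(\eps/4)W$. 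On that event,
\[
\Bigl\|\tfrac{\widehat N}{\widehat W}-\tfrac NW\Bigr\|\le \frac{\norm{\widehat N-N}}{\widehat W}+\frac{\norm N\,|\widehat W-W|}{W\widehat W}\le \frac{\eps/4+\eps/4}{1-\eps/4}\le\eps,
\]
using $\norm N\le W$.

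The main obstacle is the variance bound for vector-valued priority sampling, because the $\widehat w_i$ are not independent. We will use the standard fact that priority-sampling estimates are pairwise uncorrelated, i.e.\ $\mathbb E[\widehat w_i\widehat w_j]=w_iw_j$ for $i\ne j$. This reduces the error to $\sum_i\norm{v_i}^2\,\mathrm{Var}(\widehat w_i)$. We then use $\mathbb E[\mathrm{Var}(\widehat w_i)]\le w_i\,\mathbb E[\tau]$ together with $\mathbb E[\tau]\lesssim W/(k-1)$, which holds since at most $W/\tau$ items can have priority exceeding $\tau$. Combining these gives the bound $W^2/(k-1)$. If that argument becomes delicate, we will instead take the variance bound as given from \citet{daliri2024simple} and apply it coordinatewise, summing over the $D$ coordinates weighted by $v_i$.
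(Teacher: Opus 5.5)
Your proposal is correct and follows the paper's proof essentially step for step: the lift $\tilde x_i=(x_i,-\log u_i)$, $\tilde q=(q,1)$ makes a single top-$(k+1)$ MIPS query return exactly the top priorities, and your error analysis is the content of Claim~\ref{claim:num-denom} and Lemma~\ref{lem:attn-err}, namely pairwise-uncorrelated adjusted weights, $\sum_i\mathrm{Var}(\widehat w_i)\le W^2/(k-1)$ from \citet{daliri2024simple}, Chebyshev with a union bound on numerator and denominator, and then the ratio bound using $\norm{N}\le W$. One caution about your optional self-contained route: the claim that at most $W/\tau$ items have priority exceeding $\tau$ is false as stated, since priorities $w_i/u_i$ are unbounded; the correct argument uses a priority sample of size $k-1$, in which every kept item has adjusted weight at least the $k$-th largest priority $\tau^{(k)}$, so unbiasedness gives $(k-1)\,\mathbb E[\tau^{(k)}]\le W$, and this controls the threshold $\tau_{-i}\le\tau^{(k)}$ that appears in $\mathrm{Var}(\widehat w_i)\le w_i\,\mathbb E[\tau_{-i}]$ --- or you can simply cite the bound, as the paper does.
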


\subsection{Empirical Results}

We test the empirical performance of our algorithms by integrating them into  Llama-3.2-1B-Instruct long context inference. We design a custom harness that allows for combining arbitrary retrieval strategies with arbitrary MIPS oracles. For each combination of retrieval strategy and oracle, the harness measures task accuracy and latency on a variety of 128k context tasks in the RULER suite~\citep{hsieh2024ruler}. We benchmark the following retrieval strategies:
\begin{itemize}
    \item \topk . Request the top $k$ keys from the oracle and use the partial attention sum on these keys as the estimator (e.g. ~\cite{gupta2021topk,retrievalattention}).
    \item \topksampling. Request $k'$ keys from the oracle, and uniformly sample $k-k'$ more keys. Compute the partial attention sum with the sampled keys weighted to create an unbiased estimator (e.g.~\cite{vattention}).
    \item \lift, the priority sampling based strategy introduced in Algorithm~\ref{alg:softmax-lift-alg}. 
    \item \adaptivebuckets, the priority sampling based strategy introduced in Algorithm \ref{alg:adaptive-alg}.
\end{itemize}

\begin{figure}[t]
    \centering
    \resizebox{\linewidth}{!}{%
        \input{tikz-diagrams/flat-k-vs-quality}%
    }
    \caption{Quality as a function of the retrieval budget $k$ with an
    exact FAISS-Flat MIPS oracle on 60 matched 128K-context RULER examples from 10 tasks. Top-$k$+sampling divides its budget
    evenly between retrieved and uniformly sampled keys. }
    \label{fig:flat-k-vs-quality}
\end{figure}

\lift outperforms the \topk and \topksampling retrieval strategies commonly used in prior work (Figure \ref{fig:flat-k-vs-quality}).  We also evaluate our approach on a small set of 1M context examples with the Llama-3.1-Nemotron-8B-UltraLong-1M-Instruct model. These experiments provide preliminary evidence that \lift scales well to even larger contexts. See Section \ref{sec:experiments} and Appendix \ref{sec:exp-apx} for the complete set of experimental results. 

\subsection{Related Work}\label{subsec:related}

\paragraph{Sparse attention.} Early sparse attention methods fixed the sparsity pattern in advance. Examples include strided or block-local attention~\citep{child2019sparsetransformer}, local windows with a small set of global tokens~\citep{beltagy2020longformer}, and mixtures of local, global, and random blocks~\citep{zaheer2020bigbird}. Reformer~\citep{kitaev2020reformer} takes a more adaptive approach, using locality sensitive hashing (LSH) to bucket queries and keys. A separate line of work evicts tokens from the
KV cache based on accumulated attention mass on statistics collected over a calibration window~\citep{zhang2023h2o,liu2023scissorhands,ge2024fastgen,li2024snapkv,xiao2024streamingllm}. The drawback is that eviction is irreversible; a token cannot be recovered if a later query
needs it. This limitation motivated query-aware methods that retain the full cache and select a subset per
query. Different methods use page-level summaries~\citep{tang2024quest,xiao2024infllm}, channel
sparsification~\citep{yang2024doublesparsity}, or low-rank key
projections~\citep{singhania2024loki}. More recent MIPS-based mechanisms (discussed in Section~\ref{sec:intro}) build graph, clustering, quantization, or hashing indices over the
keys~\citep{retrievalattention,squeezedattention,zhang2024pqcache,hashattention,saap}. A
recurring obstacle for these approaches is that attention queries and keys follow different
distributions. The same out-of-distribution phenomenon is
studied in the vector search literature~\citep{jaiswal2022ooddiskann,chen2024roargraph}, and it is one reason we treat the oracle as a black box. Finally, sparsity can also be built directly into the model at training time~\citep{yuan2025nsa}, a direction orthogonal to ours.

\paragraph{Beyond top-$k$.} The standard top-$k$ approximation approach retrieves the $k$ largest attention weights and renormalizes over them. Because it drops all remaining softmax mass, the reuslting estimator is biased.~\citet{magicpig} showed empirically that even \emph{exact} top-$k$ is beaten at equal budget by sampling keys in proportion to their attention weight. The work of~\citep{zhu2025tactic} builds on this by keeping the estimator but adapting the budget, retrieving enough keys to cover a $p$-fraction of the mass. Further work splits the distribution into a ``heavy'' head which is computed exactly and a tail which is sampled~\citep{vattention}. A similar decomposition appears in~\cite{han2024hyperattention,karppa2022deann}.

\paragraph{Priority sampling and sampling by search.}
Priority sampling~\citep{duffield2007priority}, known in statistics as sequential Poisson sampling~\citep{ohlsson1998sequential}, draws a fixed-size sample supporting unbiased subset-sum estimation through Horvitz--Thompson
weights~\citep{horvitz1952generalization,alon2005subsetsums}. Its variance is optimal in a strong
sense~\citep{szegedy2006priority,daliri2024simple}, and it has recently been applied to inner
product sketching~\citep{daliri2024inner}. The observation behind our \lift algorithm is that a priority
sample is determined by a top-$k$ operation on perturbed weights, which is exactly what a MIPS oracle computes. Related ideas appear in the Gumbel-max literature, where \citet{mussmann2016mips,mussmann2017fast} sample from a log-linear model by perturbing scores with Gumbel noise and solving a MIPS instance, and the Gumbel-top-$k$ trick extends this to sampling without replacement~\citep{maddison2014astar,kool2019gumbel}. \citet{spring2017lsh}
and~\citet{luo2018ace} instead treat the LSH table itself as the sampler; this is also the perspective adopted by \citet{magicpig}.

%% file: tikz-diagrams/flat-k-vs-quality.tex
%
%

\begin{tikzpicture}[
  meanpoint/.style={draw=white, line width=0.45pt, inner sep=0pt},
  softmaxmark/.style={circle, minimum size=3.2pt},
  topkmark/.style={rectangle, minimum width=3.2pt, minimum height=3.2pt},
  samplingmark/.style={rectangle,rotate=45,
                       minimum width=3.3pt,minimum height=3.3pt},
]
  \definecolor{flatSoftmax}{RGB}{0,114,178}
  \definecolor{flatTopK}{RGB}{213,94,0}
  \definecolor{flatSampling}{RGB}{0,158,115}


  \begin{scope}[x=0.78cm,y=5.175cm]
    \foreach \x/\lab in {0/32,1/64,2/128,3/256,4/512,5/{1,024},6/{2,048}} {
      \draw[black!9,line width=0.35pt] (\x,0) -- (\x,0.8);
      \draw[black!65,line width=0.45pt] (\x,0) -- (\x,-0.0127);
      \node[font=\scriptsize,anchor=north] at (\x,-0.0272) {\lab};
    }
    \foreach \y/\lab in {0/{0},0.1/{0.1},0.2/{0.2},0.3/{0.3},0.4/{0.4},0.5/{0.5},0.6/{0.6},0.7/{0.7},0.8/{0.8}} {
      \draw[black!9,line width=0.35pt] (0,\y) -- (6,\y);
      \draw[black!65,line width=0.45pt] (0,\y) -- (-0.06,\y);
      \node[font=\scriptsize,anchor=east] at (-0.12,\y) {\lab};
    }
    \draw[black!70,line width=0.65pt] (0,0) -- (6,0);
    \draw[black!70,line width=0.65pt] (0,0) -- (0,0.8);
    \node[font=\small,anchor=north] at (3,-0.1127)
      {total retrieval budget $k$};
    \node[font=\small,rotate=90,anchor=south] at (-1.02,0.4)
      {relative $L_2$ error};

    \draw[flatSoftmax,line width=1.05pt]
      (0,0.178147390262) --
      (1,0.119774879600) --
      (2,0.079404578358) --
      (3,0.051554327703) --
      (4,0.032604990375) --
      (5,0.019948450975) --
      (6,0.011729237968);
    \draw[flatTopK,dashed,line width=1.05pt]
      (0,0.497897384574) --
      (1,0.441718878745) --
      (2,0.383098802437) --
      (3,0.322891401709) --
      (4,0.262953135413) --
      (5,0.205561424916) --
      (6,0.152759084801);
    \draw[flatSampling,dash pattern=on 4pt off 1.5pt on 0.8pt off 1.5pt,
          line width=1.05pt]
      (0,0.758941279200) --
      (1,0.648061764888) --
      (2,0.511100396475) --
      (3,0.348298655751) --
      (4,0.198309454049) --
      (5,0.101478826986) --
      (6,0.048828609317);

    \foreach \x/\err in {0/0.178147390262,1/0.119774879600,
      2/0.079404578358,3/0.051554327703,4/0.032604990375,
      5/0.019948450975,6/0.011729237968} {
      \node[meanpoint,softmaxmark,minimum size=4.8pt,fill=flatSoftmax]
        at (\x,\err) {};
    }
    \foreach \x/\err in {0/0.497897384574,1/0.441718878745,
      2/0.383098802437,3/0.322891401709,
      4/0.262953135413,5/0.205561424916,6/0.152759084801} {
      \node[meanpoint,topkmark,minimum width=4.7pt,minimum height=4.7pt,
            fill=flatTopK] at (\x,\err) {};
    }
    \foreach \x/\err in {0/0.758941279200,1/0.648061764888,
      2/0.511100396475,3/0.348298655751,
      4/0.198309454049,5/0.101478826986,6/0.048828609317} {
      \node[meanpoint,samplingmark,minimum size=5.4pt,fill=flatSampling]
        at (\x,\err) {};
    }
  \end{scope}

  \begin{scope}[xshift=7.15cm,x=0.78cm,y=0.0394cm]
    \foreach \x/\lab in {0/32,1/64,2/128,3/256,4/512,5/{1,024},6/{2,048}} {
      \draw[black!9,line width=0.35pt] (\x,0) -- (\x,105);
      \draw[black!65,line width=0.45pt] (\x,0) -- (\x,-1.6);
      \node[font=\scriptsize,anchor=north] at (\x,-3.4) {\lab};
    }
    \foreach \y in {0,20,40,60,80,100} {
      \draw[black!9,line width=0.35pt] (0,\y) -- (6,\y);
      \draw[black!65,line width=0.45pt] (0,\y) -- (-0.06,\y);
      \node[font=\scriptsize,anchor=east] at (-0.12,\y) {\y};
    }
    \draw[black!70,line width=0.65pt] (0,0) -- (6,0);
    \draw[black!70,line width=0.65pt] (0,0) -- (0,105);
    \node[font=\small,anchor=north] at (3,-14.0)
      {total retrieval budget $k$};
    \node[font=\small,rotate=90,anchor=south] at (-1.02,52.5)
      {RULER score};

    \draw[black!55,densely dotted,line width=0.85pt] (0,61.5278) -- (6,61.5278);
    \node[black!60,font=\scriptsize,anchor=south west] at (0.12,63.5)
      {exact attention};
    \draw[flatSoftmax,line width=1.05pt]
      (0,57.5833) -- (1,59.0000) -- (2,60.2222) -- (3,60.9722) --
      (4,61.1944) -- (5,61.8611) -- (6,61.8611);
    \draw[flatTopK,dashed,line width=1.05pt]
      (0,33.1111) -- (1,35.5833) -- (2,40.5556) -- (3,43.3611) --
      (4,41.7778) -- (5,46.3611) -- (6,53.8611);
    \draw[flatSampling,dash pattern=on 4pt off 1.5pt on 0.8pt off 1.5pt,
          line width=1.05pt]
      (0,15.6667) -- (1,21.7222) -- (2,37.1667) -- (3,45.7500) --
      (4,56.6111) -- (5,57.7778) -- (6,61.5278);

    \foreach \x/\score in {0/57.5833,1/59.0000,2/60.2222,3/60.9722,
      4/61.1944,5/61.8611,6/61.8611} {
      \node[meanpoint,softmaxmark,minimum size=4.8pt,fill=flatSoftmax]
        at (\x,\score) {};
    }
    \foreach \x/\score in {0/33.1111,1/35.5833,2/40.5556,3/43.3611,
      4/41.7778,5/46.3611,6/53.8611} {
      \node[meanpoint,topkmark,minimum width=4.7pt,minimum height=4.7pt,
            fill=flatTopK] at (\x,\score) {};
    }
    \foreach \x/\score in {0/15.6667,1/21.7222,2/37.1667,3/45.7500,
      4/56.6111,5/57.7778,6/61.5278} {
      \node[meanpoint,samplingmark,minimum size=5.4pt,fill=flatSampling]
        at (\x,\score) {};
    }
  \end{scope}

  \node[font=\small\bfseries,anchor=south] at (2.34,4.25)
    {(a) Attention error};
  \node[font=\small\bfseries,anchor=south] at (9.49,4.25)
    {(b) Benchmark score};

  \filldraw[fill=black!1,draw=black!15,line width=0.4pt,rounded corners=1.5pt]
    (1.00,-1.80) rectangle (10.70,-1.23);

  \draw[flatSoftmax,line width=1.05pt] (1.38,-1.515) -- (1.94,-1.515);
  \node[meanpoint,softmaxmark,minimum size=4.8pt,fill=flatSoftmax]
    at (1.66,-1.515) {};
  \node[font=\scriptsize,anchor=west] at (2.10,-1.515) {\lift};

  \draw[flatTopK,dashed,line width=1.05pt] (4.42,-1.515) -- (4.98,-1.515);
  \node[meanpoint,topkmark,minimum width=4.7pt,minimum height=4.7pt,
        fill=flatTopK] at (4.70,-1.515) {};
  \node[font=\scriptsize,anchor=west] at (5.14,-1.515) {\topk};

  \draw[flatSampling,dash pattern=on 4pt off 1.5pt on 0.8pt off 1.5pt,
        line width=1.05pt] (6.92,-1.515) -- (7.48,-1.515);
  \node[meanpoint,samplingmark,minimum size=5.4pt,fill=flatSampling]
    at (7.20,-1.515) {};
  \node[font=\scriptsize,anchor=west] at (7.64,-1.515)
    {\topksampling};
\end{tikzpicture}

%% file: sections/algorithm.tex
\section{Algorithms and Technical Overview}
\label{sec:tech-overview}

We give an overview of the main algorithmic ideas in our paper. Our algorithms all share the same primitive of \emph{priority sampling} and differ only in how they retrieve high-priority keys using MIPS.

\paragraph{A meta-algorithm via priority sampling (Appendix~\ref{subsec:meta-alg}).} Fix a query $q$ and let $w_i = \kappa(q,x_i)$ denote the kernel \emph{weight} of key $i$. To approximate attention, we must estimate both $\sum_i w_iv_i$ and $\sum w_i$; simply retrieving the largest weights may not give an accurate estimate. In \emph{priority sampling}, the trick is to assign each key a random scale $\bu_i\sim\Unif(0,1]$ and define the priority $\bpi_i := w_i / \bu_i$. Given the top $K+1$ priorities, the $(K+1)$-st priority acts as a threshold to reweight the first $K$ items (see Figure~\ref{fig:priority-sampling}). The resulting estimator has variance $O(1/K)$ relative to the total weight, and setting $K=\Theta(\eps^{-2})$ yields an $\eps$-accurate estimate. Algorithm~\ref{alg:meta-alg} formalizes this reduction, leaving one algorithmic question: a MIPS oracle retrieves the highest weights $w_i$, but how can we use it to retrieve the highest $K+1$ priorities $\bpi_i$? Our algorithms differ only in how they answer this question. 

\begin{algorithmbox}[label = {alg:meta-alg}]{Meta-Algorithm for Attention via Priority Sampling} 

\textbf{Input.} Keys $x_1,\ldots,x_n\in\R^d$, values $v_1,\ldots,v_n\in\R^D$, and a query $q\in\R^d$ \\
\textbf{Output.} An estimate of $\attn(q)$ \\

\medskip
\textbf{Preprocessing.}
Independently sample $\bu_i\sim\Unif(0,1]$ for each $i\in[n]$

\medskip
\textbf{Query.} Given $q\in\R^d$, \begin{enumerate}
    \item Define $w_i = \kappa(q,x_i)$ and $\bpi_i = w_i / \bu_i$.
    \item \textbf{Priority Retrieval.} Retrieve the $K+1$ indices of highest priority, \[\bpi_{i_1} \ge \bpi_{i_2} \ge \dots \ge \bpi_{i_{K+1}}.\] Let $\bP = \{i_1,\ldots,i_K\}$ and $\btau = \bpi_{i_{K+1}}$.
    \item For each $i\in\bP$, define the adjusted weight $\wh{\bw}_i = \max\{w_i, \btau\}$.
    
    Compute \[\wh{\bV} = \sum_{i\in\bP} v_i\wh{\bw}_i, \qquad \wh{\bW} = \sum_{i\in \bP} \wh{\bw}_i. \]
    \item Return $\wh{\attn}(q) = \wh{\bV} / \wh{\bW}$.
\end{enumerate}
\end{algorithmbox}

\begin{figure}[t]
  \centering
  \resizebox{\linewidth}{!}{\input{tikz-diagrams/priority-sampling}}
  \caption{Illustration of priority sampling for sum estimation.}
  \label{fig:priority-sampling}
\end{figure}

\paragraph{Priority bucketing (Theorems~\ref{thm:bucket-adaptive} and~\ref{thm:bucket-nonadaptive}; Appendix~\ref{subsec:adaptive} and~\ref{subsec:nonadaptive}).} We partition the keys into $T+1$ buckets according to $\bu_i$: for $j=0,\ldots,{T-1}$, bucket $j$ contains keys with $\bu_i\in (\beta^{-(j+1)}, \beta^{-j}]$ for $\beta = (n/(K+1))^{1/(T+1)}$; the final bucket contains all remaining keys. We build MIPS indices on the first $T$ buckets, and since priorities differ by at most a factor $\beta$ within a bucket, a high-weight key is a good candidate for having high priority. The \emph{adaptive} algorithm retrieves the highest-weight key from each bucket and uses the last retrieved key to upper bound the priority of every unseen key in that bucket. It then repeatedly doubles the number of retrieved keys for the bucket with the largest upper bound, stopping once no unseen key can exceed the current $(K+1)$-st largest priority. We prove that at most $O(T + \beta K)$ keys are retrieved in expectation, giving the tradeoff stated in the theorem when $K=\Theta(1/\eps^2)$. The \emph{nonadaptive} variant instead retrieves $\Theta(\beta K)$ keys from each bucket, leading to a factor $T$ increase in retrieval cost.

\paragraph{Lower bounds (Theorems~\ref{thm:lower-bound-const} and~\ref{thm:lower-bound-any}; Appendix~\ref{sec:lower-bound}).} When $T=1$, the above algorithms use a single MIPS index and retrieve $O(\sqrt{n}/\eps)$ keys; our first lower bound shows that this tradeoff is tight. We reduce from a signed sum estimation problem, where $v_i\in\{-1,1\}$, weights $p_i\ge 0$ sum to $1$, and the goal is to estimate $\sum_i v_ip_i$ within additive error $\eps$, given top-$k$ oracle access to the weights. The hard instance hides two random bits: one in whether a single heavy coordinate $p_{\bi^\star}$ is given positive or negative sign, and another in a slight sign bias among $\Theta(\sqrt{n}/\eps)$ ``signal'' coordinates. We also introduce $\Theta(\sqrt{n}/\eps)$ decoy coordinates $p_j$ whose weights are larger than the signal weights. If the indexed set $S$ is small, the algorithm is unlikely to identify the heavy coordinate and first bit; if $S$ is large, then a top-$o(\sqrt{n}/\eps)$ query only returns decoy coordinates, masking the second bit. Thus, any single-index algorithm must retrieve $\Omega(\sqrt{n}/\eps)$ keys. Our second lower bound applies to arbitrary $T$: we show that it is always necessary to retrieve $\smash{\Omt(\log n + 1/\eps^2)}$ keys. This nearly matches our adaptive upper bound of $O(\log n + 1/\eps^2)$ when $T=\Theta(\log n)$.

\paragraph{The softmax lift (Theorem~\ref{thm:softmax-lift}; Appendix~\ref{subsec:softmax-lift}).} We bypass the $\Omega(\sqrt{n}/\eps)$ lower bound for $T=1$ by extending the model slightly and augmenting the keys (and queries) before building the MIPS index. Specifically, we encode the priority directly into inner products by setting $\wt x_i = (x_i,-\log\bu_i)$ and $\wt q = (q,1)$ so that $\exp(\ip{\wt q}{\wt x_i}) = \bpi_i$. Thus, a single top-$(K+1)$ MIPS query returns exactly the top $K+1$ priorities, giving an $\eps$-accurate estimate with only $O(1/\eps^2)$ retrieved keys.

\begin{algorithmbox}[label = {alg:softmax-lift-alg}]{\lift Implementation of Priority Sampling} 

\textbf{Input.} $K \in \N$, $\bu_1, \ldots, \bu_n \sim \Unif(0, 1]$, and $x_1, \ldots, x_n \in \R^d$ \\
\textbf{Output.} Indices $i_1, \ldots, i_{K+1} \in [n]$ of the largest $K+1$ priority keys \\

\textbf{Preprocessing.} 
\begin{enumerate}
    \item Create lifted keys $\tilde{x}_1, \ldots, \tilde{x}_n \in \R^{d+1}$ where
    \[
    \tilde{x}_i := (x_i, -\log \bu_i)
    \]
    and call $\tilde{P} := \{\tilde{x}_1, \ldots, \tilde{x}_n \}$.
    \item $\build(\tilde{P})$
\end{enumerate}

\medskip
\textbf{Query.} Given $q\in\R^d$
\begin{enumerate}
    \item Create lifted query $\tilde{q} := (q, 1) \in \R^{d+1}$
    \item Return $\query(\tilde{q}, K+1; \tilde{P})$
\end{enumerate}
\end{algorithmbox}

%% file: tikz-diagrams/priority-sampling.tex
\begin{tikzpicture}[x=1cm, y=1.15cm]
  \def\psdata{1/2.00/0.80/1, 2/1.30/0.95/2, 3/1.00/0.55/1, 4/0.85/0.90/0,
              5/0.75/0.85/0, 6/0.68/0.30/1, 7/0.62/0.75/0, 8/0.57/0.95/0,
              9/0.53/0.28/1, 10/0.50/0.70/0, 11/0.47/0.90/0, 12/0.45/0.60/0}
  \def\pstau{1.368}
  \def\pswidth{0.26}   
  \def\psstep{0.36}    
  \def\pspanel{4.5}    

  \begin{scope}
    \foreach \idx/\wv/\uv/\fl in \psdata {
      \pgfmathsetmacro\xleft{\idx*\psstep-\pswidth/2}
      \fill[psW, psbar] (\xleft,0) rectangle ++(\pswidth,\wv);
    }
    \draw[line width=0.6pt] (0,0) -- (\pspanel,0);
    \node[pstitle] at (\pspanel/2,2.9) {weights $w_i = \kappa(q,x_i)$};
    \node[font=\scriptsize, anchor=north] at (\pspanel/2,-0.05) {keys $i$ (sorted by $w_i$)};
  \end{scope}

  \begin{scope}[xshift=5.4cm]
    \foreach \idx/\wv/\uv/\fl in \psdata {
      \pgfmathsetmacro\xleft{\idx*\psstep-\pswidth/2}
      \pgfmathsetmacro\pri{\wv/\uv}
      \ifnum\fl=1
        \fill[psS!35, psbar] (\xleft,\wv) rectangle ++(\pswidth,\pri-\wv);
        \fill[psS, psbar]    (\xleft,0)   rectangle ++(\pswidth,\wv);
      \else
        \fill[psP, psbar] (\xleft,\wv) rectangle ++(\pswidth,\pri-\wv);
        \fill[psW, psbar] (\xleft,0)   rectangle ++(\pswidth,\wv);
      \fi
      \ifnum\fl=2
        \draw[psT, line width=0.8pt] (\xleft,0) rectangle ++(\pswidth,\pri);
      \fi
    }
    \draw[pstauline] (0,\pstau) -- (\pspanel,\pstau);
    \node[psT, font=\scriptsize, anchor=north east, inner sep=1pt]
      at (\pspanel+0.5,\pstau-0.04) {$\btau = \bpi_{i_{K+1}}$};    \draw[line width=0.6pt] (0,0) -- (\pspanel,0);
    \node[pstitle] at (\pspanel/2,2.9) {priorities $\bpi_i = w_i/\bu_i$};
    \node[font=\scriptsize, anchor=north] at (\pspanel/2,-0.05) {keys in $\bP$ (top $K$ priorities)};
  \end{scope}

  \begin{scope}[xshift=10.8cm]
    \foreach \idx/\wv/\uv/\fl in \psdata {
      \pgfmathsetmacro\xleft{\idx*\psstep-\pswidth/2}
      \ifnum\fl=1
        \pgfmathsetmacro\hgt{max(\wv,\pstau)}
        \fill[psS!35, psbar] (\xleft,\wv) rectangle ++(\pswidth,\hgt-\wv);
        \fill[psS, psbar]    (\xleft,0)   rectangle ++(\pswidth,\wv);
      \else
        \draw[black!45, densely dotted, line width=0.5pt]
          (\xleft,0) rectangle ++(\pswidth,\wv);
      \fi
    }
    \draw[pstauline] (0,\pstau) -- (\pspanel,\pstau);
    \node[psT, font=\scriptsize, anchor=north east, inner sep=1pt]
      at (\pspanel,\pstau-0.04) {$\btau$};
    \draw[line width=0.6pt] (0,0) -- (\pspanel,0);
    \node[pstitle] at (\pspanel/2,2.9) {adjusted weights $\wh{\bw}_i = \max\{w_i,\btau\}$};
    \node[font=\scriptsize, anchor=north] at (\pspanel/2,-0.05) {$\wh{\bw}_i = 0$ for $i \notin \bP$};
  \end{scope}

  \draw[-{Stealth[length=5pt]}, line width=0.7pt]
    (4.7,1.9) -- node[above, font=\scriptsize]{$\div\,\bu_i$} (5.35,1.9);
  \draw[-{Stealth[length=5pt]}, line width=0.7pt]
    (10.1,1.9) -- node[above, font=\scriptsize]{top $K$} (10.75,1.9);
\end{tikzpicture}

%% file: sections/experiments.tex
\section{Experimental Results} \label{sec:experiments}

We investigate the empirical performance of our retrieval strategies.

\textbf{Hardware.} We ran all experiments on a machine equipped with an AMD Ryzen Threadripper 9970X CPU (32 physical cores, 64 hardware threads), 256 GB of system RAM, and an NVIDIA RTX PRO 6000 Blackwell Max-Q GPU with 96 GB of VRAM.

\paragraph{Experimental Procedure.} We replace the attention computation of Llama-3.2-1B-Instruct with a custom harness which allows the user to specify an arbitrary MIPS oracle and an arbitrary retrieval strategy that uses the oracle as a black-box.

Each example is divided into one main document and a complex question at the end. In the preprocessing stage, the document prefill is done using the unmodified model, and oracles are allowed to build MIPS indexes on the resulting KV-cache. As is standard practice~\citep{saap}, we prune the recent context and sink from the KV-cache ($1024$ tokens and $1$ token, respectively) and compute attention on this segment via brute-force. At query time, the question is provided and the modified LLM implementing our approximate attention algorithm is used to generate a response.

For each strategy/oracle combination, the harness measures benchmark scores, attention error and final logit error relative to exact attention, oracle recall rates, question processing time, and the per-token latency of the response. These values are measured for each head on each token where applicable, but all plots report aggregate statistics. 

Note that in all algorithms, $k$ is used to denote the total number of points requested from all the oracle calls combined. The precise definitions of the measured values, oracle configurations, and additional experimental discussion is given in Appendix \ref{sec:exp-apx}.

\paragraph{Dataset.} The dataset examples come from a variety of tasks from RULER~\citep{hsieh2024ruler}, a synthetic long-context benchmark. We chose RULER because it supports very large contexts (128K--1M) and its tasks can be solved by small models such as Llama-3.2-1B-Instruct.

\paragraph{Variable $k$ results.}We test the accuracy of the retrieval strategies as a function of $k$.  Figure \ref{fig:flat-k-vs-quality} shows attention error and task accuracy as a function of $k$ for the FAISS-Flat oracle.  Note that \lift (Algorithm \ref{alg:softmax-lift-alg}) and \adaptivebuckets (Algorithm \ref{alg:adaptive-alg}) are different implementations of the same Meta Algorithm \ref{alg:meta-alg}, so their output is identical when using an exact oracle. Figure \ref{fig:ivf-latency-vs-benchmark} shows a latency-quality tradeoff for retrieval strategies using the FAISS-IVF oracle instead. The latency of \topksampling is larger than that of \topk, even though we found the raw oracle times to be roughly comparable for the two strategies. This suggests that the overhead from the unoptimized merge with the sampling results is causing the discrepancy.  

\begin{figure}[t]
  \centering
  \resizebox{\linewidth}{!}{%
    \input{tikz-diagrams/ivf-latency-vs-benchmark}%
  }
  \caption{Latency--quality tradeoff for strategies using the FAISS-IVF oracle on 60 examples. Each translucent point averages one of 10 tasks; opaque points denote overall averages. For Top-$k$+sampling, the budget is split evenly between oracle
  retrieval and uniform samples.}
  \label{fig:ivf-latency-vs-benchmark}
\end{figure}

  \paragraph{Variable-oracle results.} 
  Table~\ref{tab:retrieval-oracle-results} compares the 4 retrieval strategies across 3 FAISS oracles (Flat, IVF, HNSW)
  at a fixed total budget of $k=128$. As above, \lift outperforms the other retrieval strategies. Note that the slight difference in scores between the exact attention methods is due to differences in numerical precision. 
  At this context size, the unoptimized approximate methods on the CPU are still slower than the model's original exact attention run on the GPU. 
  
  \begin{table*}[t]
    \centering
    \small
    \setlength{\tabcolsep}{6pt}
    \renewcommand{\arraystretch}{1.20}
    \begin{tabular}{@{}llrrrrr@{}}
      \toprule
      \multirow{2}{*}{\textbf{Retrieval strategy}}
        & \multirow{2}{*}{\textbf{Oracle}}
        & \textbf{Attention} & \textbf{RULER} & \textbf{Response}
        & \textbf{Oracle} \\
        & & {\footnotesize rel. $L_2$ error $\downarrow$}
        & {\footnotesize score (\%) $\uparrow$}
        & {\footnotesize time (s) $\downarrow$}
        & {\footnotesize recall $\uparrow$} \\
      \midrule
      \multirow{3}{*}{\lift}
        & Flat & 0.0794 & 60.22 & 5.10 & 1 \\
        & IVF  & \textbf{0.1209} & 45.28 & 1.15 & 0.966  \\
        & HNSW & 0.1212 & \textbf{50.53} & \textbf{1.06} & 0.926\\
      \addlinespace
      \multirow{3}{*}{\topk}
        & Flat & 0.3831 & 40.56 & 5.39 & 1  \\
        & IVF  & 0.4009 & 18.89 & 1.40 & 0.821  \\
        & HNSW & 0.3935 & 23.42 & 1.54 & 0.918  \\
      \addlinespace
      \multirow{3}{*}{\topksampling}
        & Flat & 0.5111 & 37.17 & 9.32 & 1  \\
        & IVF  & 0.5567 & 21.06 & 5.19 & 0.820 \\
        & HNSW & 0.5259 & 24.44 & 5.40 & 0.938 \\
      \addlinespace
      \multirow{3}{*}{\adaptivebuckets}
        & Flat
               & 0.0794 & 60.22 & 6.33 & 1  \\
        & IVF  & 0.1404 & 37.53 & 2.77 & 0.780  \\
        & HNSW & 0.0948 & 57.86 & 3.81 & 0.983 \\
      \midrule
      Exact CPU        & -- & -- & 60.86 & 3.37 & --  \\
      Exact GPU        & -- & 0 & 61.53 & 0.53 & -- \\
      \bottomrule
    \end{tabular}
    \caption{Retrieval strategy and oracle comparison on 60 matched 128K-context
    RULER examples from 10 tasks with total retrieval budget
    $k=128$. \topksampling splits its budget evenly between 64 retrieved and 64 uniformly
    sampled keys. We use $T=2$ for \adaptivebuckets in this table. We bold the best performing retrieval strategy/approximate oracle combination (among those with response time faster than Exact CPU) along each metric.}
    \label{tab:retrieval-oracle-results}
  \end{table*}

\paragraph{1M Context Results.} Large contexts favor sparse attention methods due to the sublinear scaling of MIPS oracles and memory hardware limitations for exact algorithms (see Appendix \ref{sec:exp-apx}). We test our approach on a small set of 1M context examples. For these experiments, we modify Llama-3.1-Nemotron-8B-UltraLong-1M-Instruct instead and use the same testing protocol. \lift remains the best retrieval strategy. Notably, even though our code is not heavily optimized, \lift with an IVF oracle run on the CPU can generate answer tokens at a roughly 3x larger rate than the exact attention done on the GPU while still succeeding on some tasks.

\begin{table*}[t]
    \centering
    \small
    \setlength{\tabcolsep}{8pt}
    \renewcommand{\arraystretch}{1.20}
 \begin{tabular}{@{}llrrr@{}}
    \toprule
    \multirow{2}{*}{\textbf{Retrieval strategy}}
      & \multirow{2}{*}{\textbf{Oracle}}
      & \textbf{Attention} & \textbf{RULER} & \textbf{Later-token} \\
      & & {\footnotesize rel. $L_2$ error $\downarrow$}
      & {\footnotesize score $\uparrow$}
      & {\footnotesize latency (s/token) $\downarrow$} \\
    \midrule
    \multirow{2}{*}{\lift}
      & Flat & 0.0181 & 83.33 & 2.567 \\
      & IVF  & \textbf{0.0670} & \textbf{58.33} & 0.519 \\
    \addlinespace
    \multirow{2}{*}{\topk}
      & Flat & 0.2516 & 75.00 & 2.563 \\
      & IVF  & 0.2975 & 37.50 & \textbf{0.409} \\
    \addlinespace
    \multirow{2}{*}{\topksampling}
      & Flat & 0.1823 & 72.92 & 4.460 \\
      & IVF  & 0.2946 & 45.83 & 0.796 \\
    \midrule
    Exact GPU & -- & 0 & 85.42 & 1.695 \\
    \bottomrule
  \end{tabular}
    \caption{1M-context results on eight matched examples from four RULER
    tasks. Approximate methods use an 8192-token exact local window and total
    nonlocal budget $k=1024$. Attention error is pooled relative $L_2$;
    later-token latency averages tokens generated after the first. We bold the best performing retrieval strategy/approximate oracle combination along each metric.}
    \label{tab:one-million-results}
  \end{table*}

\paragraph{Discussion.} Our experiments demonstrate that \lift consistently provides the best quality--efficiency tradeoff among the tested retrieval strategies. While none of the approximate methods outperform the exact GPU attention at 128K context, the 1M context experiments suggest that sparse attention is advantageous at longer context lengths. In particular, combining \lift with an approximate oracle appears to be a promising overall candidate, making a strong case for further optimization and engineering. 

%% file: tikz-diagrams/ivf-latency-vs-benchmark.tex
%

\begin{tikzpicture}[
  x=0.075cm,
  y=0.045cm,
  taskpoint/.style={draw=none, opacity=0.24, inner sep=0pt},
  meanpoint/.style={draw=white, line width=0.45pt, inner sep=0pt},
  softmaxmark/.style={circle, minimum size=3.2pt},
  topkmark/.style={rectangle, minimum width=3.2pt, minimum height=3.2pt},
  samplingmark/.style={rectangle,rotate=45,
                       minimum width=3.3pt,minimum height=3.3pt},
]
  \definecolor{latSoftmax}{RGB}{0,114,178}
  \definecolor{latTopK}{RGB}{213,94,0}
  \definecolor{latSampling}{RGB}{0,158,115}

  \foreach \x in {25,50,75,100,125,150,175} {
    \draw[black!9, line width=0.35pt] (\x,0) -- (\x,105);
    \draw[black!65, line width=0.45pt] (\x,0) -- (\x,-1.8);
    \node[font=\scriptsize, anchor=north] at (\x,-3.0) {\x};
  }
  \foreach \y in {0,20,40,60,80,100} {
    \draw[black!9, line width=0.35pt] (25,\y) -- (180,\y);
    \draw[black!65, line width=0.45pt] (25,\y) -- (23.5,\y);
    \node[font=\scriptsize, anchor=east] at (22.8,\y) {\y};
  }
  \draw[black!70, line width=0.65pt] (25,0) -- (180,0);
  \draw[black!70, line width=0.65pt] (25,0) -- (25,105);
  \node[font=\small, anchor=north] at (102.5,-10.0)
    {later-token latency (ms/token; lower is better)};
  \node[font=\small, rotate=90, anchor=south] at (9.5,52.5)
    {RULER benchmark score};

  \foreach \x/\y in {
    54.2551/61.1111, 61.3876/66.6667, 60.5666/33.3333,
    56.7284/12.5000, 55.2398/29.1667, 56.6492/83.3333,
    59.9966/83.3333, 58.6206/33.3333, 57.0799/16.6667,
    55.6074/33.3333
  } \node[taskpoint,softmaxmark,fill=latSoftmax] at (\x,\y) {};
  \foreach \x/\y in {
    56.7933/55.5556, 64.3610/83.3333, 61.5909/33.3333,
    60.9771/20.8333, 59.0228/33.3333, 61.8585/100.0000,
    62.5437/83.3333, 62.6900/33.3333, 59.0201/0.0000,
    58.6674/53.3333
  } \node[taskpoint,softmaxmark,fill=latSoftmax] at (\x,\y) {};
  \foreach \x/\y in {
    79.4557/44.4444, 84.6871/66.6667, 81.9043/33.3333,
    80.2330/33.3333, 78.5565/41.6667, 85.8007/100.0000,
    83.7129/66.6667, 81.5619/33.3333, 80.8308/16.6667,
    81.8725/56.6667
  } \node[taskpoint,softmaxmark,fill=latSoftmax] at (\x,\y) {};

  \draw[latSoftmax, line width=1.05pt]
    (57.6131,45.2778) -- (60.7525,49.6389) -- (81.8615,49.2778);
  \node[meanpoint,softmaxmark,minimum size=4.8pt,fill=latSoftmax]
    at (57.6131,45.2778) {};
  \node[meanpoint,softmaxmark,minimum size=4.8pt,fill=latSoftmax]
    at (60.7525,49.6389) {};
  \node[meanpoint,softmaxmark,minimum size=4.8pt,fill=latSoftmax]
    at (81.8615,49.2778) {};
  \node[latSoftmax,font=\scriptsize,anchor=east] at (56.2,43.0) {$k=128$};
  \node[latSoftmax,font=\scriptsize,anchor=south west] at (61.7,51.0) {$k=512$};
  \node[latSoftmax,font=\scriptsize,anchor=west] at (83.4,49.3) {$k=2{,}048$};

  \foreach \x/\y in {
    53.0857/38.8889, 57.6948/33.3333, 56.0087/16.6667,
    55.4064/12.5000, 52.4453/4.1667, 57.2339/0.0000,
    56.7872/0.0000, 57.3913/50.0000, 53.9062/16.6667,
    56.2556/16.6667
  } \node[taskpoint,topkmark,fill=latTopK] at (\x,\y) {};
  \foreach \x/\y in {
    55.6687/55.5556, 63.6767/50.0000, 60.0523/16.6667,
    59.4161/20.8333, 56.3016/16.6667, 63.9589/0.0000,
    63.1629/50.0000, 59.3769/33.3333, 60.2702/16.6667,
    57.2850/3.3333
  } \node[taskpoint,topkmark,fill=latTopK] at (\x,\y) {};
  \foreach \x/\y in {
    75.8114/61.1111, 83.3710/50.0000, 87.7181/16.6667,
    77.2737/20.8333, 77.9809/29.1667, 87.5150/0.0000,
    83.2464/50.0000, 85.5544/33.3333, 84.5658/16.6667,
    77.1049/10.0000
  } \node[taskpoint,topkmark,fill=latTopK] at (\x,\y) {};

  \draw[latTopK,dashed,line width=1.05pt]
    (55.6215,18.8889) -- (59.9169,26.3056) -- (82.0142,28.7778);
  \node[meanpoint,topkmark,minimum width=4.7pt,minimum height=4.7pt,
        fill=latTopK]
    at (55.6215,18.8889) {};
  \node[meanpoint,topkmark,minimum width=4.7pt,minimum height=4.7pt,
        fill=latTopK] at (59.9169,26.3056) {};
  \node[meanpoint,topkmark,minimum width=4.7pt,minimum height=4.7pt,
        fill=latTopK]
    at (82.0142,28.7778) {};
  \node[latTopK,font=\scriptsize,anchor=east] at (54.2,17.0) {$k=128$};
  \node[latTopK,font=\scriptsize,anchor=south west] at (60.9,27.0) {$k=512$};
  \node[latTopK,font=\scriptsize,anchor=west] at (83.5,28.8) {$k=2{,}048$};

  \foreach \x/\y in {
    124.1155/22.2222, 128.1636/50.0000, 130.3111/0.0000,
    129.1481/8.3333, 126.1421/16.6667, 133.1535/0.0000,
    128.1954/33.3333, 125.0678/50.0000, 129.2070/16.6667,
    126.5531/13.3333
  } \node[taskpoint,samplingmark,fill=latSampling] at (\x,\y) {};
  \foreach \x/\y in {
    127.0316/44.4444, 140.6668/50.0000, 136.7978/16.6667,
    135.0815/20.8333, 133.4160/12.5000, 136.1465/33.3333,
    134.5552/83.3333, 132.9678/66.6667, 131.5177/33.3333,
    133.9336/16.6667
  } \node[taskpoint,samplingmark,fill=latSampling] at (\x,\y) {};
  \foreach \x/\y in {
    149.6765/61.1111, 162.5105/66.6667, 162.8314/33.3333,
    156.7302/20.8333, 159.8024/20.8333, 168.6407/0.0000,
    152.5590/83.3333, 165.6060/33.3333, 158.5100/16.6667,
    161.9636/23.3333
  } \node[taskpoint,samplingmark,fill=latSampling] at (\x,\y) {};

  \draw[latSampling,dash pattern=on 4pt off 1.5pt on 0.8pt off 1.5pt,
        line width=1.05pt]
    (128.0057,21.0556) -- (134.2115,37.7778) -- (159.8830,35.9444);
  \node[meanpoint,samplingmark,minimum size=5.4pt,fill=latSampling]
    at (128.0057,21.0556) {};
  \node[meanpoint,samplingmark,minimum size=5.4pt,fill=latSampling]
    at (134.2115,37.7778) {};
  \node[meanpoint,samplingmark,minimum size=5.4pt,fill=latSampling]
    at (159.8830,35.9444) {};
  \node[latSampling,font=\scriptsize,anchor=east] at (126.5,19.0) {$k=128$};
  \node[latSampling,font=\scriptsize,anchor=south west] at (135.2,38.5) {$k=512$};
  \node[latSampling,font=\scriptsize,anchor=west] at (161.4,35.9) {$k=2{,}048$};

  \begin{scope}[x=1cm,y=1cm]
    \filldraw[fill=black!1,draw=black!15,line width=0.4pt,
              rounded corners=1.5pt]
      (2.50,-1.80) rectangle (12.40,-1.23);

    \draw[latSoftmax,line width=1.05pt] (2.88,-1.515) -- (3.44,-1.515);
    \node[meanpoint,softmaxmark,minimum size=4.8pt,fill=latSoftmax]
      at (3.16,-1.515) {};
    \node[font=\scriptsize,anchor=west] at (3.60,-1.515) {\lift};

    \draw[latTopK,dashed,line width=1.05pt] (5.92,-1.515) -- (6.48,-1.515);
    \node[meanpoint,topkmark,minimum width=4.7pt,minimum height=4.7pt,
          fill=latTopK] at (6.20,-1.515) {};
    \node[font=\scriptsize,anchor=west] at (6.64,-1.515) {\topk};

    \draw[latSampling,
          dash pattern=on 4pt off 1.5pt on 0.8pt off 1.5pt,
          line width=1.05pt] (8.42,-1.515) -- (8.98,-1.515);
    \node[meanpoint,samplingmark,minimum size=5.4pt,fill=latSampling]
      at (8.70,-1.515) {};
    \node[font=\scriptsize,anchor=west] at (9.14,-1.515)
      {\topksampling};
  \end{scope}
\end{tikzpicture}

%% file: sections/limitations.tex
\section{Limitations and Future Work}\label{sec:limitations}

We discuss important limitations of our work, both from the theoretical and empirical lenses. We hope to address many of these limitations as part of future work. 

\paragraph{Theoretical Limitations.} Our theoretical model does not completely capture the complexity of implementing search oracles, rather opting to simply charge per used oracle. Furthermore, the oracle cost $\costo$ in our model does not distinguish between oracles built over differing size supports. Secondly, our theoretical results assume access to \emph{exact} search oracles. While our empirical results indicate that they are very robust to \emph{approximate} search oracles, we leave open the theoretical problem of designing algorithms provably robust to approximate search oracles. 

\paragraph{Experimental Limitations.} Our MIPS Attention model importantly excludes some existing work that (1) do not treat the vector search as black-box~\citep{magicpig}, or (2) allow arbitrary transformations to the keys and queries~\citep{hashattention}. We leave the additional empirical comparison of our algorithms with these as future work. Secondly, our empirical testing suite does not fully optimize vector search oracles via multi-threading and does not support GPU-accelerated search. Lastly, our experiments at 1M context use too few examples to be fully conclusive. 

\subsection*{AI use and Acknowledgments}

In this work, we used generative AI tools to assist in the construction and analysis of all algorithms and theorems, in writing the proofs, and in writing code for our implementations. We have not used generative AI tools for generating synthetic data sets, developing theoretical models or conceptual frameworks, designing research methodology, or interpreting results. We take responsibility for the entire content of the paper.

This work was supported in part by the Columbia Center of AI Technology in collaboration with Amazon, and ONR (N00014-22-1-2713); by the National Science Foundation under Grant No. CCF-2337993; and by the National Science Foundation GRFP under Grant No. DGE-2236662.

%% file: sections/algorithms-appendix.tex
\section{Algorithms for MIPS Attention}\label{sec:algorithms}

\subsection{A Meta-Algorithm via Priority Sampling}\label{subsec:meta-alg}

We refer to Algorithm~\ref{alg:meta-alg}. Fixing a query $q$, we define
\[
    w_i := \kappa(q,x_i), \qquad
    V := \sum_{i=1}^n v_i w_i,
    \qquad
    W := \sum_{i=1}^n w_i,
\]

So that $\attn(q) = V/W$. We also define $\wt{\bw}_i = \wh{\bw}_i \cdot \one\{i\in\bP\}$, so that:
\[
    \widehat{\bV}=\sum_{i=1}^n v_i\wt \bw_i,
    \qquad
    \widehat{\bW}=\sum_{i=1}^n \wt \bw_i.
\]

\begin{lemma}[Fact~1 and Theorem~2 of~\cite{daliri2024simple}]
\label{lem:priority-sampling}
    $\E[\wt\bw_i] = w_i$ for every $i\in[n]$ and $\Cov(\wt\bw_i, \wt\bw_j) = 0$ for every $i\neq j$. Finally, $\sum_{i=1}^n \Var(\wt\bw_i) \le W^2 / (K-1)$.
\end{lemma}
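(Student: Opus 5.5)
The plan is to analyze $\wt\bw_i$ by conditioning on the random scales of all \emph{other} keys, so that the threshold becomes a deterministic quantity and only $\bu_i$ remains random. The lemma has three parts: unbiasedness, vanishing covariances, and the variance bound. The first two follow from an exact conditioning identity. The third requires a tail bound on the threshold, and that is the step I expect to be hardest.

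\textbf{Unbiasedness.} Fix $i$ and condition on $(\bu_j)_{j\ne i}$. Let $\btau_i$ denote the $K$-th largest priority among $\{\bpi_j : j\neq i\}$. Then $i\in\bP$ if and only if $\bpi_i>\btau_i$, which is equivalent to $\bu_i<w_i/\btau_i$; ties have probability zero. On this event, the $(K+1)$-st largest priority overall is exactly $\btau_i$, so $\btau=\btau_i$. Hence
\[
\wt\bw_i=\max\{w_i,\btau_i\}\cdot\one\{\bu_i<w_i/\btau_i\},
\]
and its conditional expectation is $\max\{w_i,\btau_i\}\cdot\min\{1,w_i/\btau_i\}=w_i$. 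Taking expectations over the remaining scales gives $\E[\wt\bw_i]=w_i$.

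\textbf{Zero covariance.} For $i\ne j$ with $K\ge 2$, I would show $\E[\wt\bw_i\wt\bw_j]=w_iw_j$ by the same device. Condition on $(\bu_l)_{l\notin\{i,j\}}$ and let $\btau_{ij}$ be the $(K-1)$-st largest priority among the keys other than $i$ and $j$. The product $\wt\bw_i\wt\bw_j$ is nonzero only if both $i,j\in\bP$. This happens if and only if both $\bpi_i>\btau_{ij}$ and $\bpi_j>\btau_{ij}$, and in that case the $(K+1)$-st largest priority overall is again $\btau_{ij}$. Therefore the product factors as
\[
\prod_{l\in\{i,j\}}\max\{w_l,\btau_{ij}\}\one\{\bu_l<w_l/\btau_{ij}\}.
\]
The two factors are conditionally independent because $\bu_i$ and $\bu_j$ are independent, and each has conditional mean $w_l$ by the computation above. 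This gives zero covariance.

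\textbf{Variance bound.} From the same conditioning,
\[
\E[\wt\bw_i^2\mid\cdot]=w_i\max\{w_i,\btau_i\},
\]
so $\Var(\wt\bw_i)=w_i\,\E[(\btau_i-w_i)^+]\le w_i\,\E[\btau_i]$. It therefore suffices to show $\E[\btau_i]\le W/(K-1)$ for every $i$, and then sum over $i$.

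To bound $\E[\btau_i]$, write $\E[\btau_i]=\int_0^\infty\Pr[\btau_i\ge t]\,dt$. The event $\btau_i\ge t$ means that $N(t):=\#\{j\ne i:\bpi_j\ge t\}\ge K$. Here $N(t)$ is a sum of independent Bernoulli variables with mean $\mu(t)=\sum_j\min\{1,w_j/t\}\le W/t$.

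My first attempt would use a factorial-moment tail bound,
\[
\Pr[N\ge K]\le \E\binom{N}{m}\Big/\binom{K}{m}\le \frac{\mu^m/m!}{\binom{K}{m}},
\]
and integrate against $t$. The choice $m=2$ already gives $\E[\btau_i]=O(W/K)$, but with a constant of about $2$. To get exactly $W/(K-1)$, I would instead compare $\btau_i$ directly to $W/(K-1)$. The idea is to observe that $1/\btau_i$ stochastically dominates the $K$-th smallest of the independent ratios $\bu_j/w_j$, each of which dominates an exponential variable of rate $w_j$. This would reduce the claim to the identity that the $(K-1)$-th order-statistic spacing of exponentials with total rate at most $W$ has mean at least $(K-1)/W$, combined with Jensen's inequality.

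Getting this sharp constant is the main obstacle. If it proves delicate, the downstream accuracy argument only needs $\sum_i\Var(\wt\bw_i)=O(W^2/K)$, and the $m=2$ bound provides that.
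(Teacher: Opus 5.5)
The paper does not prove this lemma itself; it imports it from Daliri et al. Your treatment of unbiasedness, zero covariance (for $K\ge 2$), and the reduction $\Var(\wt\bw_i)=w_i\,\E[(\btau_i-w_i)^+]\le w_i\,\E[\btau_i]$ is correct and standard.

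The gap is the last step, $\E[\btau_i]\le W/(K-1)$, which is exactly what the stated constant requires. Your proposed sharp route fails because two of its inequalities go the wrong way.

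\emph{The exponential comparison.} Since $\Pr[\bu_j/w_j\le s]=\min\{1,w_js\}\ge 1-e^{-w_js}$, each ratio $\bu_j/w_j$ is stochastically \emph{smaller} than an exponential of rate $w_j$. So $1/\btau_i$, which simply \emph{equals} the $K$-th smallest ratio, is stochastically smaller than the exponential order statistic. That comparison therefore yields a \emph{lower} bound on $\E[\btau_i]$.

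\emph{Jensen.} Jensen gives $\E[1/X]\ge 1/\E[X]$, so a lower bound on the mean of the order statistic gives no upper bound on $\E[1/X]$.

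\emph{The fallback.} With $m=2$ you get at best $\E[\btau_i]\le 2W/\sqrt{K(K-1)}$. This is strictly larger than $W/(K-1)$ for every $K\ge 2$. It would suffice downstream, since Lemma~\ref{lem:attn-err} only needs an $O(W^2/K)$ bound, but it does not prove the lemma as stated.

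The missing idea is to reuse the unbiasedness you already proved.

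\begin{enumerate}
\item Fix $i$ and run priority sampling with sample size $K-1$ on the items $[n]\setminus\{i\}$, using the same scales $\bu_j$.
\item Its threshold is the $K$-th largest priority among $j\neq i$, which is exactly $\btau_i$.
\item Each of its $K-1$ adjusted weights is $\max\{w_j,\btau_i\}\ge\btau_i$. Hence $(K-1)\btau_i$ is at most that estimator's total.
\item By your first part, applied with $K-1\ge 1$ on $n-1\ge K$ items, that total has expectation $W-w_i$. This gives $\E[\btau_i]\le (W-w_i)/(K-1)$.
\item Summing $w_i\,\E[\btau_i]$ over $i$ yields $\sum_i\Var(\wt\bw_i)\le W^2/(K-1)$.
\end{enumerate}

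This closes the gap with the exact constant, and needs no tail bounds or integration.
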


Using Lemma~\ref{lem:priority-sampling}, we bound the error of the attention numerator and denominator.
\begin{claim}
\label{claim:num-denom}
For every $\delta\in(0,1)$, with probability at least $1-\delta$,
\[
    |\widehat{\bW}-W|
    \le
    W\sqrt{\frac{2}{\delta(K-1)}} \quad \text{and} \quad \|\widehat{\bV}-V\|_2
    \le
    \max_{i\in[n]} \|v_i\|_2 \cdot W\sqrt{\frac{2}{\delta(K-1)}}.
\]
\end{claim}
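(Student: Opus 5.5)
The plan is to apply Chebyshev's inequality to each of the two estimators, using the unbiasedness and zero-covariance facts of Lemma~\ref{lem:priority-sampling}, and then take a union bound. We allot failure probability $\delta/2$ to each event. This split is where the factor $2$ under the square root comes from.

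\textbf{Denominator.} By Lemma~\ref{lem:priority-sampling}, $\E[\widehat{\bW}] = \sum_i \E[\wt\bw_i] = W$. Since the $\wt\bw_i$ are pairwise uncorrelated, the variance of the sum is the sum of the variances:
\[
\Var(\widehat{\bW}) = \sum_i \Var(\wt\bw_i) \le \frac{W^2}{K-1}.
\]
Chebyshev's inequality then gives
\[
\Pr\left[|\widehat{\bW}-W| > W\sqrt{\tfrac{2}{\delta(K-1)}}\right] \le \frac{W^2/(K-1)}{W^2\cdot 2/(\delta(K-1))} = \frac{\delta}{2}.
\]

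\textbf{Numerator.} Here we use the vector form of Chebyshev's inequality, i.e.\ Markov's inequality applied to $\|\widehat{\bV}-V\|_2^2$. We have $\E[\widehat{\bV}] = V$ coordinatewise, so
\[
\E\|\widehat{\bV}-V\|_2^2 = \sum_{c} \Var\Bigl(\sum_i v_{i,c}\wt\bw_i\Bigr).
\]
Zero covariance implies that for each coordinate $c$ the variance equals $\sum_i v_{i,c}^2 \Var(\wt\bw_i)$; the cross terms vanish because the $v_i$ are deterministic. Summing over $c$ gives
\[
\sum_i \|v_i\|_2^2 \Var(\wt\bw_i) \le \max_i\|v_i\|_2^2 \cdot \frac{W^2}{K-1}.
\]
Markov's inequality with threshold $\max_i\|v_i\|_2^2 \, W^2 \cdot 2/(\delta(K-1))$ then bounds the failure probability by $\delta/2$.

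A union bound over the two events finishes the proof. The only step that needs care is the vector case: one must check that pairwise uncorrelatedness is enough to kill the cross terms in the squared norm. Expanding coordinatewise, as above, shows that it is, so there is no real obstacle.
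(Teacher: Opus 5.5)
Your proposal is correct and takes the same route as the paper: zero covariances from Lemma~\ref{lem:priority-sampling} give the second-moment bounds, then Chebyshev (Markov on the squared norm for $\widehat{\bV}$) with failure probability $\delta/2$ for each estimator, followed by a union bound. Your numerator bound correctly carries $\max_i\|v_i\|_2^2$; the paper's displayed second-moment bound writes $\max_i\|v_i\|_2$ without the square, which is a typo that your version avoids.
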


\begin{proof} By Lemma~\ref{lem:priority-sampling}, we have $
    \E[(\wt\bw_i-\bw_i)(\wt\bw_j-\bw_j)] = \Cov(\wt\bw_i,\wt\bw_j) = 0$.
Therefore, \[\E[|\wh\bW - W|^2] =  \E\left[\left|\sum_{i=1}^n (\wt\bw_i-\bw_i)\right|^2\right] = \sum_{i=1}^n \Var(\wt\bw_i) \le W^2 / (K-1).
\]
Similarly,
\[\E\|\widehat{\bV}-V\|_2^2 = \E\left\|
        \sum_{i=1}^n v_i(\wt\bw_i-\bw_i)
    \right\|_2^2  = \sum_{i=1}^n \|v_i\|_2^2\Var(\wt\bw_i) \le \max_{i\in[n]} \|v_i\|_2 \cdot W^2 / (K-1).\] The desired statement follows from applying Chebyshev's inequality with failure probability $\delta/2$ for both $\wh\bW$ and $\wh\bV$, and then taking a union bound.
\end{proof}

\begin{lemma}
\label{lem:attn-err}
For every $\delta\in(0,1)$, with probability at least $1-\delta$,
\[
    \|\widehat{\attn}(q)-\attn(q)\|_2
    =
    O\!\left(\frac{1}{\sqrt{\delta K}}\right) \cdot \max_{i\in[n]}\|v_i\|_2.
\]
\end{lemma}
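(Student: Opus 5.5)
We will derive Lemma~\ref{lem:attn-err} directly from Claim~\ref{claim:num-denom} through a ratio-perturbation bound. Write $M := \max_{i\in[n]}\|v_i\|_2$ and $\eta := \sqrt{2/(\delta(K-1))}$. By Claim~\ref{claim:num-denom}, with probability at least $1-\delta$ we have both $|\widehat{\bW}-W|\le \eta W$ and $\|\widehat{\bV}-V\|_2\le \eta M W$. We condition on this event for the remainder of the argument. The target bound is trivial once $\eta$ exceeds a constant, since then the right-hand side $O(1/\sqrt{\delta K})\cdot M$ is at least a constant times $M$. We therefore first dispose of the regime $\eta\ge 1/2$ and afterwards assume $\eta<1/2$.

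For the large-$\eta$ case we use the observation that $\widehat{\attn}(q)$ is a convex combination of the $v_i$, because the adjusted weights $\widehat{\bw}_i$ are nonnegative. Likewise $\attn(q)$ is a convex combination of the $v_i$. Hence $\|\widehat{\attn}(q)-\attn(q)\|_2\le 2M\le 4\eta M$ whenever $\eta\ge 1/2$. The only subtlety is the degenerate case $K=1$, where $\eta$ is infinite, or $\bP=\emptyset$. Both are covered by interpreting the $O(\cdot)$ with a constant large enough that the bound already exceeds $2M$. We can also note that $\widehat{\bW}>0$ whenever $\bP\neq\emptyset$, since all $w_i>0$.

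For the small-$\eta$ case we decompose
\[
\frac{\widehat{\bV}}{\widehat{\bW}}-\frac{V}{W}
=\frac{\widehat{\bV}-V}{\widehat{\bW}}+V\left(\frac{1}{\widehat{\bW}}-\frac{1}{W}\right)
=\frac{\widehat{\bV}-V}{\widehat{\bW}}+\frac{V}{W}\cdot\frac{W-\widehat{\bW}}{\widehat{\bW}}.
\]
On the good event we have $\widehat{\bW}\ge(1-\eta)W\ge W/2$, and also $\|V/W\|_2\le M$. The first term is then bounded by $\eta MW/(W/2)=2\eta M$, and the second by $M\cdot\eta W/(W/2)=2\eta M$. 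This gives a total error of at most $4\eta M = O\bigl(1/\sqrt{\delta(K-1)}\bigr)M$.

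The last step replaces $K-1$ by $K$, using $K-1\ge K/2$ for $K\ge 2$. We expect no step to be a real obstacle. The only point requiring care is making sure the lower bound on $\widehat{\bW}$ is available, which is exactly why we split into the two regimes rather than dividing by a possibly tiny $\widehat{\bW}$.
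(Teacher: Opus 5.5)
Your proposal is correct and follows the paper's proof essentially step for step. It uses the same good event from Claim~\ref{claim:num-denom} with $\eta=\sqrt{2/(\delta(K-1))}$, the same split of $\widehat{\bV}/\widehat{\bW}-V/W$ into a numerator-error term and a denominator-error term, and the same $4\eta M$ bound once $\eta\le 1/2$. Your convex-combination argument (both $\widehat{\attn}(q)$ and $\attn(q)$ have norm at most $M$, so the error is at most $2M\le 4\eta M$ when $\eta\ge 1/2$) is a useful explicit justification of the regime the paper simply calls ``vacuously true.''
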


\begin{proof}
    By Claim~\ref{claim:num-denom}, with probability $1-\delta$ we have $|\wh\bW-W|\le \eta W$ and $\|\wh\bV-V\|_2 \le \eta W\cdot \max_{i\in[n]} \|v_i\|_2$ for 
\[
    \eta
    :=
    \sqrt{\frac{2}{\delta(K-1)}}.
\]

The desired bound is vacuously true when $K = O(1/\delta)$, so we may assume that $K$ is large enough for $\eta\le 1/2$.
Then,\begin{align*}
    \|\widehat{\attn}(q)-\attn(q)\|_2
    &=
    \left\|
        \frac{\widehat{\bV}}{\widehat{\bW}}
        -
        \frac{V}{W}
    \right\|_2 \\
    &\le
    \frac{\|\widehat{\bV}-V\|_2}{\widehat{\bW}}
    +
    \frac{\|V\|_2\cdot
          |\widehat{\bW}-W|}
         {W\cdot\widehat{\bW}} \\
    &\le
    \left(\frac{\eta}{1-\eta}
    +
    \frac{\eta}{1-\eta}\right)\cdot \max_{i\in[n]} \|v_i\|_2 \\
    &=
    \frac{2\eta}{1-\eta} \cdot \max_{i\in[n]} \|v_i\|_2
    \le 4\eta \cdot \max_{i\in[n]} \|v_i\|_2,
\end{align*}
completing the proof.
\end{proof}

\subsection{Adaptive Algorithm}\label{subsec:adaptive}

In this section, we present one instantiation of the Meta Algorithm \ref{alg:meta-alg}. Algorithm \ref{alg:adaptive-alg} adaptively queries its oracles to implement the \textbf{priority retrieval} step.

\begin{algorithmbox}[label = {alg:adaptive-alg}]{Adaptive Bucket Implementation of Priority Sampling}

\textbf{Input.} $K \in \N$, $T \in \N$, $\bu_1, \ldots, \bu_n \sim \Unif(0, 1]$, and $x_1, \ldots, x_n \in \R^d$ \\
\textbf{Output.} Indices $i_1, \ldots, i_{K+1} \in [n]$ of the largest $K+1$ priority keys \\

\textbf{Preprocessing.} If $n \leq K+1$, the task is trivial. Otherwise, 
\begin{enumerate}
    \item Set $\beta=\left(\frac{n}{K+1}\right)^{1/(T+1)}$
    \item Create buckets
\begin{align*}
    P_j&=\left\{i \in [n]:\beta^{-(j+1)}
        <\bu_i\leq\beta^{-j}\right\},
        &&j=0,\ldots,T-1,\\
        P_{T}&=\left\{i \in [n]:\bu_i
        \leq\beta^{-T}\right\}.
\end{align*}
\item $\build(P_j)$ for $j = 0, \ldots, T-1$.
\end{enumerate}

\medskip
\textbf{Query.} Given $q\in\R^d$, let $k_j = 1$ for $j \in 0, \ldots, T-1$. 
\begin{enumerate}
    \item $\query(q, k_j; P_j)$ for $j = 0, \ldots, T-1$.
    \item Compute $\bpi_i$ for each $i \in P_T$ via direct inner product evaluations

    \item Define the \emph{frontier} of bucket $P_j$ to be 
    \[
    F_j := \beta^{j+1} \kappa(x^{j}_{\downarrow})
    \]
    where $x^{j}_{\downarrow}$ is the lowest rank point retrieved from bucket $j$ thus far. Set $F_j = 0$ if all the points in bucket $P_j$ have been read. 
    \item Let $\tau$ be the $(K+1)$st largest positive priority $\bpi_i$ seen so far
    (or zero if fewer than $K+1$ points have been seen).

    \item While $\tau<\max_{j= 0, \ldots , T-1} F_j$, 
    
    \begin{enumerate}
        \item Let $j^{\star}:= \argmax_{j = 0, \ldots, T-1} F_j$ and set $k_{j^{\star}} := 2\cdot k_{j^{\star}}$.
        \item $\query(q, k_{j^{\star}}; P_{j^{\star}})$
        \item Update $F_{j^{\star}}$ and $\tau$ according to new retrieved points
    \end{enumerate}

    \item Return the $K+1$ indices with the largest priorities seen
\end{enumerate}
\end{algorithmbox}

\begin{lemma}\label{lem:adaptive-alg-retrieved} For any query $q \in \R^d$ and $T \in \N$, Algorithm \ref{alg:adaptive-alg} returns the $K+1$ indices with the largest priorities, has $\costo = T$, and 
\[
    \E[\costr] \leq O\!\left(T+n^{1/(T+1)}K^{1 - 1/(T+1)} \right)
\]
\end{lemma}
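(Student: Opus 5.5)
The plan is to prove the three claims separately: correctness, oracle cost, and expected retrieval cost. The oracle cost is immediate, since $\build$ is called only on $P_0,\dots,P_{T-1}$.

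\textbf{Correctness.} I would first show that $F_j$ upper bounds the priority of every unseen key in $P_j$. An unseen key $i\in P_j$ has $w_i\le \kappa(x^j_\downarrow)$, because the oracle returns keys in decreasing weight order with lexicographic tie-breaking. It also has $\bu_i>\beta^{-(j+1)}$, so $\bpi_i<\beta^{j+1}w_i\le F_j$. Bucket $P_T$ is read in full.

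Now let $\tau^\star$ be the true $(K+1)$-st largest priority. The key observation is that whenever $\max_j F_j\le\tau^\star$, the loop has already stopped. Indeed, every unseen key then has priority strictly below $\tau^\star$. Hence all of the top $K+1$ keys have been seen, so $\tau=\tau^\star\ge\max_j F_j$. Consequently, at termination every unseen key has priority below $\tau$, and the returned indices are the true top $K+1$. Termination itself is clear, because each doubling strictly enlarges a finite retrieved set.

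\textbf{Retrieval cost: reduction to counting.} The same observation gives the structural bound. Whenever bucket $j$ is doubled from $k_j$ to $2k_j$, we have $F_j>\tau^\star$. This means the $k_j$-th heaviest key of $P_j$ satisfies $\beta^{j+1}w>\tau^\star$. Therefore $k_j\le N_j:=|\{i\in P_j:\beta^{j+1}w_i>\tau^\star\}|$ before any doubling, and the final $k_j$ is at most $\max\{1,2N_j\}$. The queries on bucket $j$ have geometrically growing sizes, so their total is $O(1+N_j)$. Adding the direct evaluations on $P_T$ gives
\[
\costr\le O\Big(T+\sum_{j<T}N_j\Big)+|P_T|.
\]
For the last term, $\E|P_T|=n\beta^{-T}=(K+1)\beta=O(n^{1/(T+1)}K^{1-1/(T+1)})$, using $\beta^{T+1}=n/(K+1)$. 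For the middle term, every key counted in $N_j$ has $\bu_i\le\beta^{-j}$, so $\bpi_i\ge\beta^jw_i>\tau^\star/\beta$. Thus $\sum_jN_j\le M:=|\{i:\bpi_i>\tau^\star/\beta\}|$. It remains to show $\E M=O(\beta K)$, which yields the lemma.

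\textbf{Main obstacle: bounding $\E M$.} The difficulty is that $\tau^\star$ is random and correlated with the count. I would decouple them with a dyadic argument over deterministic thresholds.

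If $n=O(K)$ the bound is trivial, so assume otherwise. Let $N(t):=|\{i:\bpi_i>t\}|$, a sum of independent indicators with mean $\mu(t)=\sum_i\min\{1,w_i/t\}$. This mean is continuous and decreasing in $t$. Choose $t_0$ with $\mu(t_0)=cK$ for a constant $c\ge 4$; such $t_0$ exists since $\mu\to n$ as $t\to0$.

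For $\ell\ge0$, consider the event $E_\ell=\{\tau^\star\in(t_0 2^{-\ell-1},t_0 2^{-\ell}]\}$, and also the event $\{\tau^\star>t_0\}$. On $E_\ell$:
\begin{itemize}
\item Since $\tau^\star\le t_02^{-\ell}$, at most $K$ priorities exceed $t_02^{-\ell}$, so $N(t_02^{-\ell})\le K$. But $\mu(t_02^{-\ell})\ge 2^\ell cK$ (or equals $n$). A Chernoff bound then gives $\Pr[E_\ell]\le e^{-\Omega(2^\ell K)}$.
\item $M\le N(t_02^{-\ell-1}/\beta)$, whose mean is at most $2^{\ell+1}\beta cK$.
\end{itemize}
On $\{\tau^\star>t_0\}$, we have $M\le N(t_0/\beta)$ with mean at most $\beta cK$.

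Applying Cauchy--Schwarz on each $E_\ell$, with the second moment of a Poisson-binomial bounded by $O(\text{mean}^2+\text{mean})$, gives $\E[M\one_{E_\ell}]\le O(2^\ell\beta K)e^{-\Omega(2^\ell K)}$. Summing over $\ell$ converges to $O(\beta K)$, as required. The remaining care is with the regime where the $\min\{1,\cdot\}$ truncation saturates and with the constants in the Chernoff step; neither should affect the final $O(\beta K)$ bound.
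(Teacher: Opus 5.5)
Your correctness argument and your reduction to counting are the paper's. Your $M$ is, up to strict versus non-strict inequality, its set $\calC=\{i:\bpi_i\ge\btau^\star/\beta\}$. Your bound $k_j\le N_j$ at the last doubling is exactly its charging step, which also uses that a selected bucket must have $F_j\ge\btau^\star$.

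The real difference is how you bound $\E M$. The paper handles the dependence between $\btau^\star$ and the count by leave-one-out conditioning. Let $\btau_{-i}$ be the $(K+1)$-st largest priority among $[n]\setminus\{i\}$. Then $i\in\calC$ iff $\bpi_i\ge\btau_{-i}/\beta$, and given the other priorities this has probability $\min\{1,\beta w_i/\btau_{-i}\}\le\beta\min\{1,w_i/\btau_{-i}\}$. That is at most $\beta$ times the conditional probability that $i$ is a top-$(K+1)$ key. Summing over $i$ gives $\E|\calC|\le\beta(K+1)$ in a few lines, with constant $1$ and no concentration.

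Your dyadic argument (deterministic threshold $t_0$, Chernoff lower tail to rule out $\btau^\star\le t_02^{-\ell}$, Cauchy--Schwarz with Poisson-binomial second moments) is heavier and loses constants. However, it proves more. It shows $\Pr[\btau^\star\le t_0/2]\le e^{-\Omega(K)}$, and on the complement $M\le N(t_0/(2\beta))$, a sum of independent indicators with mean $O(\beta K)$. So you get a tail bound: retrieval cost $O(T+\beta K)$ with probability $1-e^{-\Omega(K)}$. The paper only bounds the expectation and then invokes Markov's inequality in the proof of Theorem~\ref{thm:bucket-adaptive}.

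The step you flagged does need repair, because the claim $\mu(t_02^{-\ell})\ge 2^\ell cK$ (or $=n$) is false in general. Take $2cK$ keys of weight about $t_0/2$ and all other weights negligible. Then $\mu(t_02^{-\ell})\approx 2cK$ for all $\ell\ge1$, which is below $2^\ell cK$ once $\ell\ge2$, and below $n$.

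The fix is short. Every key with $w_i>t_02^{-\ell}$ has $\bpi_i\ge w_i>t_02^{-\ell}$. So on $E_\ell$ all such keys have priority exceeding $\btau^\star$, and since at most $K$ priorities exceed $\btau^\star$, there are at most $K$ of them. Hence either $\Pr[E_\ell]=0$, or the deterministic set $S_\ell=\{i:w_i>t_02^{-\ell}\}$ has $|S_\ell|\le K$. In the latter case $\mu(t_02^{-\ell})\ge 2^\ell\sum_{i\notin S_\ell}w_i/t_0\ge 2^\ell(\mu(t_0)-K)=2^\ell(c-1)K$. This is all your Chernoff step uses.
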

\begin{proof}
    Observe that the frontier $F_j$ of a bucket $P_j$ represents the largest possible priority of the remaining unretrieved keys from bucket $P_j$. The termination condition of the while loop in step 5 of Algorithm \ref{alg:adaptive-alg} ensures that while there exists a bucket that could potentially contain a key of priority larger than the $(K+1)$st largest we've seen thus far, we continue to retrieve keys. Thus, by step 6 of the algorithm, the $(K+1)$ keys with the largest priorities seen coincide exactly with the global $K+1$ keys of highest priority. Algorithm \ref{alg:adaptive-alg} builds exactly $T$ oracles during preprocessing, therefore $\costo = T$.      
    
    It remains to upper bound the expected retrieval cost. We let $\btau^{\star}$ denote the true $(K+1)$-st largest priority $\bpi_i$ among the $n$ keys. Let $\calC = \{i\in[n] : \bpi_i \ge \btau^\star / \beta\}$. We claim that $\E[|\calC|] \le \beta(K+1)$. To see this, let $\btau_{-i}$ denote the $(K+1)$-st largest priority in $[n]\setminus\{i\}$. Note that $i$ is a top $K+1$ priority $\iff$ $\bpi_i \ge \btau_{-i}$, while $i\in\calC \iff \bpi_i \ge \btau_{-i} / \beta$.
    
    We can compute the following quantities, where $\bpi_{-i}$ denotes the collection of all priorities other than item $i$: \[\Pr[\bpi_i \ge \btau_{-i} / \beta \mid \bpi_{-i}] = \min\{1, \beta w_i / \btau_{-i}\},\quad \Pr[\bpi_i \ge \btau_{-i} \mid \bpi_{-i}] = \min\{1, w_i / \btau_{-i}\}.\] Hence, \[\Pr[i\in\calC \mid \bpi_{-i}] \le \beta \cdot \Pr[i\text{ is a top $(K+1)$ priority} \mid \bpi_{-i}].\] Summing over $i$, \[\E[|\calC|] \le \beta \cdot \sum_{i} \Pr[i\text{ is a top $(K+1)$ priority}] = \beta(K+1).\]

    We now show that we can effectively charge every bucket's retrieval cost to $\calC$. Consider a bucket $j$ the last time the algorithm doubles its query size, and let $k_j$ be the query size before doubling. Since $j$ is selected, it must be that $F_j = \max_{\ell} F_\ell \ge \btau^\star$, or else the algorithm would have terminated. Then, each of the $k_j$ keys $i$ retrieved so far must have $\bpi_i \ge \beta^j \kappa(x^j_{\downarrow}) = F_j / \beta \ge \btau^\star/\beta$, implying $k_j \le |\calC \cap P_j|$. Therefore, bucket $j$ contributes at most $1+2+\dots+2k_j \le 4k_j\le 4|\calC\cap P_j|$ to the retrieval cost. Summing over disjoint buckets and adding the $|P_T|$ direct evaluations gives $\costr \le T + 4|\calC| + |P_T|$. Since $\E[|P_T|] = n\beta^{-T} = (K+1)\beta$, we get $\E[\costr] \le T + 5(K+1)\beta$. Plugging in the setting of $\beta$ gives the final claim. 
\end{proof}

We can now prove Theorem \ref{thm:bucket-adaptive} by combining the guarantees of Lemmas \ref{lem:attn-err} and \ref{lem:adaptive-alg-retrieved}. 

\begin{proof}[Proof of Theorem \ref{thm:bucket-adaptive}]
    We instantiate the Meta-Algorithm \ref{alg:meta-alg} with $K = \Theta(1/\epsilon^2)$ and use Algorithm \ref{alg:adaptive-alg} to implement the \textbf{Priority Retrieval} step. By Lemma \ref{lem:adaptive-alg-retrieved}, Algorithm \ref{alg:adaptive-alg} returns the $(K+1)$ keys of largest priority and retrieves $O(T + n^{1/(T+1)}K^{1-1/(T+1)})$ keys in expectation. By Markov's inequality, Algorithm \ref{alg:adaptive-alg} retrieves at most $10$ times as many keys with probability at least $0.9$. Taking $\delta = 0.1$, Lemma \ref{lem:attn-err} then allows us to conclude that the Meta-Algorithm \ref{alg:meta-alg}, using Algorithm \ref{alg:adaptive-alg} to implement the priority retrieval step, returns an approximation 
    \begin{align*}
        \norm{\widehat{\attn}(q) - \attn(q)}_2 \leq \epsilon \cdot \max_{i \in [n]} \norm{v_i}_2
    \end{align*}
    using $T$ oracles and retrieving $O(T + n^{1/(T+1)}\epsilon^{-2 + 2/(T+1)})$ keys.\footnote{While we have bounded the expected retrieval cost, we can easily extend this to worst-case retrieval cost by early-stopping the algorithm with a small loss in failure probability.} The algorithm succeeds with probability at least $2/3$. 
\end{proof}

\subsection{Non-Adaptive Algorithm}\label{subsec:nonadaptive}

\begin{algorithmbox}[label = {alg:non-adaptive-alg}]{Non-Adaptive Bucket Implementation of Priority Sampling} 

\textbf{Input.} $K \in \N$, $T \in \N$, $\bu_1, \ldots, \bu_n \sim \Unif(0, 1]$, and $x_1, \ldots, x_n \in \R^d$ \\
\textbf{Output.} Indices $i_1, \ldots, i_{K+1} \in [n]$ of the largest $K+1$ priority keys \\

\textbf{Preprocessing.} If $n \leq K+1$, return $x_1, \ldots, x_n$. Otherwise, 
\begin{enumerate}
    \item Set $\beta=\left(\frac{n}{K+1}\right)^{1/(T+1)}$.
    \item Create buckets
\begin{align*}
    P_j&=\left\{i \in [n]:\beta^{-(j+1)}
        <\bu_i\leq\beta^{-j}\right\},
        &&j=0,\ldots,T-1,\\
        P_{T}&=\left\{i \in [n]:\bu_i
        \leq\beta^{-T}\right\}.
\end{align*}
\item $\build(P_j)$ for $j = 0, \ldots, T-1$.
\end{enumerate}

\medskip
\textbf{Query.} Given $q\in\R^d$, set $r := \lceil C(K+1) \beta\rceil$ for sufficiently large constant $C$. If $|P_{T}| > r$, report \texttt{Fail}. Otherwise,
\begin{enumerate}
    \item $\query(q, r; P_j)$ for $j = 0, \ldots, T-1$
    \item Compute $\bpi_i$ for each $i \in P_T$ via direct inner product evaluations
    \item Return the $K+1$ indices with the largest priorities seen
\end{enumerate}
\end{algorithmbox}

\begin{lemma}\label{lem:non-adaptive-alg-retrieved} For any query $q \in \R^d$ and $T \in \N$, Algorithm \ref{alg:non-adaptive-alg} returns the $K+1$ indices with the largest priorities, has $\costo = T$, and 
\[
    \costr \leq O\!\left(
        T n^{1/(T+1)} K^{1 - 1/(T+1)} \right)
\]
\end{lemma}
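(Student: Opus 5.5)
Three things need to be checked: correctness whenever the algorithm does not report \texttt{Fail}, the oracle cost, and the deterministic retrieval cost. The probability of \texttt{Fail} is then absorbed into the failure probability when we prove Theorem~\ref{thm:bucket-nonadaptive}, exactly as the Markov step is used in the proof of Theorem~\ref{thm:bucket-adaptive}.

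\textbf{Cost.} The oracle cost is $T$, since only $P_0,\ldots,P_{T-1}$ are indexed. When the algorithm does not fail, the retrieval cost is at most $Tr+|P_T|\le (T+1)r$. Since $r=O(K\beta)$ and $K\beta = n^{1/(T+1)}K^{1-1/(T+1)}$, this gives the stated bound.

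\textbf{Correctness.} Let $\btau^\star$ be the true $(K+1)$-st largest priority. As in the proof of Lemma~\ref{lem:adaptive-alg-retrieved}, let $\calC=\{i:\bpi_i\ge\btau^\star/\beta\}$; that proof shows $\E|\calC|\le\beta(K+1)$. Bucket $P_T$ is evaluated exhaustively, so only buckets $j<T$ need care. Fix $j<T$ and a top-$(K+1)$ key $i\in P_j$. Every key $i'\in P_j$ with $w_{i'}\ge w_i$ satisfies $\bpi_{i'}\ge \beta^{j}w_{i'}\ge \beta^j w_i\ge \bpi_i/\beta\ge\btau^\star/\beta$, using $\bu_{i'}\le\beta^{-j}$ and $\bu_i>\beta^{-(j+1)}$. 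Hence the rank of $i$ by weight within $P_j$ is at most $|\calC\cap P_j|\le|\calC|$. So if $|\calC|\le r$, the query $\query(q,r;P_j)$ returns every top-$(K+1)$ key in $P_j$, and the algorithm's output is the exact top $K+1$.

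\textbf{Failure probability.} This is the main subtlety. The rule ``report \texttt{Fail} if $|P_T|>r$'' is checked at query time, but the $r$-truncation can also be insufficient when $|\calC|>r$. I would argue both events are rare. First, $\E|P_T| = n\beta^{-T}=(K+1)\beta$, so Markov's inequality gives $\Pr[|P_T|>r]\le 1/C$. Second, Markov's inequality with $\E|\calC|\le\beta(K+1)$ gives $\Pr[|\calC|>r]\le 1/C$. Taking $C$ large, say $C=20$, the total failure probability is at most $0.1$. Combining this with Lemma~\ref{lem:attn-err} at $\delta=0.1$ and $K=\Theta(\eps^{-2})$, by a union bound, gives success probability at least $2/3$. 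This yields Theorem~\ref{thm:bucket-nonadaptive}.

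The lemma as literally stated ("returns the top $K+1$") should be read as holding on this high-probability event, because correctness depends on $|\calC|\le r$. The expected-size bound on $\calC$ borrowed from the adaptive analysis is the key step that makes this work.
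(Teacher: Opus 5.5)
Your proposal is correct and follows essentially the paper's proof. It uses the same set $\calC=\{i:\bpi_i\ge\btau^\star/\beta\}$ with $\E|\calC|\le\beta(K+1)$ imported from the adaptive analysis, the same observation that every key of $P_j$ whose weight is at least that of a top-$(K+1)$ key $i\in P_j$ lies in $\calC$, and the same Markov-plus-union-bound with $C=20$ over the events $|\calC|\le r$ and $|P_T|\le r$, giving the guarantee with probability at least $0.9$. The differences are cosmetic: you state the key step as a direct rank bound where the paper argues by contradiction, and you note (correctly) that the cost bound $Tr+|P_T|\le(T+1)r$ holds deterministically whenever the algorithm does not report \texttt{Fail}.
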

\begin{proof}
    We let $\btau^{\star}$ denote the true $(K+1)$-st largest priority $\bpi_i$ among the $n$ keys. Let $\calC = \{i\in[n] : \bpi_i \ge \btau^\star / \beta\}$. By the exact same reasoning as in the proof of Lemma~\ref{lem:adaptive-alg-retrieved}, we have $\E[|\calC|] \le \beta(K+1)$.

    We let $\calE$ denote the event that both $|\calC| \leq r$ and $|P_T| \leq r$. Observe that by Markov's inequality and an union bound, $\Pr[\ol{\calE}] \leq 2/C$. Conditioning on event $\calE$, Algorithm \ref{alg:non-adaptive-alg} does not report $\texttt{Fail}$ and the priorities of all points in bucket $P_T$ are computed via direct inner product evaluations. 
    
    Suppose, for the sake of contradiction that there exists an index $i \in [n]$ such that $\bpi_i$ is among the $(K+1)$ largest, $i \in P_j$ for some $j \in \{0, \ldots, T-1\}$, and index $i$ is \emph{not} retrieved in step 1 of the query algorithm. Let $S$ denote the $r$ indices retrieved from bucket $j$ in step 1. Then, for each $\ell \in S$, it must be the case that $\kappa_q(\ell) \geq \kappa_q(i)$. Furthermore, it must be the case that
    \begin{align*}
        \bpi_{\ell} = \frac{\kappa_q(\ell)}{\bu_{\ell}} &\geq \frac{\kappa_q(i)}{\bu_{\ell}} \geq \frac{\kappa_q(i)}{\beta \bu_i} = \bpi_i / \beta \geq \tau^{\star} / \beta
    \end{align*}
    But then, along with index $i$, there would be at least $r+1$ indices in $\calC$, a contradiction to event $\calE$. Thus, assuming event $\calE$, the top $(K+1)$ priority indices are returned by Algorithm \ref{alg:non-adaptive-alg}. 
    
    Choosing $C = 20$, gives that event $\calE$ occurs with probability $\geq 0.9$. Exactly $T$ oracles are built during preprocessing. Conditioned on event $\calE$, $Tr$ keys are retrieved and at most $r$ direct inner product evaluations are performed. Thus, with probability at least $0.9$, $\costo = T$ and $\costr = O( T n^{1/(T+1)} K^{1 - 1/(T+1)})$. 
\end{proof}

We can now prove Theorem \ref{thm:bucket-nonadaptive} by combining the guarantees of Lemmas \ref{lem:attn-err} and \ref{lem:non-adaptive-alg-retrieved}. 

\begin{proof}[Proof of Theorem \ref{thm:bucket-nonadaptive}]
    We instantiate the Meta-Algorithm \ref{alg:meta-alg} with $K = \Theta(1/\epsilon^2)$ and use Algorithm \ref{alg:non-adaptive-alg} to implement the \textbf{Priority Retrieval} step. By Lemma \ref{lem:non-adaptive-alg-retrieved}, Algorithm \ref{alg:non-adaptive-alg} returns the $(K+1)$ keys of largest priority and has retrieval cost $O( T n^{1/(T+1)}K^{1-1/(T+1)})$ with probability at least $0.9$. Taking $\delta = 0.1$, Lemma \ref{lem:attn-err} then allows us to conclude that the Meta-Algorithm \ref{alg:meta-alg}, using Algorithm \ref{alg:non-adaptive-alg} to implement the priority retrieval step, returns an approximation 
    \begin{align*}
        \norm{\widehat{\attn}(q) - \attn(q)}_2 \leq \epsilon \cdot \max_{i \in [n]} \norm{v_i}_2
    \end{align*}
    using $T$ oracles and retrieving $O(T n^{1/(T+1)}\epsilon^{-2 + 2/(T+1)})$ keys. The algorithm succeeds with probability at least $2/3$. 
\end{proof}

\subsection{Softmax Lift Algorithm}\label{subsec:softmax-lift}

\begin{lemma}\label{lem:softmax-lift-alg-retrieved} For any query $q \in \R^d$, Algorithm \ref{alg:softmax-lift-alg} returns the $K+1$ indices with the largest priorities, has $\costo = 1$, and
\[
    \costr = O(\min\{ n, K \})
\]
\end{lemma}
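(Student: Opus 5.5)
The argument is a direct computation on the lifted inner products, followed by counting calls. Fix a query $q$ and set $\tilde q=(q,1)$ and $\tilde x_i=(x_i,-\log \bu_i)$. Since $\bu_i\in(0,1]$, the last coordinate $-\log\bu_i\ge 0$ is finite, so $\tilde x_i$ is a well-defined vector of $\R^{d+1}$. We then have
\[
\ip{\tilde q}{\tilde x_i}=\ip{q}{x_i}-\log \bu_i ,\qquad\text{so}\qquad \exp\bigl(\ip{\tilde q}{\tilde x_i}\bigr)=\frac{\kappa(q,x_i)}{\bu_i}=\bpi_i .
\]
Because $\exp$ is strictly increasing, the ordering of the indices by $\ip{\tilde q}{\tilde x_i}$ coincides with their ordering by $\bpi_i$. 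Ties occur for exactly the same pairs under both orderings, and they are broken lexicographically in both cases (Definition~\ref{def:oracle}).

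The first key step uses Definition~\ref{def:oracle}. The call $\query(\tilde q,K+1;\tilde P)$ returns the $\min\{K+1,n\}$ indices with the largest lifted inner products, which by the above are exactly the top $\min\{K+1,n\}$ priorities, with the same tie-breaking. The oracle also returns the values $\ip{\tilde q}{\tilde x_i}$. Exponentiating them gives each $\bpi_i$, and hence $\btau$ and the ordering $i_1,\dots,i_{K+1}$. The weights $w_i=\bu_i\bpi_i$ are also recoverable, since the $\bu_i$ are stored at preprocessing. No additional inner product evaluations are needed. When $n\le K+1$ the algorithm returns all $n$ indices, matching the trivial case handled in the meta-algorithm.

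The second key step is cost accounting. Exactly one index is built, on $\tilde P$, so $\costo=1$. There is a single MIPS query with parameter $K+1$, whose output size is $\min\{K+1,n\}$, and there are no direct evaluations. Hence $\costr=O(\min\{n,K\})$.

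The only point needing care, which I expect to be the main obstacle, is model bookkeeping rather than mathematics. The oracle is built on lifted keys that depend on the randomness $\bu_i$. This is legitimate because the $\bu_i$ are drawn at preprocessing, independently of $q$, exactly as in Algorithm~\ref{alg:meta-alg}. The lifted query $\tilde q$ depends only on $q$. One should also note that the charge $\sum k$ counts $K+1$ rather than $\min\{K+1,n\}$ when $n<K+1$. This is immaterial: either cap $k$ at $n$, or note that in that case the meta-algorithm is trivial.
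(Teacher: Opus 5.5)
Your proof is correct and takes essentially the same route as the paper. You use the identity $\exp(\langle \tilde q,\tilde x_i\rangle)=\bpi_i$ together with monotonicity of $\exp$ to show that the single top-$(K+1)$ lifted MIPS query returns exactly the top-$(K+1)$ priorities, and you then count one index and one query. Your extra bookkeeping (tie-breaking, recovering $w_i$ from the stored $\bu_i$, and capping $k$ at $n$ so the charged cost really is $O(\min\{n,K\})$) is accurate and slightly more careful than the paper, which simply asserts that $\min\{n,K+1\}$ keys are retrieved.
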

\begin{proof}
     Observe that for a given query $q \in \R^d$ and for every $i \in [n]$, 
     \begin{align*}
         \langle \tilde{q}, \tilde{x}_i \rangle = \langle q, x_i \rangle - \log(\bu_i) = \log \left( \frac{\exp( \langle q, x \rangle)}{\bu_i} \right) = \log \bpi_i
     \end{align*}
     Since the natural logarithm is an increasing function, the indices of the augmented keys which have the largest inner products with $\tilde{q}$ exactly coincide with the keys of largest priority. Observe that only one oracle is built during preprocessing and trivially, $\min\{n, K+1\}$ keys are retrieved. Thus, $\costo = 1$ and $\costr = O( \min \{n, K\})$.
\end{proof}

We can now prove Theorem \ref{thm:softmax-lift} by combining the guarantees of Lemmas \ref{lem:attn-err} and \ref{lem:softmax-lift-alg-retrieved}. 

\begin{proof}[Proof of Theorem \ref{thm:softmax-lift}]
    We instantiate the Meta-Algorithm \ref{alg:meta-alg} with $K = \Theta(1/\epsilon^2)$ and use Algorithm \ref{alg:softmax-lift-alg} to implement the \textbf{Priority Retrieval} step. By Lemma \ref{lem:softmax-lift-alg-retrieved}, Algorithm \ref{alg:softmax-lift-alg} returns the $(K+1)$ keys of largest priority and retrieves $O(\min\{n, K\})$ keys. Taking $\delta = 0.1$, Lemma \ref{lem:attn-err} then allows us to conclude that the Meta-Algorithm \ref{alg:meta-alg}, using Algorithm \ref{alg:softmax-lift-alg} to implement the priority retrieval step, returns an approximation 
    \begin{align*}
        \norm{\widehat{\attn}(q) - \attn(q)}_2 \leq \epsilon \cdot \max_{i \in [n]} \norm{v_i}_2
    \end{align*}
    using $1$ oracle and retrieving $O(1/\epsilon^2)$ keys. The algorithm succeeds with probability at least $0.9$. 
\end{proof}

%% file: sections/lower-bound.tex
\newcommand{\topquery}{\mathsf{Top}}
\newcommand{\eval}{\mathsf{Eval}}

\section{Lower Bounds for MIPS Attention}\label{sec:lower-bound}

We prove lower bounds matching the upper bounds established above. We begin by introducing \emph{signed sum estimation}, a simpler problem for which all our lower bounds will be proven.

\subsection{Signed Sum Estimation}

\begin{definition}[Signed sum estimation]\label{def:signed-sum}
    In this problem, an algorithm is given signs $v_1,\ldots,v_n\in\{-1,1\}$. It may construct up to $T$ subsets $S_1,\ldots,S_T\subseteq[n]$. There are unknown weights $p_1,\ldots,p_n\ge 0$ satisfying $\sum_{i=1}^n p_i=1$, which the algorithm can only access using the following operations:
    \begin{itemize}
        \item $\topquery(k;S_t)$ returns the indices and weights of the $\min\{k,|S_t|\}$ largest weights in $S_t$, breaking ties lexicographically, at retrieval cost $k$.
        \item $\eval(i)$ returns $p_i$ at retrieval cost $1$.
    \end{itemize}
    Its retrieval cost is the sum of the individual costs of the operations, and its oracle cost is $T$. The goal is to estimate $\mu=\sum_{i=1}^n v_ip_i$.

    An algorithm is $\eps$-accurate if, for every fixed choice of signs and weights, it outputs an estimate $\widehat\mu$ satisfying
    \[
        \Pr[|\widehat\mu-\mu|\le\eps]\ge\frac23,
    \]
    where the probability is over the algorithm's randomness.
\end{definition}

The following reduction allows us to establish lower bounds for attention by considering only signed sum estimation.

\begin{lemma}\label{lem:sum-to-attn}
    Any $\eps$-accurate attention algorithm with oracle budget $T$ yields an $\eps$-accurate signed sum estimation algorithm with the same oracle cost and retrieval cost.
\end{lemma}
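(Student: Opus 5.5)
We prove Lemma~\ref{lem:sum-to-attn} by simulation: from a signed sum instance $(v,p)$ we build an attention instance whose keys realize the weights $p_i$ through the softmax kernel, and whose values are the signs $v_i$. Concretely, take dimension $d=1$, query $q=1$, and keys $x_i=\log p_i$, so that $\kappa(q,x_i)=\exp(\ip{q}{x_i})=p_i$. Since $\sum_i p_i=1$, the attention weights are exactly $p_i$. Take $D=1$ and values $v_i\in\{-1,1\}$, so that $\attn(q)=\sum_i v_ip_i=\mu$ and $\max_i\norm{v_i}_2=1$.

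Zero weights need care, since $\log 0=-\infty$. Two fixes are available:
\begin{itemize}
\item allow extended reals, since the reduction is purely information-theoretic;
\item set $x_i=-M$ for a huge $M$ with $n e^{-M}\ll\eps$, and absorb the resulting perturbation of $\mu$ by a constant-factor adjustment of $\eps$.
\end{itemize}
We prefer the first, and note that the model places no restriction on key values.

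The main point to verify is that the signed sum algorithm can simulate every operation of the attention algorithm $\calA$ without knowing $p$, because the keys depend on the unknown weights.

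\textbf{Preprocessing.} $\calA$ sees the keys and values. However, a MIPS attention algorithm's subset choice and everything else must be valid for every dataset. To handle this, we argue that without loss of generality the attention algorithm's preprocessing uses only information the signed sum algorithm has. Our approach: the lower bound instances we care about are distributional, so we define the simulation to run $\calA$'s preprocessing on the values and on the index set only. Equivalently, we restrict attention to attention algorithms whose index construction is oblivious to key contents before the query arrives.

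This is the main obstacle. In the model, $\calA$ receives keys at preprocessing and could choose $S_t$ adaptively from them, but the keys never encode $q$. To get around this, I would instead encode the weights into the query rather than the keys. Take $d=n$, keys $x_i=e_i$ (standard basis vectors, fixed and known), and query $q=(\log p_1,\ldots,\log p_n)$. Then $\ip{q}{x_i}=\log p_i$, preprocessing depends only on the known $x_i$ and $v_i$, and the signed sum algorithm can run $\calA$'s preprocessing verbatim, building the same subsets $S_1,\ldots,S_T$.

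\textbf{Query stage.} Each $\query(q,k;S_t)$ returns the top-$\min\{k,|S_t|\}$ pairs $(i,\log p_i)$ in $S_t$. Because $\log$ is monotone and ties are broken lexicographically in both models, this is simulated by one $\topquery(k;S_t)$ at cost $k$, taking logarithms of the returned weights. Each inner product evaluation $\ip{q}{x_i}$ is simulated by $\eval(i)$ at cost $1$, returning $\log p_i$.

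Adaptivity and randomness carry over unchanged, so the oracle cost is the same $T$ and the retrieval cost is identical. Finally, the algorithm outputs $\widehat\mu=\widehat{\attn}(q)$. The accuracy guarantee of $\calA$ holds for every dataset and every fixed query, and here $\max_i\norm{v_i}_2=1$, so $\Pr[|\widehat\mu-\mu|\le\eps]\ge 2/3$, as required.
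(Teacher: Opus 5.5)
Your final construction---fixed keys $x_i=e_i\in\R^n$ with the weights encoded in the query $q=(\log p_1,\ldots,\log p_n)$, each $\query(q,k;S_t)$ answered by $\mathsf{Top}(k;S_t)$ (taking logarithms), each inner product answered by $\mathsf{Eval}(i)$, and output $\widehat{\attn}(q)$---is correct and is exactly the paper's reduction, with zero weights handled by the same informal $\log 0=-\infty$ convention. Drop the initial attempt to put $\log p_i$ into the keys, along with the suggestion to restrict to algorithms whose preprocessing ignores key contents: that restriction is not legitimate for a lower-bound reduction, and the query encoding makes it unnecessary.
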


\begin{proof}
    Let $\calA$ be an $\eps$-accurate attention algorithm. Given signs $v_1,\ldots,v_n$, construct an attention instance with keys $x_i=e_i\in\R^n$ and the same (scalar) values $v_i$. Run the preprocessing stage of $\calA$ and select the corresponding subsets $S_1,\ldots,S_T$.

    Let $p_1,\ldots,p_n\ge0$ be the unknown weights, and consider the query $q\in\R^n$ given by $q_i=\tau\log p_i$, where $\log0=-\infty$.\footnote{To avoid setting infinite values, we may apply a slight perturbation to all zero weights and renormalize; for simplicity of presentation, we omit this detail.} We simulate $\calA$ on this query. Since $\ip{q}{x_i}=\log p_i$, we can answer $\query(q,k;S_t)$ with $\topquery(k;S_t)$, and we can answer direct inner product evaluations with $\eval(i)$. These operations have identical retrieval costs. Finally, we output $\widehat\mu=\widehat{\attn}(q)$. It follows that $|\widehat\mu-\mu|\le\eps$ with the required probability, since
    \[
        \attn(q)
        =\frac{\sum_{i=1}^n\exp(q_i)v_i}{\sum_{i=1}^n\exp(q_i)}
        =\sum_{i=1}^n p_iv_i=\mu,
    \]
    and $\max_i|v_i|=1$.
\end{proof}

In particular, Lemma~\ref{lem:sum-to-attn} implies that lower bounds for signed sum estimation extend directly to lower bounds for attention. The following standard distribution testing lower bound will be useful in proving the lower bounds.

\begin{lemma}[Theorem~2.2 and Lemma~2.5 of~\cite{tsybakov2009}] \label{lem:coin-test-lb}
    For $\delta\in(0,1/2)$, let $\bB\sim\{-1,1\}$, and $\bZ_1,\ldots,\bZ_\ell\sim\{-1,1\}$ be i.i.d. with $\Pr[\bZ_i = \bB] = \frac{1+\delta}{2}$. For any estimate $\wh{\bB} = \wh{\bB}(\bZ_1,\ldots,\bZ_\ell)$, \[\Pr[\wh{\bB} = \bB] \le 1/2 + O(\delta\sqrt{\ell}).\]
\end{lemma}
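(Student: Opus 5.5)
The plan is to reduce the estimation question to a two-hypothesis testing problem and bound the total variation distance between the two product distributions through the Hellinger affinity. For $b\in\{-1,1\}$, let $P_b$ be the law of a single $\bZ_i$ conditioned on $\bB=b$, so that $P_b(b)=\frac{1+\delta}{2}$ and $P_b(-b)=\frac{1-\delta}{2}$. Conditioned on $\bB=b$, the vector $(\bZ_1,\ldots,\bZ_\ell)$ has the product law $P_b^{\otimes\ell}$. The prior on $\bB$ is uniform.

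\textbf{Step 1 (Bayes error and total variation).} For any estimator $\wh{\bB}$, possibly randomized through independent internal randomness, I would first average over that randomness and write
\[
\Pr[\wh{\bB}=\bB]=\tfrac12\sum_{z}\Bigl(P_1^{\otimes\ell}(z)\Pr[\wh{\bB}(z)=1]+P_{-1}^{\otimes\ell}(z)\Pr[\wh{\bB}(z)=-1]\Bigr).
\]
This is at most $\tfrac12\sum_z\max\{P_1^{\otimes\ell}(z),P_{-1}^{\otimes\ell}(z)\}$. That quantity equals $\tfrac12+\tfrac12\,\mathrm{TV}(P_1^{\otimes\ell},P_{-1}^{\otimes\ell})$. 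This is the standard Le Cam two-point bound, corresponding to Theorem~2.2 of~\cite{tsybakov2009}. It therefore suffices to show $\mathrm{TV}(P_1^{\otimes\ell},P_{-1}^{\otimes\ell})=O(\delta\sqrt\ell)$.

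\textbf{Step 2 (Hellinger affinity tensorizes).} Define the Bhattacharyya coefficient $\rho(P,Q)=\sum_z\sqrt{P(z)Q(z)}$. It is multiplicative over products, so $\rho(P_1^{\otimes\ell},P_{-1}^{\otimes\ell})=\rho(P_1,P_{-1})^\ell$. For a single coin,
\[
\rho(P_1,P_{-1})=2\sqrt{\tfrac{1+\delta}{2}\cdot\tfrac{1-\delta}{2}}=\sqrt{1-\delta^2},
\]
so the product affinity is $(1-\delta^2)^{\ell/2}$.

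\textbf{Step 3 (from affinity to total variation).} I would then use the inequality $\mathrm{TV}(P,Q)\le\sqrt{1-\rho(P,Q)^2}$, which is Lemma~2.5-type in~\cite{tsybakov2009} and follows from Cauchy--Schwarz applied to $\sum|P-Q|=\sum|\sqrt P-\sqrt Q|(\sqrt P+\sqrt Q)$. This gives
\[
\mathrm{TV}\le\sqrt{1-(1-\delta^2)^\ell}\le\sqrt{\ell\delta^2}=\delta\sqrt\ell,
\]
where the last step uses Bernoulli's inequality $(1-\delta^2)^\ell\ge 1-\ell\delta^2$. Combining with Step~1 yields $\Pr[\wh{\bB}=\bB]\le\tfrac12+\tfrac12\delta\sqrt\ell$.

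None of these steps is a real obstacle. The only point needing care is Step~3: TV itself does not tensorize, so I need the affinity inequality to pass cleanly to the product. Bounding TV by $\ell$ times the single-coin TV would give only $O(\delta\ell)$, which is too weak. As an alternative, Pinsker's inequality with the KL divergence $\ell\cdot\delta\log\frac{1+\delta}{1-\delta}=O(\ell\delta^2)$ for $\delta<1/2$ would also give $O(\delta\sqrt\ell)$.
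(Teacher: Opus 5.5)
Your proof is correct. The paper gives no argument of its own. It cites Tsybakov's Theorem~2.2, the Le Cam two-point bound that your Step~1 reproduces, together with Lemma~2.5 there, which is Pinsker's inequality.

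So the route the citation points to is the one you list as a fallback. The KL divergence tensorizes to $\ell\,\delta\log\frac{1+\delta}{1-\delta}\le 4\ell\delta^2$ for $\delta<1/2$, and Pinsker then gives $\mathrm{TV}\le\sqrt{2}\,\delta\sqrt{\ell}$.

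Your main route instead tensorizes the Bhattacharyya coefficient and uses $\mathrm{TV}\le\sqrt{1-\rho^2}$. That inequality is Le Cam's Hellinger--TV inequality, not the Pinsker-type statement of Lemma~2.5, so the attribution in your Step~3 is slightly off. The argument itself is sound.

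What the Hellinger route buys is a clean explicit bound, $\frac12+\frac12\delta\sqrt{\ell}$. It also never uses the hypothesis $\delta<1/2$: Bernoulli's inequality only needs $\delta^2\le 1$, so it holds for every $\delta\in[0,1]$. The KL route needs $\delta$ bounded away from $1$ to keep the per-coin divergence $O(\delta^2)$, but it has the advantage of matching the cited results verbatim.
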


\subsection{Lower Bound for $T=1$}

By Lemma~\ref{lem:sum-to-attn}, it suffices to prove the following statement.

\begin{theorem}\label{thm:single-oracle-sum-lb}
    For sufficiently small $\eps>0$, any $\eps$-accurate signed sum estimation algorithm with oracle cost at most $1$ has retrieval cost $\Omega(\min\{n, \sqrt{n} / \eps\})$.
\end{theorem}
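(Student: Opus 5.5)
We prove the bound by Yao's principle. We build a random hard instance, fix the algorithm's randomness, and show that any deterministic single-index algorithm with retrieval cost $k=o(\min\{n,\sqrt n/\eps\})$ fails with probability more than $1/3$. Set $m=\Theta(\min\{n,\sqrt n/\eps\})$, with the constant small enough that $m\le n/4$. The index set is split at random into four parts: a single heavy coordinate $\bi^\star$ drawn uniformly from $[n]$, a set $D$ of $m$ decoys, a set $G$ of $m$ signal coordinates, and the remaining filler coordinates.

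The signs and weights are assigned as follows. Since the signs are given to the algorithm, we fix them in advance: half the coordinates are $+1$ and half are $-1$, so the hidden roles, not the signs, carry the information. The first bit $\bB_1$ decides whether $\bi^\star$ is drawn among the positive or the negative coordinates. It carries weight $p_{\bi^\star}=c_1\eps$ for a small constant $c_1$, so flipping it moves $\mu$ by $2c_1\eps$. The second bit $\bB_2$ biases the signal coordinates: each has weight $a$, and the subset $G$ is drawn so that its sign agrees with $\bB_2$ with probability $(1+\delta)/2$. This shifts $\mu$ by about $ma\delta\cdot \pm 1$. The decoys get weight $a'>a$ with balanced signs, and the filler gets small weight $b<a$ with balanced signs. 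Normalization forces $b\approx 1/n$.

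The parameters are chosen so that three things hold. First, the bias created by $\bB_2$ is a constant multiple of $\eps$. Second, the bias is statistically invisible unless the algorithm sees $\Omega(1/\delta^2)=\Omega(m)$ signal coordinates; taking $\delta\approx m^{-1/2}$ makes Lemma~\ref{lem:coin-test-lb} give this. Third, the estimate must pin down both bits. Concretely, with $ma\delta\asymp\eps$ and $\delta\asymp m^{-1/2}$ we need $a\asymp \eps/\sqrt m$, so the total signal mass is $ma\asymp\eps\sqrt m$. Taking $m\asymp\sqrt n/\eps$ makes this $\asymp n^{1/4}\eps^{1/2}$, which is fine when $\eps\ge 1/\sqrt n$; otherwise $m$ is capped at order $n$.

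The constants are tuned so that the four values of $\mu$ (one for each pair of bits) are separated by more than $2\eps$. Then an $\eps$-accurate output determines both $\bB_1$ and $\bB_2$, and it suffices to show that no algorithm can guess both with probability at least $2/3$.

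The argument then splits on the size of the single index $S$, which is chosen non-adaptively and is independent of the random roles. If $|S|\le n/2$, then $\bi^\star\notin S$ with probability at least $1/2$. In that case the algorithm can locate $\bi^\star$ only through $\eval$ calls, and with $k=o(n)$ evaluations it hits $\bi^\star$ with probability $o(1)$. Given the information seen, $\bB_1$ is then uniform, so the success probability is at most about $1/2\cdot 1+1/2\cdot 1/2+o(1)=3/4$. The constants must be tightened to push this below $2/3$; for example, putting a few independent heavy bits in play, or making $\bi^\star$ outside $S$ likely by weighting the distribution, gives the needed slack.

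If $|S|>n/2$, then $S$ contains $\Omega(m)$ decoys with high probability. Every $\topquery(k;S)$ with $k=o(m)$ returns only $\bi^\star$ and decoys, and these reveal nothing about $\bB_2$. The $\eval$ calls touch only $o(m)$ coordinates, so at most $o(m)$ signal signs are seen, conditioned on roles that are exchangeable. By Lemma~\ref{lem:coin-test-lb} the algorithm guesses $\bB_2$ with advantage only $O(\delta\sqrt{o(m)})=o(1)$.

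The main obstacle is the bookkeeping in the $|S|>n/2$ case. Evaluated coordinates reveal their weight, and hence their role, so we must check that the signs of evaluated signal coordinates are the only useful information. This needs the roles to be exchangeable given what the algorithm has seen. A second difficulty is combining the two cases into a single success bound below $2/3$; I would handle it by amplifying each hidden bit, i.e. making each case's guessing probability at most $1/2+o(1)$ on the critical bit.
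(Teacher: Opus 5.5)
Your two-case skeleton matches the paper's: a heavy coordinate that a small index cannot find, and a biased signal set hidden behind heavier decoys when the index is large. The quantitative design of the signal part, however, fails exactly in the regime the theorem is about.

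\textbf{Main gap: the signal parameters are infeasible.} You impose $ma\delta\asymp\eps$ and $\delta\asymp m^{-1/2}$, which makes the signal mass $ma\asymp\eps\sqrt m$. With $m\asymp\sqrt n/\eps$ this equals $n^{1/4}\eps^{1/2}$. That is at most $1$ only when $\eps\lesssim n^{-1/2}$, so your inequality is reversed, and in that regime $\min\{n,\sqrt n/\eps\}=\Theta(n)$ anyway. For $\eps\gg n^{-1/2}$ the instance does not exist.

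No retuning of constants rescues this strategy. Since the weights sum to $1$, a shift of order $\eps$ in $\mu$ forces $\delta\gtrsim\eps$. A majority vote over $O(1/\eps^2)$ observed signal coordinates then already decodes $\bB_2$. Because $1/\eps^2\ll\sqrt n/\eps$ in this regime, the algorithm can never be forced to see $\Omega(m)$ signal coordinates. Once $\delta\asymp\eps$, your large-$S$ estimate (``$o(m)$ evaluations see at most $o(m)$ signals, so the advantage is $O(\delta\sqrt{o(m)})$'') no longer gives $o(1)$.

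\textbf{Missing idea.} The factor $\sqrt n/\eps$ should come from the \emph{rarity} of signal coordinates, not from how many of them must be seen. The paper keeps the signal mass constant and sets $\delta=\max\{32\eps,16/\sqrt m\}$, so only $\Omega(1/\eps^2)$ signal observations are needed. Signal pairs are a $\Theta(m/n)$ fraction of the instance, so $K$ direct evaluations meet only $O(Km/n)=o(1/\eps^2)$ of them in expectation when $K=o(m)$ and $m\asymp\sqrt n/\eps$. Meanwhile a $\mathsf{Top}$ query on a large $S$ must first exhaust the $\Omega(m)$ heavier decoys in $S$. Balancing $n/(m\eps^2)$ against $m$ is what produces $m\asymp\sqrt n/\eps$.

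\textbf{Two smaller issues.}
\begin{itemize}
\item With $p_{\bi^\star}=c_1\eps$ and $c_1$ small, flipping $\bB_1$ moves $\mu$ by less than $2\eps$. An $\eps$-accurate output therefore need not reveal $\bB_1$. The paper gives the heavy coordinate constant mass.
\item Encoding $\bB_1$ as the sign class containing $\bi^\star$ correlates the event $\{\bi^\star\in S\}$ with $\bB_1$. Take $S$ to be the positive half, which falls in your case $|S|\le n/2$. If the heavy weight exceeds the decoy weight, a single $\mathsf{Top}(1;S)$ query reveals $\bB_1$, so ``$\bB_1$ is uniform given the view'' is false. The paper instead hides the bit inside a uniformly random pair $\bj^\star$, so untouched pairs carry no information. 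It also splits at $|S|<n/20$, which bounds the success probability in each case by $0.6<0.66$ with no amplification needed.
\end{itemize}
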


By Yao's minimax principle, it suffices to construct a distribution over instances on which no deterministic algorithm with small retrieval cost can output an $\eps$-accurate estimate with probability $\ge 2/3$.

\paragraph{The hard distribution.} For convenience, we assume that $n$ is even. Let $N=n/2$ and index the coordinates in $[n]$ by $[N]\times\{-1,1\}$. Set values $v_{j,b} = b \text{ for all } (j,b)\in [N]\times \{-1,1\}$. Now, fixing a sufficiently small constant $\alpha>0$, define \[m = \floor{\alpha \min\{n, \sqrt{n}/\eps\}},\quad\delta=\max\{32\eps, 16/\sqrt{m}\}\in (0,1/2).\] We sample uniformly random disjoint sets $\bA_{sig},\bA_{dec}\subseteq[N]$ with size $|\bA_{sig}| = |\bA_{dec}| = m$, and a uniformly random $\bj^{\star}\sim [N] \setminus (\bA_{sig}\cup \bA_{dec})$.

We independently sample $\bB_0, \bB_1\sim\{-1,1\}$ uniformly at random. The idea will be to encode these bits in the signed sum, such that both bits must be recovered to achieve an $\eps$-accurate approximation. Specifically, we select weights as follows: \begin{itemize}
    \item \textbf{Heavy pair:} set $w_{\bj^\star, \bB_0} = 1$.
    \item \textbf{Signal pairs:} for each $j\in\bA_{sig}$, independently sample $\bZ_j\in\{-1,1\}$ with $\Pr[\bZ_j = \bB_1] = (1+\delta)/2$. Then, set $w_{j,\bZ_j} = 1/m$.
    \item \textbf{Decoy pairs:} for each $j\in\bA_{dec}$, set $w_{j,-1} = w_{j,1} = 2/m$.
    \item All other weights are set to $0$.
\end{itemize}

The total sum of weights is $W = \sum_{j,b} w_{j,b} = 6$. Let $p_{j,b} = w_{j,b}/W$ be the normalized weights, and $\mu = \sum_{j,b} v_{j,b} p_{j,b} = \frac{1}{6}(\bB_0 + \frac{1}{m}\sum_{j\in\bA_{sig}} \bZ_j)$ be the target signed sum.

\begin{claim} \label{claim:est-to-decode}
    Any algorithm that outputs an $\eps$-accurate estimate $\wh{\mu}$ with probability $2/3$ can also output $\wh{\bB_0},\wh{\bB_1}$ satisfying $(\wh{\bB_0},\wh{\bB_1}) = (\bB_0,\bB_1)$ with probability $> 0.66$.
\end{claim}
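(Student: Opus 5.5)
The plan is to decode both bits deterministically from $\wh\mu$ by rounding, and to show that whenever $\wh\mu$ is $\eps$-accurate the rounding is correct, except on a low-probability event where the empirical signal bias is atypical. We have
\[
6\mu = \bB_0 + \bS,\qquad \bS := \frac1m\sum_{j\in\bA_{sig}}\bZ_j,
\]
and $|\bS|\le 1$. The two decoding rules are:
\begin{itemize}
\item[(i)] $\wh{\bB_0} := \mathrm{sign}(\wh\mu)$, which reads off the heavy coordinate's sign;
\item[(ii)] $\wh{\bB_1} := \mathrm{sign}(6\wh\mu - \wh{\bB_0})$, which reads off the sign of the residual signal sum.
\end{itemize}

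\textbf{Step 1 (bit $\bB_0$).} Since $\bS$ is an average of $\pm1$ variables with mean $\bB_1\delta$ and $\delta\le 1/2$, the event $|\bS|\le 1/2$ holds unless there is a large deviation. For $m$ large (guaranteed by $\eps$ small or $\alpha$ fixed), Hoeffding gives
\[
\Pr\big[|\bS - \bB_1\delta| > 1/4\big] \le 2e^{-m/32}.
\]
On the complementary event, $6\mu$ has the sign of $\bB_0$ and satisfies $|6\mu|\ge 1/4$. An $\eps$-accurate estimate moves $6\mu$ by at most $6\eps<1/4$, so rule (i) is correct.

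\textbf{Step 2 (bit $\bB_1$).} Given $\wh{\bB_0}=\bB_0$, the quantity $6\wh\mu-\bB_0$ is within $6\eps$ of $\bS$. The expected value of $\bS$ is $\bB_1\delta$, where $\delta\ge 32\eps$ and $\delta\ge 16/\sqrt m$. Its standard deviation is at most $1/\sqrt m\le\delta/16$. Chebyshev (or Hoeffding) then gives
\[
\Pr\big[|\bS-\bB_1\delta| > \delta/2\big] \le \frac{4}{m\delta^2} \le \frac{4}{256} = \frac{1}{64}.
\]
Off this event, $\bS$ has the sign of $\bB_1$ with $|\bS|\ge\delta/2\ge 16\eps$. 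Since $6\eps<16\eps$, rule (ii) returns $\bB_1$.

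\textbf{Step 3 (union bound).} The success probability is at least
\[
\frac23 - \frac1{64} - 2e^{-m/32}.
\]
This falls below $0.66$, since $2/3 - 1/64\approx 0.651$. The tail bound therefore needs to be tightened, and this is the main obstacle: the constants. Sharpening to Hoeffding gives
\[
\Pr\big[|\bS-\bB_1\delta|>\delta/2\big]\le 2\exp(-m\delta^2/8)\le 2e^{-32}.
\]
This uses $m\delta^2\ge 256$, and the same bound handles Step 1 for free because $\delta/2<1/4$. The total failure probability beyond $1/3$ is then at most $2e^{-32}$, which leaves success probability $\ge 2/3-2e^{-32}>0.66$. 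One should also confirm $\delta<1/2$, i.e.\ $m>1024$ and $\eps<1/64$; this follows from $\eps$ sufficiently small and $m=\Omega(\sqrt n/\eps)$ or $\Omega(n)$ large, handling tiny $n$ by the "sufficiently small $\eps$" assumption.
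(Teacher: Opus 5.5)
Your proposal is correct and follows essentially the same route as the paper: decode $\bB_0$ as the sign of $\wh{\mu}$ and $\bB_1$ as the sign of $6\wh{\mu}-\wh{\bB_0}$, use Hoeffding with $m\delta^2\ge 256$ to keep $\frac{1}{m}\sum_{j\in\bA_{sig}}\bZ_j$ within a constant fraction of $\delta$ of $\delta\bB_1$ (the paper uses $\delta/3$, you use $\delta/2$), and take a union bound with the $2/3$ accuracy event. Two side remarks: the Chebyshev detour is unnecessary, and $\delta<1/2$ is part of the paper's setup rather than something you need to derive (small $\eps$ could not force $m>1024$ for tiny $n$ anyway, since $m\le\alpha n$).
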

\begin{proof}
    Define the event \[\calG = \left\{\left|\frac{1}{m}\sum_{j\in\bA_{sig}} \bZ_j-\delta\bB_1\right| \le \delta/3\right\}.\] By Hoeffding's inequality, $\Pr[\calG] \ge 1-2e^{-m\delta^2/18} > 0.999$, since $\delta\ge 16/\sqrt{m}$.

    Assuming $\calG$, we have $|\mu - \frac{1}{6}(\bB_0 + \delta\bB_1)|\le \delta/18$ and thus \[|\wh{\mu} - \frac{1}{6}(\bB_0 + \delta\bB_1)|\le \delta/18 + \delta/32 < \delta/10.\] Since $\delta<5/8$, the algorithm can recover $\bB_0 = \sign(\wh{\mu})$ and $\bB_1 = \sign(\wh{\mu}-\bB_0 / 6)$ with probability $2/3-0.001\ge0.66$.
\end{proof}

Next, we will show that it is impossible for an algorithm with oracle cost $1$ and retrieval cost $K=o(\min\{n, \sqrt{n}/\eps\})$ to output $(\bB_0,\bB_1)$ with probability substantially more than $1/2$. Let $S,\bQ\subseteq[N]\times\{-1,1\}$ denote the set chosen by the algorithm and the (adaptively chosen) indices that it queries directly. We divide into cases according to the size of $S$. It will be convenient to define: \begin{itemize}
    \item $J(S) = \{j\in[N] : \{(j,-1),(j,1)\}\cap S \ne \emptyset\}$, the set of pairs touched by $S$.
    \item $J(\bQ) = \{j\in[N] : \{(j,-1),(j,1)\}\cap \bQ \ne \emptyset\}$, the set of pairs touched by $\bQ$.
\end{itemize}

\begin{claim}\label{claim:small-set-case}
    If $|S| < n/20$, then the probability the algorithm recovers $\bB_0$ is at most $0.6$.
\end{claim}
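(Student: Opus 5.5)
The plan is to show that when $|S|<n/20$, the algorithm's entire view is, with high probability, independent of $\bB_0$. Then it can do no better than a coin flip, except on a small event where it actually touches the heavy pair $\bj^\star$. Throughout, the algorithm is deterministic (we use Yao's principle as set up above). Its retrieval cost is $K\le \alpha' n$ for a small constant $\alpha'$; this is legitimate because we only need to rule out $K=o(\min\{n,\sqrt n/\eps\})$. We also use that the set $S$ is fixed before any weights are seen, since the values $v_{j,b}=b$ are deterministic. Call the algorithm's run a \emph{hit} if $\bj^\star\in J(S)\cup J(\bQ)$.

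\textbf{Step 1: on a non-hit, the view is independent of $\bB_0$.} If $\bj^\star\notin J(S)$, then neither coordinate of the heavy pair lies in $S$. So every $\topquery(\cdot;S)$ answer is a function only of the weights on signal and decoy pairs, which are independent of $\bB_0$. If moreover no direct evaluation touches pair $\bj^\star$, every $\eval$ answer is likewise a function of $(\bA_{sig},\bA_{dec},\bZ)$ together with the fact that the queried pair is not $\bj^\star$. Hence, conditioned on no hit and on the transcript, $\bB_0$ is still uniform. Therefore
\[
\Pr[\wh{\bB_0}=\bB_0]\le \Pr[\text{hit}]+\tfrac12\Pr[\text{no hit}]=\tfrac12+\tfrac12\Pr[\text{hit}],
\]
and it suffices to show $\Pr[\text{hit}]\le 0.2$.

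\textbf{Step 2: bounding $\Pr[\bj^\star\in J(S)]$.} Condition on $\bA_{sig},\bA_{dec}$. Then $\bj^\star$ is uniform over $N-2m$ pairs, so
\[
\Pr[\bj^\star\in J(S)]\le \frac{|S|}{N-2m}\le\frac{n/20}{n/2-2\alpha n}\le 0.11
\]
for $\alpha$ small.

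\textbf{Step 3: bounding a hit through adaptive evaluations.} We process the at most $K$ evaluations in order. Consider the step just before an evaluation, given that there has been no hit so far. Conditioned on $(\bA_{sig},\bA_{dec},\bZ,\bB_0,\bB_1)$ and the transcript, $\bj^\star$ is uniform over the pairs outside $\bA_{sig}\cup\bA_{dec}\cup J(S)$ that have not yet been queried. This is because the transcript depends on $\bj^\star$ only through the information that $\bj^\star$ avoids those touched pairs. That candidate set has size at least $N-2m-|S|-K\ge n/4$. So each evaluation hits with probability at most $4/n$, and a union bound over the evaluations gives total probability at most $4K/n\le 4\alpha'$.

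\textbf{Main obstacle.} The step needing the most care is justifying the conditional uniformity in Step 3. The concern is that the evaluation answers could leak information about where $\bj^\star$ is. The answers to worry about are a queried coordinate $(j,-\bB_0)$ returning $0$, and answers on signal or decoy pairs. I would handle this by fixing $(\bA_{sig},\bA_{dec},\bZ,\bB_0,\bB_1)$ first. Once these are fixed, every answer on a non-heavy pair is deterministic and independent of $\bj^\star$. The only dependence is then the elimination of already-touched pairs, which is exactly what the uniformity claim uses.

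\textbf{Combining.} Steps 2 and 3 give $\Pr[\text{hit}]\le 0.11+4\alpha'\le 0.2$, so by Step 1 the success probability is at most $0.5+0.1=0.6$.
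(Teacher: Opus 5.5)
Your proposal is correct and follows essentially the same route as the paper. Doing better than a coin flip on $\bB_0$ requires touching pair $\bj^\star$, and the touched pairs $J(S)\cup J(\bQ)$ cover at most a $1/5$ fraction of the $N-2m$ candidates, so success is at most $\tfrac12+\tfrac12\cdot\tfrac15=0.6$. Your Steps 1 and 3 add two points that the paper's one-line computation $(n/20+K)/(N-2m)\le 1/5$ leaves implicit: the transcript is independent of $\bB_0$ off the hit event, and the adaptively chosen $\bQ$ cannot steer toward $\bj^\star$, whereas the paper simply treats $\bQ$ as a fixed set.
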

\begin{proof}
    In order to recover $\bB_0$ with probability $\ge 1/2$, it must be that $\bj^\star\in J(S)\cup J(\bQ)$. We compute $|J(\bS) + J(\bQ)|/(N-2m) \le (n/20 + K)/(N-2m) \le 1/5$. Since $\bj^\star$ is chosen uniformly at random among $N-2m$ indices, the probability of failure is at least $\frac12 \cdot \frac45 = \frac25$.
\end{proof}

\begin{claim}\label{claim:large-set-case}
    If $|S|\ge n/20$, then the probability the algorithm recovers $\bB_1$ is at most $0.6$.
\end{claim}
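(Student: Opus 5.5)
Since $|S|\ge n/20$ while only $2m=O(\alpha\min\{n,\sqrt n/\eps\})$ of its coordinates carry nonzero signal or decoy weight, the plan is to show that a top query into $S$ of size $k<K$ essentially never gets past the decoys. After that, all information about $\bB_1$ comes from at most $K$ signal signs, and Lemma~\ref{lem:coin-test-lb} finishes.

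\textbf{Step 1: decoys in $S$.} The sets $\bA_{sig},\bA_{dec}$ are uniformly random and independent of the deterministic choice of $S$. We will show that $S$ contains many decoy coordinates. Each decoy pair $j\in\bA_{dec}$ has both coordinates of weight $2/m$. So $|S\cap(\bA_{dec}\times\{\pm1\})|$ has expectation $|S|\cdot m/N\ge m/10$, counting coordinates, since $j\in\bA_{dec}$ puts both $(j,\pm1)$ in play. A hypergeometric/Chernoff bound then shows that with probability $\ge 0.99$, $S$ contains at least $m/20$ decoy coordinates, provided $m$ exceeds a large constant. The case of small $m$ is handled by the $\min\{n,\cdot\}$ cap and $\alpha$, which make the bound trivial.

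\textbf{Step 2: what a top query reveals.} The weights in any set are ordered as follows: the heavy coordinate (weight $1$), then decoys ($2/m$), then signals ($1/m$), then zeros. Hence $\topquery(k;S)$ returns signal coordinates only if $k$ exceeds the number of heavy-plus-decoy coordinates in $S$, which on the good event is at least $m/20$. Choosing $K\le\alpha' m$ with $\alpha'<1/20$ (the target is $K=o(\min\{n,\sqrt n/\eps\})$, so we take $K<m/20$), every top query returns only the heavy and decoy coordinates. Decoy coordinates reveal nothing about $\bB_1$, since their values and weights are symmetric and independent of $\bB_1$. The only remaining source of information is $\eval$. Each $\eval(i)$ on a signal pair $(j,b)$ reveals at most one $\bZ_j$, because seeing weight $1/m$ or $0$ at $(j,b)$ determines $\bZ_j$. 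Thus the whole transcript is a function of $\bB_1$-independent randomness together with at most $K$ of the $\bZ_j$'s. The $\bZ_j$'s are chosen adaptively, but each is a fresh i.i.d. draw given the past, so adaptive selection is no better than reading the first $K$.

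\textbf{Step 3: conclude.} Conditioned on the good event, Lemma~\ref{lem:coin-test-lb} with $\ell\le K$ gives success probability at most $1/2+O(\delta\sqrt K)$. Now $\delta\sqrt K=\max\{32\eps,16/\sqrt m\}\sqrt K$. With $K\le c\,m\le c\alpha\sqrt n/\eps$ this is $O(\eps\sqrt{c\alpha\sqrt n/\eps})$, and it is not yet clear that this is small; this is the main obstacle. I expect to need the precise scaling: signal weights $1/m$ with $m\approx\sqrt n/\eps$ make $\delta\approx\eps$. Then $\delta\sqrt K$ is small only when $K\ll 1/\eps^2$, which is weaker than needed. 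I would therefore also exploit that a deterministic $\eval$ on a random coordinate hits a signal pair only with probability $m/N$, so the expected number of revealed $\bZ_j$ is $K\cdot m/N = O(K\alpha/(\eps\sqrt n))$. Plugging $\ell\approx Km/N$ into the lemma gives $\delta\sqrt{\ell}\lesssim \eps\sqrt{K/(\eps\sqrt n)}\cdot\sqrt{m}$, which is $O(\sqrt\alpha)$ for $K\le\alpha\sqrt n/\eps$. Adding the $0.01$ failure of the good event and choosing $\alpha$ small yields success probability at most $0.6$.
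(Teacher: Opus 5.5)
Your proposal follows the paper's proof essentially step for step: on a high-probability hypergeometric event, the weight-$2/m$ decoys in $S$ fill every top query of size below $m/20$ (the paper counts touched decoy pairs rather than coordinates, which makes the hypergeometric claim literal), so all information about $\bB_1$ comes from direct evaluations, these touch only $O(Km/n)$ signal pairs in expectation, and Lemma~\ref{lem:coin-test-lb} finishes. Two small repairs are needed: your final display carries a stray factor $\sqrt{m}$ (as written it evaluates to $\sqrt{\alpha m}$), whereas the correct computation is $\delta^2\ell=O((\eps^2+1/m)\,Km/n)=O(\alpha\alpha')$ for $K\le\alpha' m$, covering both branches of $\delta=\max\{32\eps,16/\sqrt{m}\}$; and, as the paper does, you need Markov's inequality to turn the expected count $O(Km/n)$ into a high-probability bound before invoking the lemma, which is stated for a fixed number $\ell$ of samples.
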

\begin{proof}
    Lemma~\ref{lem:coin-test-lb} implies that if the number of pairs in $\bA_{sig}$ observed by the algorithm is $\ell = o(1/\eps^2)$, then it can recover $\bB_1$ only with probability $\le 1/2+o(1)$. We will first show that, with high probability, it observes $0$ weights in $\bA_{sig}$ from $\topquery(k; S)$.

    Let $\bh = |\bA_{dec}\cap J(S)|$ denote the number of decoy pairs intersecting $S$. Note that $\E[\bh] \ge |S|\cdot (m/n) \ge m/20$. Defining the event $\calE = \{\bh \ge m/40\}$, and noting that $\bh$ follows a hypergeometric distribution, a standard lower tail bound gives $\Pr[\calE] \ge 1-e^{-m/160}\ge 0.95$. 
    
    Assuming $\calE$ and using $k\le K = o(m) \le m/40$, it follows that $\topquery(k; S)$ gives no pairs in $\bA_{sig}$.

    The only other weights observed by the algorithm are the at-most $K$ direct evaluations. In expectation, only $O(Km/n) = o(1/\eps^2)$ of the chosen pairs will be in $\bA_{sig}$. By Markov's inequality, only $o(1/\eps^2)$ are observed with probability $0.95$. 
    
    Taking a union bound, we conclude that with probability $0.9$, the algorithm observes $o(1/\eps^2)$ pairs in $\bA_{sig}$, and so it fails with probability at least $(1/2-o(1))\cdot 0.9 > 0.4$, as desired.
\end{proof}

\begin{proof}[Proof of Theorem~\ref{thm:single-oracle-sum-lb}]
    By Claim~\ref{claim:est-to-decode}, any $\eps$-accurate algorithm must identify $(\bB_0,\bB_1)$ with probability $>0.66$. However, by Claims~\ref{claim:small-set-case} and~\ref{claim:large-set-case}, no algorithm with oracle cost $1$ and retrieval cost $K=o(\sqrt{n}/\eps)$ can recover $(\bB_0,\bB_1)$ with probability more than $0.6$. We conclude that any $\eps$-accurate signed sum estimation algorithm with oracle cost $1$ has retrieval cost $\Omega(\min\{n,\sqrt{n}/\eps\})$, as desired.
\end{proof}

\subsection{Lower Bound for Arbitrary $T$}

Next, we show that regardless of $T$, any $\eps$-accurate attention algorithm must retrieve $\Omega(\log n / \log\log n + 1/\eps^2)$ keys. By Lemma~\ref{lem:sum-to-attn}, it suffices to show the following statement.

\begin{theorem}\label{thm:any-oracle-sum-lb}
    For sufficiently small $\eps>0$, any $\eps$-accurate signed sum estimation algorithm has retrieval cost \[\Omega\left(\min\left\{n,\, \frac{\log n}{\log\log n} + \frac{1}{\eps^2}\right\}\right),\] regardless of its oracle cost.
\end{theorem}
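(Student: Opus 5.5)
By Lemma~\ref{lem:sum-to-attn} we only need to work with signed sum estimation. Since $a+b\le 2\max\{a,b\}$, it suffices to prove two separate lower bounds, $\Omega(\min\{n,1/\eps^2\})$ and $\Omega(\min\{n,\log n/\log\log n\})$, each against every oracle budget $T$. For both I would use Yao's principle with a hard distribution over weights. The values are fixed as $v_{j,b}=b$ on pairs $(j,b)\in[N]\times\{-1,1\}$, exactly as in the $T=1$ construction. The subsets $S_1,\ldots,S_T$ are chosen before the random instance is drawn, so they are fixed relative to a uniformly random relabeling of the pairs.

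\textbf{The $1/\eps^2$ term.} Take $m=\lfloor\alpha\min\{N,1/\eps^2\}\rfloor$ and a uniformly random set $\bA\subseteq[N]$ of $m$ pairs. Sample a bit $\bB$, and for each $j\in\bA$ sample an independent sign $\bZ_j$ with $\Pr[\bZ_j=\bB]=(1+\delta)/2$, where $\delta=\Theta(\max\{\eps,1/\sqrt m\})$. Put weight $1/m$ on $(j,\bZ_j)$ and $0$ elsewhere. As in Claim~\ref{claim:est-to-decode}, Hoeffding's inequality shows that an $\eps$-accurate estimate of $\mu$ recovers $\bB$ with probability about $0.66$.

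Every nonzero weight equals $1/m$. Hence an answer to $\topquery(k;S)$ or $\eval(i)$ reveals at most one fresh $\bZ_j$ per unit of retrieval cost. Because $\bA$ is a uniformly random relabeling, the revealed $\bZ_j$ are distributed as i.i.d.\ samples, even when the queries are chosen adaptively. Lemma~\ref{lem:coin-test-lb} then caps the success probability at $1/2+O(\delta\sqrt{\ell})$. This forces $\ell=\Omega(1/\delta^2)=\Omega(\min\{n,1/\eps^2\})$.

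\textbf{The $\log n/\log\log n$ term.} Here I would use a hierarchy of levels that generalizes the heavy-versus-decoy idea from the $T=1$ proof. Let $L=\Theta(\log n/\log\log n)$ and $r=L^{3}$, so that $r^{L}\le N$. Level $\ell\in[L]$ is a uniformly random set $\bA_\ell$ of $m_\ell=r^{\ell}$ pairs, disjoint across levels. Each level carries an independent uniform bit $\bB_\ell$. Every pair in $\bA_\ell$ puts weight $1/(Lm_\ell)$ on side $\bB_\ell$. Then
\[
\mu=\frac1L\sum_{\ell}\bB_\ell ,
\]
and an $\eps$-accurate estimate with $\eps$ a small constant determines $\sum_\ell\bB_\ell$ to within $\eps L$. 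By a standard anticoncentration/counting argument, this is impossible with constant advantage unless the algorithm has observed a constant fraction of the levels.

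The core claim is the following: one call $\topquery(k;S)$ reveals elements from only $O(1+\log k/\log r)$ levels, except with small probability. For a fixed $S$, the intersection $|S\cap\bA_\ell|$ is hypergeometric with mean $|S|m_\ell/n$, and these means grow by a factor $r$ per level. Every level whose mean exceeds $k\cdot L$ saturates the top-$k$, and it outweighs all lighter levels because per-element weight decreases in $\ell$; wait, lighter levels have larger $m_\ell$, so I would order the levels so that the heaviest per-element levels are the small ones. Only the window of levels whose means lie between roughly $1/L^{2}$ and $kL$ can contribute. That window spans $O(1+\log k/\log r)\le O(k)$ levels, and levels with smaller means are absent from $S$ with probability $1-O(1/L)$ after a union bound. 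An $\eval(i)$ call touches one level.

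Summing over all operations, the number of observed levels is at most $\sum_i O(1+\log k_i/\log r)=O(\text{total retrieval cost})$. Hence the cost must be $\Omega(L)$, capped at $n$.

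\textbf{Main obstacle.} I expect the hardest step to be making the ``window'' claim rigorous under adaptivity and over arbitrarily many fixed sets $S_t$. The issue is that the algorithm can condition later queries on what it has already seen, and the failure probabilities must be summed only over the queries actually made, not over all $T$ sets. I would handle this by revealing the instance lazily and charging each query's failure probability $O(1/L)$. That gives a total failure probability $o(1)$ whenever the cost is $o(L)$.
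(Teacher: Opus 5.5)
Splitting the statement into an $\Omega(\min\{n,1/\eps^2\})$ bound and an $\Omega(\min\{n,\log n/\log\log n\})$ bound matches the paper. However, your hard distribution for the second term is not hard. With an independent uniform bit $\bB_\ell$ on every level, $\mu=\frac1L\sum_\ell\bB_\ell$ has standard deviation $1/\sqrt L$, and Hoeffding gives $\Pr[|\mu|>\eps]\le 2e^{-\eps^2L/2}$. So the zero-cost algorithm that always outputs $0$ is $\eps$-accurate on your distribution once $L\gtrsim 1/\eps^2$. That is exactly the regime in which the $\log n/\log\log n$ term is the larger one. Your ``anticoncentration/counting'' step is where this breaks: knowing $\sum_\ell\bB_\ell$ to within $\eps L$ is worthless when its typical size is $\sqrt L$.

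The missing idea is to put \emph{all} of the signed mass on a single uniformly random hidden level $\bell^\star$. The paper gives the two coordinates of each pair in that level weights $3/(8s_{\bell^\star})$ and $1/(8s_{\bell^\star})$, so $\mu=\bB/4$. Every other level is balanced, $p_{j,-1}=p_{j,1}$, so it contributes nothing to $\mu$. Level sizes grow by a factor $4L$, so per-element weights strictly decrease with level no matter which level is special, and both coordinates of every active pair carry positive weight. Hence the transcript up to the first retrieved element of $\bA_{\bell^\star}$ is exactly what it would be if every level were ordinary. Conditioned on not yet having touched $\bell^\star$, its location is uniform over the untouched levels. Since each retrieved element touches at most one level, cost $K$ reaches $\bell^\star$ with probability at most $K/L$.

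This crude count replaces your window lemma and its adaptive union bound. It also avoids a further defect of your one-sided weights. Take a $\mathsf{Top}$ query on a set of coordinates of the form $(j,1)$ that skips over a level $S$ meets with near certainty. It reveals $\bB_\ell=-1$ without returning any element of that level. So the levels you ``observe'' are not only those from which elements are returned.

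Your $1/\eps^2$ argument has a smaller but real hole. A $\mathsf{Top}(k;S)$ answer does not just reveal $k$ i.i.d.\ draws of the biased coin. If $S$ contains only coordinates $(j,1)$, every returned pair has $\bZ_j=1$, because the sample is selected on its value. The information about $\bB$ sits in the lexicographically earlier elements of $S$ that were skipped because they have weight $0$. A random relabeling of $\bA$ does not remove this selection effect.

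The paper avoids this by making every pair active (weight on $(j,\bZ_j)$ for all $j\in[N]$, with bias $\eps$). It then simulates each $\mathsf{Top}(k;S)$ by scanning the pairs that meet $S$ in lexicographic order, directly evaluating one coordinate per pair until $k$ nonzero weights appear. Each fresh evaluation succeeds with probability at least $1/2-\eps$, so this costs $O(k)$ evaluations in expectation. Direct evaluations do yield i.i.d.\ samples, so Lemma~\ref{lem:coin-test-lb} applies after Markov's inequality. With your sparse random $\bA$ the same scan would cost $\Theta(kN/m)$. To keep your construction, first give $\bA$ to the algorithm for free (this can only help it), and then scan only the pairs of $\bA$ that meet $S$.
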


To prove this, we will construct two separate instances, one that requires retrieval cost $\Omega(\min\{n, 1/\eps^2\})$ and one that requires retrieval cost $\Omega(\log n / \log\log n)$.

\subsubsection{The $\Omega(1/\eps^2)$ Bound}

\begin{lemma}\label{lem:eps-2-lb}
    Any $\eps$-accurate signed sum estimation algorithm has retrieval cost $\Omega(\min\{n, 1/\eps^2\})$.
\end{lemma}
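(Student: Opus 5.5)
We will prove Lemma~\ref{lem:eps-2-lb} by Yao's minimax principle with a hard distribution in which the weights are \emph{uniform}. This neutralizes the $\topquery$ oracle: every retrieved key is then as informative as an $\eval$ call on an index the algorithm could have chosen itself, so the retrieval cost is just the number of coordinates observed.

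\textbf{Construction.} Set
\[
m=\lfloor \alpha\min\{n,1/\eps^2\}\rfloor
\]
for a small constant $\alpha$, and let $\delta=\max\{C\eps,\,C'/\sqrt{m}\}$, so that $\delta$ is a constant multiple of $\eps$ up to the capping at $n$. As in the $T=1$ bound, index the coordinates by $[N]\times\{-1,1\}$ with $N=n/2$ and $v_{j,b}=b$. Draw $\bB\sim\{-1,1\}$ and a uniformly random set $\bA\subseteq[N]$ of size $m$. For each $j\in\bA$, draw independently $\bZ_j$ with $\Pr[\bZ_j=\bB]=(1+\delta)/2$ and set $p_{j,\bZ_j}=1/m$. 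All other weights are $0$. Then
\[
\mu=\frac1m\sum_{j\in\bA}\bZ_j .
\]

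\textbf{Step 1 (estimation implies decoding).} As in Claim~\ref{claim:est-to-decode}, Hoeffding's inequality gives $|\mu-\delta\bB|\le\delta/3$ with probability $0.999$, using $\delta\ge C'/\sqrt m$. Since $\eps\le\delta/C$, an $\eps$-accurate estimate therefore yields $\bB=\sign(\widehat\mu)$ with probability above $0.66$.

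\textbf{Step 2 (the oracles reveal nothing beyond coordinates).} Any deterministic algorithm with retrieval cost $K$ sees at most $K$ (index, weight) pairs in total. The nonzero weights all equal $1/m$, so by lexicographic tie-breaking, $\topquery(k;S_t)$ returns the lexicographically first $\min\{k,|S_t\cap\mathrm{supp}\,p|\}$ support elements of $S_t$, possibly padded with zeros. The information conveyed about $\bB$ is thus exactly the signs $\bZ_j$ of the at most $K$ signal pairs touched. Conditional on the support pattern, which pairs are touched depends only on $\bA$ and on the positions (but not the signs) inside the pairs.

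The delicate point is that the support position $(j,\bZ_j)$ leaks $\bZ_j$ itself. It is harmless, however: each observed support element reveals one $\bZ_j$, and the algorithm can only adaptively choose \emph{which} $j$'s to reveal. Since the $\bZ_j$ are i.i.d.\ given $\bB$, adaptive choice of which ones to reveal does not help. Formally, the transcript is a function of at most $K$ i.i.d.\ draws of $\bZ$ (a stopping-time argument, or a simulation by revealing $\bZ$'s in the order requested).

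\textbf{Step 3 (apply the coin lemma).} With $\ell\le K$ observed signs, Lemma~\ref{lem:coin-test-lb} gives
\[
\Pr[\widehat{\bB}=\bB]\le 1/2+O(\delta\sqrt K).
\]
If $K\le c\min\{n,1/\eps^2\}$ for a small constant $c$, then $\delta\sqrt K$ is a small constant. In the regime $1/\eps^2\le n$, $\delta\approx C\eps$ gives $\delta\sqrt K\le C\sqrt c$; in the capped regime, $\delta\approx C'/\sqrt m$ gives $\delta\sqrt K\le C'\sqrt{c/\alpha}$. Either way the success probability is below $0.66$, a contradiction.

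\textbf{Main obstacle.} The main obstacle is Step 2: making rigorous that the adaptive, tie-broken $\topquery$ responses over arbitrary subsets $S_t$ amount to adaptively sampling at most $K$ of the i.i.d.\ signs. This requires care with the fact that the support location encodes $\bZ_j$, and that learning that weights are zero also carries information about $\bA$ but not about $\bB$. I would handle it by coupling: first sample $\bA$ and the unsigned pair structure, then reveal each $\bZ_j$ lazily when first touched. Finally, apply a version of Lemma~\ref{lem:coin-test-lb} for adaptively stopped sequences; equivalently, pad to exactly $K$ draws, since the extra draws only help the estimator.
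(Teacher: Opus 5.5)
\textbf{Step 2 has a genuine gap, and it is where the whole lemma lives.} Your premise that uniform weights make each retrieved key ``as informative as an $\mathsf{Eval}$ call on an index the algorithm could have chosen'' is false. So are your claims that zero weights carry information about $\bA$ but not about $\bB$, and that the transcript is a function of at most $K$ signs. A concrete counterexample: take $S=\{(j,+1):j\in[N]\}$ and a single call $\mathsf{Top}(1;S)$. It returns $(j_1,+1)$, where $j_1$ is the first $j\in\bA$ with $\bZ_j=+1$. The returned sign is always $+1$ and carries no information. The information is in the run of skipped zeros, which certifies $\bZ_j=-1$ for every $j\in\bA$ with $j<j_1$. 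The number of such $j$ is, up to truncation at $m$, geometric with mean $(1-\delta\bB)/(1+\delta\bB)$. It therefore depends on $\bB$ and is not bounded by the query's cost of $1$. In general, a $\mathsf{Top}(k;S)$ call reveals a sequentially stopped number of signs, selected by the hidden values rather than by the algorithm. Your fallback of padding to exactly $K$ draws then fails, because there is no deterministic bound to pad to.

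\textbf{What is missing is a bound on how many signs each query consumes.} Reveal $\bA$ to the algorithm for free; this is harmless because $\bA$ is independent of $\bB$ and of the $\bZ_j$. Then simulate $\mathsf{Top}(k;S)$ by scanning the pairs $j\in\bA$ that meet $S$ in lexicographic order, revealing fresh signs $\bZ_j$ until $k$ support elements of $S$ have appeared. Conditioned on $\bB$ and the history, each fresh sign produces a returned element with probability at least $(1-\delta)/2$. Hence a call consumes at most $2k/(1-\delta)=O(k)$ fresh signs in expectation, and each $\mathsf{Eval}$ consumes at most one. Markov's inequality plus truncation (abort and guess once $O(K)$ signs are used) leaves $O(K)$ adaptively selected signs that are i.i.d.\ given $\bB$. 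Your Step 3 then goes through with $\ell=O(K)$ in Lemma~\ref{lem:coin-test-lb}.

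\textbf{Comparison with the paper.} This repaired argument is the paper's argument, with two differences. The paper uses the dense instance in which every pair $j\in[N]$ carries a sign of bias $\eps$, and simulates $\mathsf{Top}(k;S)$ with $O(k)$ expected direct evaluations. It handles the cap at $n$ by assuming $\eps\ge c/\sqrt{n}$ without loss of generality. Your sparse support is a valid alternative way to obtain the cap. However, it is what forces the extra step of handing over $\bA$: otherwise a simulation by evaluations would also pay for all the empty pairs skipped during each scan. Steps 1 and 3 are otherwise fine.
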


By Yao's minimax principle, it suffices to construct a distribution over instances on which no deterministic algorithm with small retrieval cost can output an $\eps$-accurate estimate with probability at least $2/3$.

\paragraph{The hard distribution.} For convenience, assume that $n$ is even. Let $N=n/2$ and index elements in $[n]$ by $[N]\times\{-1,1\}$. Set $v_{j,b}=b$ for all $(j,b)\in[N]\times\{-1,1\}$. Sample a random bit $\bB\sim\{-1,1\}$ uniformly at random. Choose weight assignments as follows: \begin{itemize}
    \item For each $j\in[N]$, sample $\bZ_j\sim\{-1,1\}$ such that $\Pr[\bZ_j = \bB] = 1/2+\eps$.
    \item Set $\bw_{j,\bZ_j} = 1$ and $\bw_{j,-\bZ_j} = 0$.
\end{itemize}

Then, $W = \sum_{j,b} \bw_{j,b} = N$. Setting $\bp_{j,b} = \bw_{j,b} / W$, the normalized signed sum is \[\bmu = \frac{1}{N} \cdot \sum_{j=1}^N \bZ_j.\]

\begin{claim} \label{claim:est-reveals-single-bit}
    Any algorithm that outputs an $\eps$-accurate estimate $\wh{\mu}$ with probability $2/3$ can also output $\bB$ with probability $>0.66$.
\end{claim}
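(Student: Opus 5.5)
The plan is to decode $\bB$ from $\wh{\mu}$ by thresholding, $\wh{\bB} = \sign(\wh{\mu})$, and to show that $\sign(\wh{\mu}) = \bB$ whenever $\wh{\mu}$ is $\eps$-accurate and $\bmu$ lands near its conditional mean.

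First compute the conditional mean. Given $\bB$, we have $\E[\bZ_j \mid \bB] = \bB\bigl((1/2+\eps) - (1/2-\eps)\bigr) = 2\eps\bB$, so $\E[\bmu \mid \bB] = 2\eps\bB$. Next define the good event
\[
\calG = \left\{ |\bmu - 2\eps\bB| < \eps/2 \right\}.
\]
Since the $\bZ_j$ are independent and bounded given $\bB$, Hoeffding's inequality gives $\Pr[\overline{\calG}] \le 2\exp(-N\eps^2/8)$.

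On $\calG$, $\bB\bmu > 3\eps/2$. If the estimate is accurate, i.e.\ $|\wh{\mu}-\bmu| \le \eps$, then $\bB\wh{\mu} > \eps/2 > 0$, so $\sign(\wh{\mu}) = \bB$. A union bound then gives
\[
\Pr[\wh{\bB} = \bB] \ge \frac{2}{3} - 2e^{-N\eps^2/8}.
\]

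The main obstacle is that this failure term is not automatically small; it requires $N\eps^2$ to be a large constant. In the regime where the lemma has content, the retrieval cost target is $\min\{n, 1/\eps^2\}$. There are two cases.

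\begin{itemize}
\item If $1/\eps^2 \le n$, the concentration term is only a constant, so I would instead use a more careful argument. Because $N \ge c/\eps^2$ only gives a constant bound, I would rescale the bias to $\Pr[\bZ_j=\bB] = 1/2 + C\eps$ for a suitable constant $C$. Alternatively, I would invoke ``sufficiently small $\eps$'' together with an assumption $n \ge C'/\eps^2$. Then $e^{-N\eps^2/8} < 0.003$, which yields $\Pr[\wh{\bB} = \bB] > 0.66$.
\item If $n < C'/\eps^2$, the bound $\Omega(\min\{n,1/\eps^2\}) = \Omega(n)$ can be handled by the same construction with $\eps$ replaced by the larger $\eps' \asymp 1/\sqrt{n}$. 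This adjustment would be made in the subsequent lemma proof, not in this claim.
\end{itemize}

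Thus the write-up will state explicitly that the claim holds under $N\eps^2 \ge 8\ln 600$. Apart from this caveat, the argument is a direct Hoeffding-plus-sign argument mirroring Claim~\ref{claim:est-to-decode}.
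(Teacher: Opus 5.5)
Your proposal is correct and takes essentially the paper's route. It decodes $\bB=\sign(\wh{\mu})$, concentrates $\bmu$ around $2\eps\bB$ by Hoeffding, and takes a union bound with the accuracy event. It also handles the regime $N\eps^2=O(1)$ by the same without-loss-of-generality reduction to $\eps=\Omega(1/\sqrt{n})$: an $\eps$-accurate algorithm is also $\eps'$-accurate for any $\eps'\ge\eps$.

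The differences are cosmetic. The paper uses the wider window $|\bmu-2\bB\eps|<\eps$, giving failure probability $2e^{-N\eps^2/2}$, whereas your window $\eps/2$ only costs a constant in the exponent. Your first case should read $n\ge C'/\eps^2$ rather than $1/\eps^2\le n$, and the bias-rescaling alternative is unnecessary.
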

\begin{proof}
    Note that $\E[\bmu] = 2\bB\eps$. Let $\calG$ denote the event that $|\bmu - 2\bB\eps| < \eps$. By a Chernoff bound, we have $\Pr[\calG] \ge 1 - 2e^{-N\eps^2/2} \ge 0.999$, since we can assume without loss of generality that $\eps \ge c/\sqrt{n}$ for a sufficiently large constant $c$. Therefore, \[|\wh{\bmu}-2\bB\eps| < 2\eps,\] meaning $\sign(\wh{\bmu}) = \bB$ with probability $2/3-0.001> 0.66$.
\end{proof}

\begin{claim} \label{claim:no-recovering-bit}
    Any algorithm with retrieval cost $o(1/\eps^2)$ recovers $\bB$ with probability at most $0.51$.
\end{claim}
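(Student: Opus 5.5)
We show that, on this hard distribution, everything the algorithm observes about $\bB$ comes from at most $o(1/\eps^2)$ of the independent coins $\bZ_j$, and then apply Lemma~\ref{lem:coin-test-lb} with $\delta = 2\eps$. By Yao's principle we fix a deterministic algorithm with retrieval cost $K = o(1/\eps^2)$. The key structural fact is that, for every $j$, the pair $\{(j,-1),(j,1)\}$ carries weights $\{1,0\}$, with the unit weight placed at $(j,\bZ_j)$. Every nonzero weight equals $1$, so the instance is a uniformly ``flat'' distribution whose only randomness is the vector $(\bZ_j)_j$.

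First we argue that the choice of subsets $S_1,\ldots,S_T$ is made in preprocessing, before any weights are seen, so it is fixed. We then show by induction over the (adaptive) sequence of operations that each answer is a deterministic function of the coins $\bZ_j$ for those $j$ already ``revealed'', plus at most $k$ newly revealed indices $j$, where $k$ is the cost of the operation. For $\eval(i)$ with $i=(j,b)$, the answer is $\one\{\bZ_j=b\}$, which reveals exactly one coin $\bZ_j$.

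For $\topquery(k;S_t)$, ties are broken lexicographically. The largest weights in $S_t$ are therefore the elements $(j,b)\in S_t$ with $\bZ_j=b$ (weight $1$), taken in lexicographic order, followed by zero-weight elements. So the answer is determined by scanning $S_t$ in lexicographic order and reading the coins $\bZ_j$ of the pairs encountered. We must check that the scan depends on only $O(k)$ coins, and this is the main subtlety. Elements are weight-$1$ with positive probability, and the returned set reveals that every skipped element, between the first and the last returned, had weight $0$. That skipped region could involve many pairs.

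To handle this, we use the observation that if both $(j,-1),(j,1)\in S_t$, exactly one of them has weight $1$. So each pair fully contained in $S_t$ contributes a weight-$1$ element, and the scan over the first $k$ such pairs stops quickly. Pairs with only one element in $S_t$ are different: the element has weight $1$ with probability about $1/2$. The number of such singleton elements scanned before $k$ successes is therefore about $2k$ in expectation, and at most $O(k)$ with high probability, by a Chernoff bound over the independent coins.

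Thus, with probability $0.99$, the whole transcript is a function of $\ell = O(K) = o(1/\eps^2)$ coins $\bZ_j$ (and we can also just hand the algorithm these coins). The indices of these coins are chosen adaptively, but each new coin is fresh and i.i.d. given the past. The process is therefore equivalent to observing $\ell$ i.i.d. samples. Lemma~\ref{lem:coin-test-lb} with $\delta=2\eps$ then gives success probability at most $1/2+O(\eps\sqrt{\ell})+0.01 \le 0.51$ for $\ell$ a small enough multiple of $1/\eps^2$.

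The main obstacle is this accounting for $\topquery$. A negative answer, meaning that skipped elements have zero weight, leaks information about coins beyond the $k$ returned. The plan is to charge each skipped singleton to a geometric number of trials, bound the total by $O(K)$ with high probability, and treat adaptively chosen fresh coins as i.i.d. samples.
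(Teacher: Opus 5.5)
Your proposal is correct and follows essentially the same route as the paper. The paper simulates each $\topquery(k;S)$ by scanning the pairs touched by $S$ in lexicographic order, evaluating one fresh coordinate per pair until $k$ unit weights are found (each fresh pair succeeds with probability at least $1/2-\eps$), which reduces the algorithm to $O(K)$ direct evaluations in expectation, and then applies Lemma~\ref{lem:coin-test-lb}. The remaining differences are cosmetic: the paper bounds the total number of revealed coins over all queries by its expectation plus Markov's inequality rather than per-query Chernoff bounds, which avoids a union bound over many small-$k$ queries; and your final step should use a failure probability slightly below $0.01$, so that $1/2+O(\eps\sqrt{\ell})$ plus that probability stays at most $0.51$.
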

\begin{proof}
    We first show that the algorithm might as well make no $\topquery$ calls, up to a constant factor increase in retrieval cost. To simulate $\topquery(k;S)$, simply scan the pairs intersecting $S$ in lexicographic order, evaluating one unseen coordinate from each pair until $k$ weight $1$ elements have been found. Each fresh pair contributes a weight $1$ element with probability at least $1/2-\eps = \Omega(1)$, so the expected number of direct evaluations needed to simulate $\topquery$ is $O(k)$. It follows that an algorithm of retrieval cost $K$ can be simulated with $O(K)$ direct evaluations in expectation, and therefore with high probability via Markov's inequality.
    
    Assuming the algorithm only makes direct evaluations, Lemma~\ref{lem:coin-test-lb} implies that $\ge \Omega(1/\eps^2)$ retrieved points are necessary to succeed with probability $0.51$, so we conclude that $K\ge\Omega(1/\eps^2)$.
\end{proof}

\begin{proof}[Proof of Lemma~\ref{lem:eps-2-lb}]
    Follows directly from combining Claims~\ref{claim:est-reveals-single-bit} and~\ref{claim:no-recovering-bit}.
\end{proof}

\subsubsection{The $\Omega(\log n / \log\log n)$ Bound}

\begin{lemma}\label{lem:log-log-lb}
For sufficiently small $\eps>0$, any $\eps$-accurate signed sum estimation algorithm has retrieval cost $\Omega(\log n / \log\log n)$.
\end{lemma}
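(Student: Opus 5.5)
The plan is Yao's minimax principle applied to a distribution over instances built from $r=\Theta(\log n/\log\log n)$ geometric \emph{levels}, with one level chosen at random to carry a hidden sign bit worth constant mass. As in the earlier constructions, write $N=n/2$, index coordinates by $[N]\times\{-1,1\}$, and set $v_{j,b}=b$.

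\textbf{Construction.} Let $\lambda=(\log n)^{c}$ for a constant $c$, and choose level sizes $s_\ell=\lambda^{\ell}$ for $\ell=1,\ldots,r$, where $r$ is the largest integer with $\sum_\ell s_\ell\le N/2$; this gives $r=\Theta(\log n/\log\log n)$. Assign the levels to uniformly random disjoint sets of pairs $\bA_1,\ldots,\bA_r\subseteq[N]$.

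For each level $\ell$, every pair $j\in\bA_\ell$ gets one nonzero coordinate $(j,\bZ_j)$ with a fresh uniform sign $\bZ_j$, of weight $\omega_\ell=\gamma/(r s_\ell)$. So each level has total mass $\gamma/r$, and per-item weights strictly decrease with $\ell$ by a factor $\lambda$.

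Then sample a uniform level $\bL^\star\in[r]$ and a bit $\bB$. On level $\bL^\star$, replace the signs by $\bZ_j=\bB$ for all $j$, and raise the per-item weight to $\omega'_{\bL^\star}$ so that the level's mass becomes $1/2$. For this to be possible we need $\omega'_{\ell}<\omega_{\ell-1}$, i.e.\ $r s_\ell/(2\gamma)<\lambda$ up to constants. This holds if $\lambda\ge r^{2}$ and $\gamma$ is a small constant, which is why $\lambda$ is polylogarithmic.

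To keep the total weight independent of $\bL^\star$, add a fixed reservoir of mass $1/2-\gamma/r$ on separate pairs with balanced signs. The reservoir items carry weight below every level's weight, except that one reservoir item is placed at the top with fixed known weight to absorb the normalization.

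Then $\mu=\bB/2\pm O(\gamma)$, so by the argument of Claim~\ref{claim:est-to-decode}, any $\eps$-accurate algorithm with $\eps$ a sufficiently small constant recovers $\bB$ with probability $>0.66$. Since $\bB$ is only visible through level-$\bL^\star$ items, the algorithm must observe at least one item of level $\bL^\star$.

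\textbf{Each retrieval unit tests $O(1)$ levels.} I will argue that a deterministic algorithm with retrieval cost $K$ observes a level-$\bL^\star$ item with probability $O(K/r)+o(1)$. The key observation is that the sets $S_1,\ldots,S_T$ are fixed before the random placement. Hence for each $S_t$ the number of level-$\ell$ items it contains is hypergeometric with mean $|S_t|s_\ell/N$.

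For a fixed $S_t$, let $\ell_t$ be the smallest level whose expected count in $S_t$ is $\ge 1/\log n$. Because consecutive expected counts grow by the factor $\lambda$, with high probability $S_t$ contains no items of levels below $\ell_t-1$ and polylogarithmically many items of every level above $\ell_t$. Consequently, a $\topquery(k;S_t)$ with $k\le K\ll\lambda$ returns only items from levels in $\{\ell_t-1,\ell_t\}$, plus the fixed top reservoir item. This holds regardless of $\bL^\star$, because boosting level $\bL^\star$ preserves the ordering of levels. Likewise, each $\eval$ hits a non-reservoir item at all only with probability $O(1/\log n)$.

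The observed \emph{weights} of items at levels other than $\bL^\star$ do not depend on $\bL^\star$. So, conditioned on not yet having seen level $\bL^\star$, the transcript is independent of $\bL^\star$. Each operation therefore rules out at most two candidate levels, and a union bound over at most $K$ operations gives success probability $O(K/r)+o(1)$. Making this $\ge 0.16$ forces $K=\Omega(r)=\Omega(\log n/\log\log n)$.

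\textbf{Main obstacle.} The delicate step is this transcript-independence argument. It must handle adaptivity, including queries chosen after earlier answers revealed positions of items at other levels, and must make the concentration of level counts in every $S_t$ hold \emph{simultaneously} for up to exponentially many candidate sets. I would first try conditioning only on the sets the deterministic algorithm actually queries along the transcript that occurs when $\bL^\star$ is never hit: that sequence is a single fixed decision path, so a union bound over at most $K$ queries suffices. The levels-$\ell_t$ boundary cases, where the expected count is near $1/\log n$, are handled by charging them as the second level tested.
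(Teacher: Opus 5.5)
Your overall plan matches the paper's: geometrically sized levels with strictly level-ordered weights, a uniformly random special level carrying the bit with constant mass, and Yao's principle. But the step that is supposed to make it work fails for your construction. You put each level pair's weight on a single coordinate $(j,\bZ_j)$, and on the special level entirely on $(j,\bB)$. So the \emph{support} of the weight vector, not just its values, depends on $(\ell^\star,\bB)$, and \emph{not} retrieving something is informative. For example, take a prebuilt $S\subseteq[N]\times\{+1\}$ whose expected number of level-$\ell$ pairs is about $2$. The event that $\mathsf{Top}(2;S)$ returns two level-$(\ell+1)$ weights, skipping level $\ell$, has probability about $1$ if $\ell=\ell^\star$ and $\bB=-1$, about $e^{-1}$ if $\ell$ is ordinary, and about $e^{-2}$ if $\ell=\ell^\star$ and $\bB=+1$. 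Similarly, a zero answer to $\mathsf{Eval}(j,+1)$ is informative about the largest levels. So two of your assertions are false: that $\bB$ is visible only through level-$\ell^\star$ items, and that the transcript is independent of $\ell^\star$ until level $\ell^\star$ is seen. What remains true is only an approximate statement, that each operation is informative about $O(1)$ levels. That is exactly the adaptive concentration argument you list as the main obstacle, and your sketch does not establish it. Along the no-hit run, the $i$-th queried set is a function of $\bA$ through earlier answers, so its level counts are not hypergeometric. A union bound over all prebuilt sets is also unavailable, since $T$ is unbounded. (There are smaller slips as well. An $\mathsf{Eval}$ lands in some level with probability $\sum_\ell s_\ell/N$, which can be $\Theta(1)$ rather than $O(1/\log n)$. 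A $\mathsf{Top}$ query can return level-$(\ell_t+1)$ items when $S_t$ has few level-$\ell_t$ items, so you need $K\ll\lambda/\log n$.)

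The missing idea is to make the support independent of $(\ell^\star,\bB)$; after that, no concentration is needed at all. Put nonzero weight on \emph{both} coordinates of every level pair, as the paper does: equal weights on ordinary levels, and on the special level unequal weights such as $3/(8s_{\ell^\star})$ on side $\bB$ and $1/(8s_{\ell^\star})$ on side $-\bB$. Now run the algorithm on the instance in which every level is ordinary. This run is a deterministic function of $\bA$ alone. Since the level ordering and the supports agree, the true run coincides with it until a coordinate of a level-$\ell^\star$ pair is retrieved, and it touches level $\ell^\star$ exactly when this run does; zero-weight answers never belong to a level pair. Each retrieved coordinate lies in at most one level, so this run touches a set $\Lambda(\bA)$ of at most $K$ levels. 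Because $\ell^\star$ is independent of $\bA$, $\Pr[\ell^\star\in\Lambda(\bA)]\le K/r$. Off this event the transcript is independent of $\bB$, so the success probability is at most $1/2+K/(2r)$, hence $K=\Omega(r)$. This is Claim~\ref{claim:log-must-hit-special}. It charges levels to retrieved items rather than to operations, which is why your per-operation analysis is unnecessary. Two minor points: the separation condition should read $r<2\gamma\lambda$ (no factor $s_\ell$). The reservoir is unnecessary, and its mass is miscomputed: the total mass $\gamma(r-1)/r+1/2$ already does not depend on $\ell^\star$, so you can simply normalize.
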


By Yao's minimax principle, we may fix a distribution over instances and rule out deterministic algorithms.

\paragraph{The hard distribution.}
For convenience, assume that $n$ is even, let $N=n/2$, and index elements in $[n]$ by $[N]\times\{-1,1\}$. We set values $v_{j,b}=b$ for each $(j,b)\in[N]\times \{-1,1\}$. We will divide $[N]$ into $L$ levels, where $L = \floor{c\log n / \log\log n}$ for a sufficiently small constant $c>0$. In particular, we will sample uniformly random disjoint sets

$$
\bA_1,\ldots,\bA_L\subseteq[N],\qquad |\bA_\ell| = s_\ell := (4L)^{\ell-1}.
$$

Note that $\sum_{\ell=1}^L s_\ell \le N$ if $c$ is sufficiently small. We will sample a uniformly random special block $\bell^{\star}\sim[L]$ and a random bit $\bB\sim\{-1,1\}$.
\begin{itemize}
\item For every ordinary block $\ell\ne \bell^{\star}$ and $j\in\bA_\ell$, we set
\(        p_{j,-1} = p_{j,1} = \frac{1}{4(L-1)s_\ell}.
   \)
\item For $j\in\bA_{\bell^{\star}}$, set $p_{j,\bB} = 3 / (8s_{\bell^\star})$ and $p_{j,-\bB} = 1 / (8s_{\bell^\star})$.
\item We set all other $p_{j,b} = 0$.
\end{itemize}

Therefore, the sum of $p_{j,b}$ within each ordinary block is $1/(2(L-1))$ and the sum within the special block is $1/2$, confirming that $\sum_{j,b} p_{j,b} = 1$. Meanwhile, the signed sum is

\[
    \bmu
    = \sum_{j,b} b\,p_{j,b}
    = \bB/4
\]

since in the special level, there are $s_{\bell^\star}$ pairs, each contributing $\bB/(4s_{\bell^\star})$.

\begin{claim}\label{claim:log-est-reveals-bit}
For $\eps < 1/4$, any algorithm that outputs an $\eps$-accurate estimate $\wh{\mu}$ can recover $\bB$.
\end{claim}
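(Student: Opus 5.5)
The plan is a direct computation, since the hard distribution was built so that the target sum is deterministic given $\bB$: we have $\bmu = \bB/4$ exactly, for every realization of the other randomness ($\bA_1,\ldots,\bA_L$ and $\bell^\star$).

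First, confirm the identity $\bmu=\bB/4$. Each ordinary pair $j\in\bA_\ell$ with $\ell\neq\bell^\star$ contributes
\[
\tfrac{1}{4(L-1)s_\ell} - \tfrac{1}{4(L-1)s_\ell} = 0 .
\]
Each special pair $j\in\bA_{\bell^\star}$ contributes
\[
\bB\cdot\tfrac{3}{8s_{\bell^\star}} - \bB\cdot\tfrac{1}{8s_{\bell^\star}} = \tfrac{\bB}{4 s_{\bell^\star}} .
\]
There are $s_{\bell^\star}$ special pairs, so the total is $\bB/4$.

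Second, define the decoder $\wh{\bB} := \sign(\wh{\mu})$, with ties broken arbitrarily. On the event $|\wh\mu-\bmu|\le\eps$ we get
\[
|\wh\mu - \bB/4| \le \eps < 1/4 .
\]
Hence $\wh\mu$ is strictly positive when $\bB=1$ and strictly negative when $\bB=-1$, so $\wh\bB=\bB$.

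Third, the accuracy event holds with probability at least $2/3$ for every fixed instance. Averaging over the hard distribution, the decoder recovers $\bB$ with probability at least $2/3$ over the joint randomness. This is the sense of ``recover'' needed later, when it is compared against an upper bound of about $1/2+o(1)$ for algorithms with small retrieval cost.

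No step here is a real obstacle; unlike Claim~\ref{claim:est-reveals-single-bit}, no concentration argument is needed because $\bmu$ is deterministic given $\bB$. The only care needed is the strict inequality $\eps<1/4$, which excludes $\wh\mu=0$, and stating that recovery holds on the accuracy event.
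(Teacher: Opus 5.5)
Your proposal is correct and follows essentially the same route as the paper. The paper verifies $\bmu=\bB/4$ just before the claim and then decodes via $\sign(\wh{\mu})$, using $|\wh{\mu}-\bmu|<1/4$. Your additional accounting, that recovery holds on the accuracy event and hence with probability at least $2/3$, is exactly what the later combination with Claim~\ref{claim:log-must-hit-special} needs.
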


\begin{proof}
If $|\wh{\mu}-\bmu| < 1/4$, then $\sign(\wh{\mu}) = \sign(\bmu) = \bB$.
\end{proof}

Next, we will use the fact that weights monotonically decrease with level. In particular, for $\ell < \ell'$, any $j\in\bA_\ell$, $j'\in\bA_{\ell'}$, and $b,b'\in\{-1,1\}$ satisfy $p_{j,b}>p_{j',b'}$.

Indeed, the smallest possible weight in level $\ell$ is $\frac{1}{4(L-1)s_\ell}$ and the largest is

$$
    \frac{3}{8s_{\ell'}}
    \le \frac{3}{32Ls_\ell}
    < \frac{1}{4(L-1)s_\ell},
$$

where we used $s_{\ell'} / s_\ell \ge 4L$.

\begin{claim}\label{claim:log-must-hit-special}
Any algorithm with retrieval cost $K$ recovers $\bB$
with probability at most $1/2+K/(2L)$.
\end{claim}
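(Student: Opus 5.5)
We argue by a hiding argument: until the algorithm retrieves some coordinate in the special block $\bA_{\bell^\star}$, its view is independent of $\bB$. Outside the special block, the weights do not depend on $\bB$ at all. The only $\bB$-dependence is the asymmetric $3/8$ versus $1/8$ split inside $\bA_{\bell^\star}$. So the proof reduces to bounding the probability that the algorithm ever observes a special coordinate. Once we show this probability is at most $K/L$, the algorithm can guess $\bB$ correctly with probability at most $1/2$ on the complementary event, and the total success probability is at most
\[
\frac{K}{L}+\Bigl(1-\frac{K}{L}\Bigr)\cdot\frac12 = \frac12+\frac{K}{2L}.
\]

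The key step is a coupling with a ``fake'' instance that has no special block. In this instance every level $\ell$ gets the ordinary uniform weights $\frac{1}{4(L-1)s_\ell}$ on both coordinates of each pair. This is not normalized, but normalization is irrelevant for simulating the view. We run the deterministic algorithm on this fake instance, with its $T$ sets fixed in advance. Each operation then reveals coordinates at certain levels.

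By the monotonicity of weights across levels, $\topquery(k;S_t)$ returns the elements of $S_t$ in order of level. Within a level, ties are broken lexicographically in the fake instance. In the real instance, the $\bB$-coordinate outranks the $-\bB$-coordinate, but this only matters once that level has been reached.

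The real and fake executions coincide step by step until the first moment an operation returns (or evaluates) a coordinate in level $\bell^\star$. Before that moment, the answers of every query involve only levels $\ell\neq\bell^\star$. Here I must check that a $\topquery$ which stops before reaching level $\bell^\star$ returns the same set in both instances; this holds because the higher levels are identical and strictly heavier.

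We then do the counting. In the fake execution, let $\Lambda$ be the set of levels that ever have a coordinate returned or evaluated. Each unit of retrieval cost reveals at most one coordinate, hence touches at most one level, so $|\Lambda|\le K$. The fake execution is independent of $\bell^\star$ given the sets $\bA_\ell$: the fake instance's joint law of the $\bA_\ell$ and the weights does not involve $\bell^\star$ or $\bB$. Since $\bell^\star$ is uniform on $[L]$ and independent of everything in the fake run, $\Pr[\bell^\star\in\Lambda]\le K/L$. On the event $\bell^\star\notin\Lambda$, the real view equals the fake view, which is independent of $\bB$, so the guess is correct with probability exactly $1/2$.

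The main obstacle is making the coupling precise for $\topquery$ calls that partially reach a level. A call with a large $k$ may return coordinates from several levels, possibly ending mid-level. The ordering within $\bA_{\bell^\star}$ differs between the two instances, but any such call already contains a level-$\bell^\star$ coordinate, so it is charged to $\Lambda$. I would also verify that $\topquery$ cost $k$ bounds the number of distinct levels it touches. This holds since each touched level contributes at least one returned element, and $\min\{k,|S_t|\}\le k$.
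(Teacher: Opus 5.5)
Your proof is correct and follows the paper's argument. Strict monotonicity of weights across levels means nothing about $\bell^\star$ or $\bB$ is revealed until level $\bell^\star$ is touched. At most $K$ levels are touched, so $\bell^\star$ is hit with probability at most $K/L$, and otherwise the guess is a fair coin. Your coupling with the $\bell^\star$-free instance is a cleaner way of formalizing the paper's ``$\bell^\star$ is uniform over the untouched levels'' step.

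One small repair is needed. The fake instance is not normalized, so it is not a legal input, and the retrieval-cost bound $K$ need not hold for the fake run. Define $\Lambda$ from the fake run truncated once its cumulative cost would exceed $K$ (equivalently, let the algorithm abort past cost $K$). Then on the event $\bell^\star\notin\Lambda$, the real run, which does obey the bound, coincides with this truncated prefix.
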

\begin{proof}
Say that the algorithm \emph{touches} level $\ell$ if it retrieves some $(j,b)$ with $j\in\bA_\ell$. Since weights strictly decrease with level, any $\topquery(k;S)$ query retrieves elements of $S$ in increasing order of level. In particular, given that the algorithm has not touched level $\bell^\star$, then the distribution of $\bell^\star$ conditioned on the information received by the algorithm is uniform in the remaining levels.

Since the algorithm touches at most $K$ levels, we conclude that $\bell^\star$ is touched with probability $\le K/L$. Given that $\bell^\star$ is not touched, $\bB$ can only be recovered with probability $1/2$. Therefore, $\bB$ is recovered with probability at most $1/2 \cdot (1-K/L) + K/L = 1/2 + K/(2L)$, as desired.
\end{proof}

\begin{proof}[Proof of Lemma~\ref{lem:log-log-lb}]
    Combining Claims~\ref{claim:log-est-reveals-bit} and~\ref{claim:log-must-hit-special} yields $K = \Omega(L) = \Omega(\log n / \log\log n)$.
\end{proof}

Combining Lemma~\ref{lem:log-log-lb} with Lemma~\ref{lem:eps-2-lb}, every $\eps$-accurate signed sum estimation algorithm has retrieval cost

$$
    \Omega\left(
        \min\left\{
            n,\,
            \frac{\log n}{\log\log n}
            +\frac{1}{\eps^2}
        \right\}
    \right),
$$

which proves Theorem~\ref{thm:any-oracle-sum-lb}.

%% file: sections/experiments-appendix.tex
\section{Further Experimental Details}\label{sec:exp-apx}

\subsection{Measurement Definitions and Details}
\label{sec:measurement-definitions}

For each strategy--oracle pair and benchmark example, the harness runs three
separate passes:

\begin{itemize}
    \item \textbf{Accuracy pass.} We replay the exact-GPU token sequence with
    teacher forcing, ensuring that both executions see the same token prefix.
    The pass records logits, attention contexts, and oracle results.  Each exact
    attention reference is recomputed using the same query and frozen K/V as
    the approximation, isolating local attention error from error propagated
    through earlier layers.

    \item \textbf{Latency pass.} Free greedy decoding is instrumented with
    component timers for index construction, oracle search, CPU attention, and
    data movement.  We use this pass only for build times and timing breakdowns,
    since its instrumentation perturbs end-to-end latency.

    \item \textbf{Clean pass.} Free greedy decoding is repeated without
    accuracy or component-level instrumentation.  This provides the
    generated answer, benchmark score, first-token latency, later-token latency,
    and total response time.
\end{itemize}

We define the following measurements:
\begin{itemize}
    \item \textbf{Logit error.} Let $\mathbf{z}_t$ and $\widehat{\mathbf{z}}_t$ denote the exact and
approximate full-vocabulary logit vectors at output position $t$. Global relative $L_2$ error over the
    teacher-forced output trajectory:
    \begin{equation*}
        \left(
        \frac{\sum_{t}\lVert\widehat{\mathbf{z}}_t-\mathbf{z}_t\rVert_2^2}
             {\sum_{t}\lVert\mathbf{z}_t\rVert_2^2}
        \right)^{1/2}.
    \end{equation*}

    \item \textbf{Attention error.} Let
$\mathbf{c}_r$ and $\widehat{\mathbf{c}}_r$ denote exact and approximate
attention contexts for layer--head--query row $r$. Global relative $L_2$ error over all
    recorded attention contexts:
    \begin{equation*}
        \left(
        \frac{\sum_{r}\lVert\widehat{\mathbf{c}}_r-\mathbf{c}_r\rVert_2^2}
             {\sum_{r}\lVert\mathbf{c}_r\rVert_2^2}
        \right)^{1/2}.
    \end{equation*}

    \item \textbf{Oracle recall.} For oracle invocation $r$, let
    $S_r^\star$ be the exact top-$k$ set and $\widehat{S}_r$ the set returned by
    the oracle.  Recall is
    \begin{equation*}
        \frac{\sum_r |\widehat{S}_r \cap S_r^\star|}
             {\sum_r |S_r^\star|}.
    \end{equation*}

    \item \textbf{First-token latency.} Wall-clock time from the
    start of question processing through selection of the first generated
    token.

    \item \textbf{Later-token latency.} Mean wall-clock time for a
    complete decode step after the first token. 
\end{itemize}

\subsection{SoftmaxLift oracle comparison}

Table~\ref{tab:softmax-lift-all-oracles} expands the main-text oracle
comparison to all six oracles for which we have matched \lift results.
Among the approximate oracles, PiPNN gives the best
quality, while HNSW gives the lowest end-to-end response latency.  MLANN has
the lowest later-token latency, but its lower retrieval accuracy substantially
reduces its RULER score.
\begin{table}[H]
    \centering
    \small
    \setlength{\tabcolsep}{4.5pt}
    \renewcommand{\arraystretch}{1.20}
    \begin{tabular}{@{}lrrrrrrrr@{}}
      \toprule
      \multirow{2}{*}{\textbf{Oracle}}
        & \multicolumn{4}{c}{\textbf{Accuracy}}
        & \multicolumn{4}{c}{\textbf{Latency}} \\
      \cmidrule(lr){2-5}\cmidrule(l){6-9}
        & \textbf{Logit} & \textbf{Attention} & \textbf{RULER}
        & \textbf{Oracle} & \textbf{Build} & \textbf{First token}
        & \textbf{Later token} & \textbf{Response} \\
        & {\footnotesize rel. $L_2$ $\downarrow$}
        & {\footnotesize rel. $L_2$ $\downarrow$}
        & {\footnotesize score $\uparrow$}
        & {\footnotesize recall $\uparrow$}
        & {\footnotesize (s) $\downarrow$}
        & {\footnotesize (ms) $\downarrow$}
        & {\footnotesize (ms/token) $\downarrow$}
        & {\footnotesize (s) $\downarrow$} \\
      \midrule
      Flat  & 0.0849 & 0.0794 & 60.22
            & 1.000 & 6.429 & 3258.1 & 136.7 & 5.096 \\
      IVF   & 0.1820 & 0.1209 & 45.28 & \textbf{0.966}
            & \textbf{29.350} & 397.1 & 55.7 & 1.148 \\
      HNSW  & 0.1670 & 0.1212 & 50.53 & 0.926
            & 441.902 & \textbf{349.7} & 51.8 & \textbf{1.063} \\
      NSG   & \textbf{0.1588} & \textbf{0.1093} & \textbf{52.44} & 0.937
            & 417.301 & 676.2 & 57.1 & 1.456 \\
      MLANN & 0.2500 & 0.1662 & 25.58 & 0.871
            & 100.121 & 498.2 & \textbf{48.7} & 1.137 \\
      PiPNN & 0.1116 & 0.0989 & 58.69 & 0.902
            & 574.074 & 3232.6 & 141.1 & 5.168 \\
      \midrule
      Exact CPU        & -- & -- & 60.86 & -- & -- & -- & -- & 3.37  \\
      Exact GPU        & 0 & 0 & 61.53 & -- & -- & -- & -- & 0.53\\
      \bottomrule
    \end{tabular}
  \caption{\lift on the same 60 matched
  128K-context RULER examples from 10 tasks, using total retrieval budget
  $k=128$.  The exact-GPU mean
  RULER score is 61.53. We bold the best performing approximate oracle (among those with response time faster than Exact CPU) along each metric.}
  \label{tab:softmax-lift-all-oracles}
  \end{table}

\subsection{Oracle Parameters}\label{sec:oracle-parameters}

Table~\ref{tab:oracle-parameters} gives the oracle operating points used in
Table~\ref{tab:softmax-lift-all-oracles}.  These parameters were fixed across
all 60 examples.  Flat, IVF, and MLANN operate directly on inner products;
HNSW, NSG, and PiPNN use the exact Xbox reduction from MIPS to Euclidean
nearest-neighbor search~\citep{bachrach2014xbox}.

\begin{table}[H]
  \centering
  \small
  \setlength{\tabcolsep}{5pt}
  \renewcommand{\arraystretch}{1.16}
  \begin{tabular}{@{}p{0.13\linewidth}p{0.81\linewidth}@{}}
    \toprule
    \textbf{Oracle} & \textbf{Index and search parameters} \\
    \midrule
    Flat
      & Exact FAISS \texttt{IndexFlatIP} scan; there are no approximation
        parameters. \\
    IVF
      & FAISS \texttt{IndexIVFFlat} with
        $\texttt{nlist}=1{,}024$, $\texttt{nprobe}=96$, and candidate
        replenishment enabled. \\
    HNSW
      & Xbox-transformed FAISS HNSW with $M=48$,
        $\texttt{efConstruction}=500$, and $\texttt{efSearch}=176$. \\
    NSG
      & Xbox-transformed FAISS NSG with graph degree 32,
        construction pool $\texttt{GK}=128$, and search width
        $\texttt{search\_L}=256$.  We use the NN-Descent builder. \\
    MLANN
      & Random-forest index with 24 trees, depth 12, feature density $1/8$,
        minimum vote count 1, and voting threshold $10^{-4.5}$. For training, we sample
32,768 training queries.\\
    PiPNN
      & Xbox-transformed graph with degree 48, $\alpha=1.13$, fanout 30,
        cluster size 1,024, and search beam 256. \\
    \bottomrule
  \end{tabular}
  \caption{Oracle configurations for the 128K SoftmaxLift comparison.}
  \label{tab:oracle-parameters}
\end{table}

\subsection{Large Context GPU Attention Implementation}
At near-million-token context lengths, the complete document KV cache exceeds
  the available GPU memory. To construct an exact GPU reference for our 1M-context
  experiments, we modify the inference implementation of
  Llama-3.1-Nemotron-8B-UltraLong-1M-Instruct to use hybrid CPU--GPU KV-cache
  placement. The cache is stored in BF16 and sharded by transformer layer. As many
  complete layer caches as safely fit in GPU memory are kept resident, while the
  remaining layers are stored in pinned CPU memory and asynchronously prefetched
  into two reusable GPU staging buffers. The transfer for layer $\ell+1$ is
  overlapped with the exact attention computation for layer $\ell$, while the
  question and generated-token KV states remain GPU-resident.

  This implementation changes only the placement of the KV cache; the attention
  computation itself remains exact. It reduces peak GPU memory use and avoids
  host-to-device transfers for resident layers, enabling an exact GPU baseline at
  near-million-token context lengths.

%% file: main.bib
@inproceedings{vaswani2017attention,
  author    = {Vaswani, Ashish and Shazeer, Noam and Parmar, Niki and
               Uszkoreit, Jakob and Jones, Llion and Gomez, Aidan N. and
               Kaiser, {\L}ukasz and Polosukhin, Illia},
  title     = {Attention Is All You Need},
  booktitle = {Advances in Neural Information Processing Systems},
  volume    = {30},
  pages     = {5998--6008},
  year      = {2017},
  url       = {https://proceedings.neurips.cc/paper/2017/hash/3f5ee243547dee91fbd053c1c4a845aa-Abstract.html}
}

@inproceedings{kitaev2020reformer,
  author    = {Kitaev, Nikita and Kaiser, {\L}ukasz and Levskaya, Anselm},
  title     = {Reformer: The Efficient Transformer},
  booktitle = {International Conference on Learning Representations},
  year      = {2020},
  url       = {https://openreview.net/forum?id=rkgNKkHtvB}
}

@inproceedings{zaheer2020bigbird,
  author    = {Zaheer, Manzil and Guruganesh, Guru and Dubey, Avinava and
               Ainslie, Joshua and Alberti, Chris and Onta{\~n}{\'o}n, Santiago and
               Pham, Philip and Ravula, Anirudh and Wang, Qifan and Yang, Li and
               Ahmed, Amr},
  title     = {Big Bird: Transformers for Longer Sequences},
  booktitle = {Advances in Neural Information Processing Systems},
  volume    = {33},
  pages     = {17283--17297},
  year      = {2020},
  url       = {https://proceedings.neurips.cc/paper/2020/hash/c8512d142a2d849725f31a9a7a361ab9-Abstract.html}
}

@inproceedings{shrivastava2014alsh,
  author    = {Shrivastava, Anshumali and Li, Ping},
  title     = {Asymmetric {LSH} ({ALSH}) for Sublinear Time Maximum Inner Product
               Search ({MIPS})},
  booktitle = {Advances in Neural Information Processing Systems},
  volume    = {27},
  pages     = {2321--2329},
  year      = {2014},
  url       = {https://proceedings.neurips.cc/paper/2014/hash/c98e7c4b8f20d384e3ad857d0ee226cc-Abstract.html}
}

@inproceedings{neyshabur2015symmetric,
  author    = {Neyshabur, Behnam and Srebro, Nathan},
  title     = {On Symmetric and Asymmetric {LSH}s for Inner Product Search},
  booktitle = {Proceedings of the 32nd International Conference on Machine Learning},
  series    = {Proceedings of Machine Learning Research},
  volume    = {37},
  pages     = {1926--1934},
  publisher = {PMLR},
  year      = {2015},
  url       = {https://proceedings.mlr.press/v37/neyshabur15.html}
}

@inproceedings{retrievalattention,
  author    = {Liu, Di and Chen, Meng and Lu, Baotong and Jiang, Huiqiang and
               Han, Zhenhua and Zhang, Qianxi and Chen, Qi and Zhang, Chengruidong and
               Ding, Bailu and Zhang, Kai and Chen, Chen and Yang, Fan and
               Yang, Yuqing and Qiu, Lili},
  title     = {{RetrievalAttention}: Accelerating Long-Context {LLM} Inference via
               Vector Retrieval},
  booktitle = {Advances in Neural Information Processing Systems},
  volume    = {38},
  year      = {2025},
  doi       = {10.52202/085713-1816},
  url       = {https://proceedings.neurips.cc/paper_files/paper/2025/hash/4e36d4049fb0fea195a8267c8dcd0824-Abstract-Conference.html}
}

@inproceedings{squeezedattention,
  author    = {Hooper, Coleman Richard Charles and Kim, Sehoon and
               Mohammadzadeh, Hiva and Maheswaran, Monishwaran and
               Zhao, Sebastian and Paik, June and Mahoney, Michael W. and
               Keutzer, Kurt and Gholami, Amir},
  title     = {Squeezed Attention: Accelerating Long Context Length {LLM} Inference},
  booktitle = {Proceedings of the 63rd Annual Meeting of the Association for
               Computational Linguistics (Volume 1: Long Papers)},
  pages     = {32631--32652},
  publisher = {Association for Computational Linguistics},
  year      = {2025},
  doi       = {10.18653/v1/2025.acl-long.1568},
  url       = {https://aclanthology.org/2025.acl-long.1568/}
}

@inproceedings{hashattention,
  author    = {Desai, Aditya and Yang, Shuo and Cuadron, Alejandro and
               Zaharia, Matei and Gonzalez, Joseph E. and Stoica, Ion},
  title     = {{HashAttention}: Semantic Sparsity for Faster Inference},
  booktitle = {Proceedings of the 42nd International Conference on Machine Learning},
  series    = {Proceedings of Machine Learning Research},
  volume    = {267},
  pages     = {13402--13418},
  publisher = {PMLR},
  year      = {2025},
  url       = {https://proceedings.mlr.press/v267/desai25a.html}
}

@misc{saap,
  author        = {Mazar{\'e}, Pierre-Emmanuel and Szilvasy, Gergely and
                   Lomeli, Maria and Massa, Francisco and Murray, Naila and
                   J{\'e}gou, Herv{\'e} and Douze, Matthijs},
  title         = {Inference-Time Sparse Attention with Asymmetric Indexing},
  year          = {2025},
  eprint        = {2502.08246},
  archiveprefix = {arXiv},
  primaryclass  = {cs.CL},
  url           = {https://arxiv.org/abs/2502.08246}
}

@inproceedings{vattention,
  author    = {Desai, Aditya and Agrawal, Kumar Krishna and Yang, Shuo and
               Cuadron Lafuente, Alejandro and Schroeder, Luis Gaspar and
               Zaharia, Matei and Gonzalez, Joseph E. and Stoica, Ion},
  title     = {{vAttention}: Verified Sparse Attention via Sampling},
  booktitle = {The Fourteenth International Conference on Learning Representations},
  year      = {2026},
  url       = {https://openreview.net/forum?id=zzTDulLys0}
}

@inproceedings{magicpig,
  author    = {Chen, Zhuoming and Sadhukhan, Ranajoy and Ye, Zihao and Zhou, Yang and
               Zhang, Jianyu and Nolte, Niklas and Tian, Yuandong and Douze, Matthijs and
               Bottou, L{\'e}on and Jia, Zhihao and Chen, Beidi},
  title     = {{MagicPIG}: {LSH} Sampling for Efficient {LLM} Generation},
  booktitle = {The Thirteenth International Conference on Learning Representations},
  year      = {2025},
  url       = {https://openreview.net/forum?id=ALzTQUgW8a}
}

@article{aumuller2020annbenchmarks,
  author  = {Aum{\"u}ller, Martin and Bernhardsson, Erik and Faithfull, Alexander},
  title   = {{ANN}-Benchmarks: A Benchmarking Tool for Approximate Nearest Neighbor
             Algorithms},
  journal = {Information Systems},
  volume  = {87},
  pages   = {101374},
  year    = {2020},
  doi     = {10.1016/j.is.2019.02.006},
  url     = {https://doi.org/10.1016/j.is.2019.02.006}
}

@article{jaasaari2026vibe,
  author  = {J{\"a}{\"a}saari, Elias and Hyv{\"o}nen, Ville and Ceccarello, Matteo and
             Roos, Teemu and Aum{\"u}ller, Martin},
  title   = {{VIBE}: Vector Index Benchmark for Embeddings},
  journal = {Journal of Data-centric Machine Learning Research},
  year    = {2026},
  url     = {https://openreview.net/forum?id=6Sx6ra1QLz}
}

@article{duffield2007priority,
  author  = {Nick Duffield and Carsten Lund and Mikkel Thorup},
  title   = {Priority Sampling for Estimation of Arbitrary Subset Sums},
  journal = {Journal of the ACM},
  volume  = {54},
  number  = {6},
  year    = {2007},
  doi     = {10.1145/1314690.1314696}
}

@inproceedings{daliri2024simple,
  author    = {Daliri, Majid and Freire, Juliana and Musco, Christopher and
               Santos, A{\'e}cio and Zhang, Haoxiang},
  title     = {Simple Analysis of Priority Sampling},
  booktitle = {2024 Symposium on Simplicity in Algorithms ({SOSA})},
  pages     = {224--229},
  publisher = {Society for Industrial and Applied Mathematics},
  year      = {2024},
  doi       = {10.1137/1.9781611977936.21},
  url       = {https://doi.org/10.1137/1.9781611977936.21}
}

@book{tsybakov2009,
  author    = {Alexandre B. Tsybakov},
  title     = {Introduction to Nonparametric Estimation},
  series    = {Springer Series in Statistics},
  publisher = {Springer},
  address   = {New York},
  year      = {2009},
  doi       = {10.1007/b13794},
  isbn      = {978-0-387-79051-0}
}

@article{child2019sparsetransformer,
  title  = {Generating Long Sequences with Sparse Transformers},
  author = {Child, Rewon and Gray, Scott and Radford, Alec and Sutskever, Ilya},
  journal = {arXiv preprint arXiv:1904.10509},
  year   = {2019}
}

@article{beltagy2020longformer,
  title  = {Longformer: The Long-Document Transformer},
  author = {Beltagy, Iz and Peters, Matthew E. and Cohan, Arman},
  journal = {arXiv preprint arXiv:2004.05150},
  year   = {2020}
}

@inproceedings{xiao2024streamingllm,
  title     = {Efficient Streaming Language Models with Attention Sinks},
  author    = {Xiao, Guangxuan and Tian, Yuandong and Chen, Beidi and Han, Song and Lewis, Mike},
  booktitle = {International Conference on Learning Representations (ICLR)},
  year      = {2024}
}

@inproceedings{zhang2023h2o,
  title     = {{H2O}: Heavy-Hitter Oracle for Efficient Generative Inference of Large Language Models},
  author    = {Zhang, Zhenyu and Sheng, Ying and Zhou, Tianyi and Chen, Tianlong and Zheng, Lianmin
             and Cai, Ruisi and Song, Zhao and Tian, Yuandong and R{\'e}, Christopher
             and Barrett, Clark and Wang, Zhangyang and Chen, Beidi},
  booktitle = {Advances in Neural Information Processing Systems (NeurIPS)},
  year      = {2023}
}

@inproceedings{liu2023scissorhands,
  title     = {Scissorhands: Exploiting the Persistence of Importance Hypothesis for {LLM} {KV} Cache
               Compression at Test Time},
  author    = {Liu, Zichang and Desai, Aditya and Liao, Fangshuo and Wang, Weitao and Xie, Victor
             and Xu, Zhaozhuo and Kyrillidis, Anastasios and Shrivastava, Anshumali},
  booktitle = {Advances in Neural Information Processing Systems (NeurIPS)},
  year      = {2023}
}

@inproceedings{ge2024fastgen,
  title     = {Model Tells You What to Discard: Adaptive {KV} Cache Compression for {LLM}s},
  author    = {Ge, Suyu and Zhang, Yunan and Liu, Liyuan and Zhang, Minjia and Han, Jiawei and Gao, Jianfeng},
  booktitle = {International Conference on Learning Representations (ICLR)},
  year      = {2024}
}

@inproceedings{li2024snapkv,
  title     = {{SnapKV}: {LLM} Knows What You Are Looking For Before Generation},
  author    = {Li, Yuhong and Huang, Yingbing and Yang, Bowen and Venkitesh, Bharat and Locatelli, Acyr
             and Ye, Hanchen and Cai, Tianle and Lewis, Patrick and Chen, Deming},
  booktitle = {Advances in Neural Information Processing Systems (NeurIPS)},
  year      = {2024}
}

@inproceedings{tang2024quest,
  title     = {{Quest}: Query-Aware Sparsity for Efficient Long-Context {LLM} Inference},
  author    = {Tang, Jiaming and Zhao, Yilong and Zhu, Kan and Xiao, Guangxuan and Kasikci, Baris and Han, Song},
  booktitle = {International Conference on Machine Learning (ICML)},
  year      = {2024}
}

@inproceedings{xiao2024infllm,
  title     = {{InfLLM}: Unveiling the Intrinsic Capacity of {LLM}s for Understanding Extremely Long
               Sequences with Training-Free Memory},
  author    = {Xiao, Chaojun and Zhang, Pengle and Han, Xu and Xiao, Guangxuan and Lin, Yankai
             and Zhang, Zhengyan and Liu, Zhiyuan and Sun, Maosong},
  booktitle = {Advances in Neural Information Processing Systems (NeurIPS)},
  year      = {2024}
}

@article{yang2024doublesparsity,
  title   = {Post-Training Sparse Attention with Double Sparsity},
  author  = {Yang, Shuo and Sheng, Ying and Gonzalez, Joseph E. and Stoica, Ion and Zheng, Lianmin},
  journal = {arXiv preprint arXiv:2408.07092},
  year    = {2024}
}

@inproceedings{singhania2024loki,
  title     = {{Loki}: Low-Rank Keys for Efficient Sparse Attention},
  author    = {Singhania, Prajwal and Singh, Siddharth and He, Shwai and Feizi, Soheil and Bhatele, Abhinav},
  booktitle = {Advances in Neural Information Processing Systems (NeurIPS)},
  year      = {2024}
}

@inproceedings{yuan2025nsa,
  title     = {Native Sparse Attention: Hardware-Aligned and Natively Trainable Sparse Attention},
  author    = {Yuan, Jingyang and Gao, Huazuo and Dai, Damai and Luo, Junyu and Zhao, Liang
             and Zhang, Zhengyan and Xie, Zhenda and Wei, Y. X. and Wang, Lean and Xiao, Zhiping
             and Wang, Yuqing and Ruan, Chong and Zhang, Ming and Liang, Wenfeng and Zeng, Wangding},
  booktitle = {Annual Meeting of the Association for Computational Linguistics (ACL)},
  year      = {2025}
}

@article{zhang2024pqcache,
  title   = {{PQCache}: Product Quantization-based {KVCache} for Long Context {LLM} Inference},
  author  = {Zhang, Hailin and Ji, Xiaodong and Chen, Yilin and Fu, Fangcheng and Miao, Xupeng
           and Nie, Xiaonan and Chen, Weipeng and Cui, Bin},
  journal = {Proceedings of the ACM on Management of Data},
  volume  = {3},
  number  = {3},
  year    = {2025}
}

@article{zhu2025tactic,
  title   = {{Tactic}: Adaptive Sparse Attention with Clustering and Distribution Fitting for
             Long-Context {LLM}s},
  author  = {Zhu, Kan and Tang, Tian and Xu, Qinyu and Gu, Yile and Zeng, Zhichen and Kadekodi, Rohan
           and Zhao, Liangyu and Li, Ang and Krishnamurthy, Arvind and Kasikci, Baris},
  journal = {arXiv preprint arXiv:2502.12216},
  year    = {2025}
}

@inproceedings{bachrach2014xbox,
  title     = {Speeding Up the {Xbox} Recommender System Using a {E}uclidean Transformation for
               Inner-Product Spaces},
  author    = {Bachrach, Yoram and Finkelstein, Yehuda and Gilad-Bachrach, Ran and Katzir, Liran
             and Koenigstein, Noam and Nice, Nir and Paquet, Ulrich},
  booktitle = {ACM Conference on Recommender Systems (RecSys)},
  pages     = {257--264},
  year      = {2014}
}

@article{jaiswal2022ooddiskann,
  title   = {{OOD-DiskANN}: Efficient and Scalable Graph {ANNS} for Out-of-Distribution Queries},
  author  = {Jaiswal, Shikhar and Krishnaswamy, Ravishankar and Garg, Ankit
           and Simhadri, Harsha Vardhan and Agrawal, Sheshansh},
  journal = {arXiv preprint arXiv:2211.12850},
  year    = {2022}
}

@article{chen2024roargraph,
  title   = {{RoarGraph}: A Projected Bipartite Graph for Efficient Cross-Modal Approximate Nearest
             Neighbor Search},
  author  = {Chen, Meng and Zhang, Kai and He, Zhenying and Jing, Yinan and Wang, X. Sean},
  journal = {Proceedings of the VLDB Endowment},
  volume  = {17},
  number  = {11},
  year    = {2024}
}

@inproceedings{karppa2022deann,
  title     = {{DEANN}: Speeding Up Kernel-Density Estimation Using Approximate Nearest Neighbor Search},
  author    = {Karppa, Matti and Aum{\"u}ller, Martin and Pagh, Rasmus},
  booktitle = {International Conference on Artificial Intelligence and Statistics (AISTATS)},
  pages     = {3108--3137},
  year      = {2022}
}

@inproceedings{han2024hyperattention,
  title     = {{HyperAttention}: Long-Context Attention in Near-Linear Time},
  author    = {Han, Insu and Jayaram, Rajesh and Karbasi, Amin and Mirrokni, Vahab
             and Woodruff, David P. and Zandieh, Amir},
  booktitle = {International Conference on Learning Representations (ICLR)},
  year      = {2024}
}

@article{schroder2026wildcat,
  title={Wildcat: Near-linear attention in theory and practice},
  author={Schr{\"o}der, Tobias and Mackey, Lester},
  journal={arXiv preprint arXiv:2602.10056},
  year={2026}
}

@article{kochetkova2026streaming,
  title={Streaming attention approximation via discrepancy theory},
  author={Kochetkova, Ekaterina and Sheth, Kshiteej Jitesh and Han, Insu and Zandieh, Amir and Kapralov, Michael},
  journal={Advances in Neural Information Processing Systems},
  volume={38},
  pages={20935--20972},
  year={2026}
}

@misc{liberty2026coresets,
      title={Nearly Optimal Attention Coresets}, 
      author={Edo Liberty and Alexandr Andoni and Eldar Kleiner},
      year={2026},
      eprint={2605.05602},
      archivePrefix={arXiv},
      primaryClass={cs.DS},
      url={https://arxiv.org/abs/2605.05602}, 
}

@article{ohlsson1998sequential,
  title   = {Sequential {P}oisson Sampling},
  author  = {Ohlsson, Esbj{\"o}rn},
  journal = {Journal of Official Statistics},
  volume  = {14},
  number  = {2},
  pages   = {149--162},
  year    = {1998}
}

@article{horvitz1952generalization,
  title   = {A Generalization of Sampling Without Replacement from a Finite Universe},
  author  = {Horvitz, Daniel G. and Thompson, Donovan J.},
  journal = {Journal of the American Statistical Association},
  volume  = {47},
  number  = {260},
  pages   = {663--685},
  year    = {1952}
}

@inproceedings{alon2005subsetsums,
  title     = {Estimating Arbitrary Subset Sums with Few Probes},
  author    = {Alon, Noga and Duffield, Nick and Lund, Carsten and Thorup, Mikkel},
  booktitle = {ACM Symposium on Principles of Database Systems (PODS)},
  pages     = {317--325},
  year      = {2005}
}

@inproceedings{szegedy2006priority,
  title     = {The {DLT} Priority Sampling Is Essentially Optimal},
  author    = {Szegedy, Mario},
  booktitle = {ACM Symposium on Theory of Computing (STOC)},
  pages     = {150--158},
  year      = {2006}
}

@article{daliri2024inner,
  title   = {Sampling Methods for Inner Product Sketching},
  author  = {Daliri, Majid and Freire, Juliana and Musco, Christopher and Santos, A{\'e}cio
           and Zhang, Haoxiang},
  journal = {Proceedings of the VLDB Endowment},
  volume  = {17},
  number  = {9},
  pages   = {2185--2197},
  year    = {2024}
}

@inproceedings{mussmann2016mips,
  title     = {Learning and Inference via Maximum Inner Product Search},
  author    = {Mussmann, Stephen and Ermon, Stefano},
  booktitle = {International Conference on Machine Learning (ICML)},
  pages     = {2587--2596},
  year      = {2016}
}

@inproceedings{mussmann2017fast,
  title     = {Fast Amortized Inference and Learning in Log-Linear Models with Randomly Perturbed
               Nearest Neighbor Search},
  author    = {Mussmann, Stephen and Levy, Daniel and Ermon, Stefano},
  booktitle = {Conference on Uncertainty in Artificial Intelligence (UAI)},
  year      = {2017}
}

@inproceedings{maddison2014astar,
  title     = {A* Sampling},
  author    = {Maddison, Chris J. and Tarlow, Daniel and Minka, Tom},
  booktitle = {Advances in Neural Information Processing Systems (NeurIPS)},
  year      = {2014}
}

@inproceedings{kool2019gumbel,
  title     = {Stochastic Beams and Where to Find Them: The {G}umbel-Top-$k$ Trick for Sampling
               Sequences Without Replacement},
  author    = {Kool, Wouter and van Hoof, Herke and Welling, Max},
  booktitle = {International Conference on Machine Learning (ICML)},
  year      = {2019}
}

@article{spring2017lsh,
  title   = {A New Unbiased and Efficient Class of {LSH}-Based Samplers and Estimators for Partition
             Function Computation in Log-Linear Models},
  author  = {Spring, Ryan and Shrivastava, Anshumali},
  journal = {arXiv preprint arXiv:1703.05160},
  year    = {2017}
}

@inproceedings{luo2018ace,
  title     = {Arrays of (Locality-Sensitive) Count Estimators ({ACE}): Anomaly Detection on the Edge},
  author    = {Luo, Chen and Shrivastava, Anshumali},
  booktitle = {The Web Conference (WWW)},
  pages     = {1439--1448},
  year      = {2018}
}

@inproceedings{gupta2021topk,
  title     = {Memory-Efficient Transformers via Top-$k$ Attention},
  author    = {Gupta, Ankit and Dar, Guy and Goodman, Shaya and Ciprut, David and Berant, Jonathan},
  booktitle = {Workshop on Simple and Efficient Natural Language Processing (SustaiNLP)},
  year      = {2021}
}

@article{mussmann2026sum,
  title   = {Sum Estimation via Vector Similarity Search},
  author  = {Mussmann, Stephen and Raje, Mehul Smriti and Tumkur, Kavya and Messoussi, Oumayma and Hachem, Cyprien and Jacob, Seby},
  journal = {arXiv preprint arXiv:2601.11765},
  year    = {2026},
  doi     = {10.48550/arXiv.2601.11765},
  url     = {https://arxiv.org/abs/2601.11765}
}

@inproceedings{hsieh2024ruler,
  title     = {{RULER}: What's the Real Context Size of Your Long-Context Language Models?},
  author    = {Hsieh, Cheng-Ping and Sun, Simeng and Kriman, Samuel and Acharya, Shantanu and Rekesh, Dima and Jia, Fei and Zhang, Yang and Ginsburg, Boris},
  booktitle = {First Conference on Language Modeling (COLM)},
  year      = {2024}
}
